\newcommand{\alj}{\alpha_n}
\newcommand{\aljj}{\alpha_{n+1}}
\newcommand{\kj}{k_n}
\newcommand{\glfj}{\mathcal{E}^{\mathrm{GL}}_{n}}
\newcommand{\tj}{\bar{t}_{n,\eps}}
\newcommand{\tjj}{\bar{t}_{n+1,\eps}}
\newcommand{\gjj}{G_{n,n+1}}
\newcommand{\ijj}{I_{n,n+1}}
\newcommand{\ikk}{I_{k,k'}}
\newcommand{\tk}{\bar{t}_{k,\eps}}
\newcommand{\tkk}{\bar{t}_{k',\eps}}
\newcommand{\bteps}{\bar{t}_{\eps}}
\numberwithin{equation}{section}
\newcommand{\bdm}{\begin{displaymath}}
\newcommand{\edm}{\end{displaymath}}
\newcommand{\bdn}{\begin{eqnarray}}
\newcommand{\edn}{\end{eqnarray}}
\newcommand{\bay}{\begin{array}{c}}
\newcommand{\eay}{\end{array}}
\newcommand{\ben}{\begin{enumerate}}
\newcommand{\een}{\end{enumerate}}
\newcommand{\beq}{\begin{equation}}
\newcommand{\eeq}{\end{equation}}
\newcommand{\beqn}{\begin{eqnarray}}
\newcommand{\eeqn}{\end{eqnarray}}
\newcommand{\bml}[1]{\begin{multline} #1 \end{multline}}
\newcommand{\bmln}[1]{\begin{multline*} #1 \end{multline*}}
\newcommand{\lf}{\left}
\newcommand{\ri}{\right}
\newcommand{\rv}{\mathbf{r}}
\newcommand{\aae}{a_{\eps}}
\newcommand{\beps}{b_{k}}
\newcommand{\deps}{\delta_{\eps}}
\newcommand{\de}{d_{\eps}}
\newcommand{\diff}{\mathrm{d}}
\newcommand{\eps}{\varepsilon}
\newcommand{\dist}{\mathrm{dist}}
\newcommand{\ie}{I_{\eps}}
\newcommand{\ieps}{I_{\eps}}
\newcommand{\iepst}{\tilde{I}_{\eps}}
\newcommand{\game}{\gamma_{\eps}}
\newcommand{\sigme}{\sigma_{\eps}}
\newcommand{\varre}{\varrho_{\eps}}
\newcommand{\gav}{\bm{\gamma}}
\newcommand{\nuv}{\bm{\nu}}
\newcommand{\ba}{\mathcal{B}}
\newcommand{\teps}{t_{\eps}}
\newcommand{\gle}{E^{\mathrm{GL}}}
\newcommand{\glm}{\Psi^{\mathrm{GL}}}
\newcommand{\gldom}{\mathscr{D}^{\mathrm{GL}}}
\newcommand{\aav}{\mathbf{A}}
\newcommand{\aavm}{\mathbf{A}^{\mathrm{GL}}}
\newcommand{\hex}{b}
\newcommand{\theo}{\Theta_0}
\newcommand{\glfk}{\mathcal{G}_{\kappa,\sigma}^{\mathrm{GL}}}
\newcommand{\glfe}{\mathcal{G}_{\eps}^{\mathrm{GL}}}
\newcommand{\glee}{E_{\eps}^{\mathrm{GL}}}
\newcommand{\annf}{\mathcal{G}_{\ann}}
\newcommand{\anne}{G_{\ann}}
\newcommand{\curv}{k(s)}
\newcommand{\eones}{E^{\mathrm{1D}}_{\star}}
\newcommand{\fal}{\fkal}
\newcommand{\fred}{\mathcal{F}_{\rm red}}
\newcommand{\pot}{V_{k,\alpha}}
\newcommand{\pots}{W_{\alpha}}
\newcommand{\curl}{\mbox{curl}}
\newcommand{\ann}{\mathcal{A}_{\eps}}
\newcommand{\annt}{\tilde{\mathcal{A}}_{\eps}}
\newcommand{\annd}{\mathcal{A}_{\rm bl}}
\newcommand{\spac}{\ell_{\eps}}
\newcommand{\cell}{\mathcal{C}}
\newcommand{\cellj}{\mathcal{C}_n}
\newcommand{\rest}{\mathcal{R}_n}
\newcommand{\domj}{\mathcal{D}_n}
\newcommand{\half}{\mbox{$\frac{1}{2}$}}
\newcommand{\disp}{\displaystyle}
\newcommand{\tx}{\textstyle}
\newcommand{\neps}{N_{\eps}}
\newcommand{\Z}{\mathbb{Z}}
\newcommand{\R}{\mathbb{R}}
\newcommand{\N}{\mathbb{N}}
\newcommand{\C}{\mathbb{C}}
\newcommand{\E}{\mathcal{E}}
\newcommand{\OO}{\mathcal{O}}
\newcommand{\al}{\alpha}
\newcommand{\ep}{\varepsilon}
\newcommand{\Om}{\Omega}
\newcommand{\dd}{\partial}
\newcommand{\supp}{\mathrm{supp}}
\newcommand{\Hcc}{H_{\mathrm{c}2}}
\newcommand{\logi}{|\log \eps| ^{\infty}}
\newcommand{\fkstar}{f_{k}}
\newcommand{\fkstari}{f_{k_n}}
\newcommand{\fkstariplus}{f_{k_{n+1}}}
\newtheorem{teo}{Theorem}[section]
\newtheorem{lem}{Lemma}[section]
\newtheorem{pro}{Proposition}[section]
\newtheorem{cor}{Corollary}[section]
\newcounter{remark}[section]
\newenvironment{rem}{\refstepcounter{remark} \vspace{0,1cm} \noindent \textit{Remark \thesection.\theremark}\,}{\hfill \qed \vspace{0,2cm}}
\newcommand{\annbk}{\bar{I}_{k,\eps}}
\newcommand{\btik}{\bar{t}_{k,\eps}}
\newcommand{\fkal}{f_{k,\alpha}}
\newcommand{\fk}{f_{k}}
\newcommand{\fO}{f_0}
\newcommand{\fone}{\E^{\mathrm{1D}}}
\newcommand{\fonekal}{\E ^{\rm 1D}_{k,\alpha}}
\newcommand{\fonek}{\E ^{\rm 1D}_{k}}
\newcommand{\eone}{E^{\mathrm{1D}}}
\newcommand{\eonekal}{E ^{\rm 1D}_{k,\alpha}}
\newcommand{\eonek}{E ^{\rm 1D}_{\star} (k)}
\newcommand{\eoneo}{E ^{\rm 1D}_{0}}
\newcommand{\alk}{\alpha(k)}
\newcommand{\alO}{\alpha_0}
\newcommand{\Fk}{F_k}
\newcommand{\Kk}{K_k}
\newcommand{\potkal}{V_{k,\alpha}}
\newcommand{\potk}{V_{k}}
\newcommand{\Sloc}{S_{\rm loc}}
\newcommand{\Sglob}{S_{\rm glo}}
\newcommand{\psit}{\psi_{\rm trial}}
\newcommand{\Gref}{g_{\rm ref}}
\newcommand{\Kti}{\tilde{K}_{n}}
\begin{document}

\markboth{\scriptsize{\textsc{Correggi, Rougerie} -- Surface Superconductivity}}{\scriptsize{\textsc{Correggi, Rougerie} -- Surface Superconductivity}}

\title{Boundary Behavior of the Ginzburg-Landau Order Parameter in the Surface Superconductivity Regime}

\author{M. Correggi${}^{a}$, N. Rougerie${}^{b}$
	\\
	\normalsize\it ${}^{a}$ Dipartimento di Matematica e Fisica, Universit\`{a} degli Studi Roma Tre,	\\
	\normalsize\it L.go San Leonardo Murialdo, 1, 00146, Rome, Italy.	\\
	\normalsize\it ${}^{b}$ Universit\'e de Grenoble 1 \& CNRS, LPMMC \\ 
	\normalsize\it Maison des Magist\`{e}res CNRS, BP166, 38042 Grenoble Cedex, France.}
	
\date{January 11th, 2015}

\maketitle

\begin{abstract} 
We study the 2D Ginzburg-Landau theory for a type-II superconductor in an applied magnetic field varying between the second and third critical value. In this regime the order parameter minimizing the GL energy is concentrated along the boundary of the sample and is well approximated to leading order (in $L ^2$ norm) by a simplified 1D profile in the direction perpendicular to the boundary. Motivated by a conjecture of Xing-Bin Pan, we address the question of whether this approximation can hold uniformly in the boundary region. We prove that this is indeed the case as a corollary of a refined, second order energy expansion including contributions due to the curvature of the sample. Local variations of the GL order parameter are controlled by the second order term of this energy expansion, which allows us to prove the desired uniformity of the surface superconductivity layer. 
\end{abstract}

\tableofcontents

\section{Introduction}

The Ginzburg-Landau (GL) theory of superconductivity, originating in~\cite{GL}, provides a phenomenological, macroscopic, description of the response of a superconductor to an applied magnetic field. Several years after it was introduced, it turned out that it could be derived from the microscopic BCS theory~\cite{BCS,Gor} and should thus be seen as a mean-field/semiclassical approximation of many-body quantum mechanics. A mathematically rigorous derivation starting from BCS theory has been provided recently~\cite{FHSS}. 

Within GL theory, the state of 	a superconductor is described by an order parameter $\Psi:~\R ^2\to\C$ and an induced magnetic vector potential $\kappa \sigma \aav:\R ^2 \to \R ^2 $ generating an induced magnetic field 
$$h=\kappa \sigma  \: \curl \, \aav.$$
The ground state of the theory is found by minimizing the energy functional\footnote{Here we use the units of~\cite{FH-book}, other choices are possible, see, e.g., \cite{SS2}.}
\beq\label{eq:gl func}
	\glfk[\Psi,\aav] = \int_{\Om} \diff \rv \: \lf\{ \lf| \lf( \nabla + i \kappa\sigma  \aav \ri) \Psi \ri|^2 - \kappa^2 |\Psi|^2 + \half \kappa^2 |\Psi|^4 + \lf(\kappa \sigma \ri)^2 \lf| \curl \aav - 1 \ri|^2 \ri\},
\eeq
where $ \kappa >0  $ is a physical parameter (penetration depth) characteristic of the material, and $\kappa \sigma  $ measures the intensity of the external magnetic field, that we assume to be constant throughout the sample. We consider a model for an infinitely long cylinder of cross-section $\Om \subset \R ^2$, a compact simply connected set with regular boundary.

Note the invariance of the functional under the gauge transformation 
\begin{equation}\label{eq:gauge inv}
\Psi \to \Psi e^{-i \kappa \sigma \varphi}, \qquad \aav \to \aav + \nabla \varphi, 
\end{equation}
which implies that the only physically relevant quantities are the gauge invariant ones such as the induced magnetic field $h$ and the density $|\Psi| ^2$. The latter gives the local relative density of electrons bound in Cooper pairs. It is well-known that a minimizing $\Psi$ must satisfy $|\Psi| ^2 \leq 1$. A~value $|\Psi| = 1$ (respectively, $|\Psi| = 0$) corresponds to the superconducting (respectively, normal) phase where all (respectively, none) of the electrons form Cooper pairs. The perfectly superconducting state with $|\Psi| = 1$ everywhere is an approximate ground state of the functional for small applied field and the normal state where $\Psi$ vanishes identically is the ground state for large magnetic field. In between these two extremes, different mixed phases can occur, with normal and superconducting regions varying in proportion and organization.

A vast mathematical literature has been devoted to the study of these mixed phases in type-II superconductors (characterized by $\kappa > 1/\sqrt{2}$), in particular in the limit $\kappa \to \infty$ (extreme type-II). Reviews and extensive lists of references may be found in~\cite{FH-book,SS2,Sig}. Two main phenomena attracted much attention:
\begin{itemize}
\item The formation of hexagonal vortex lattices when the applied magnetic field varies between the first and second critical field, first predicted by Abrikosov~\cite{Abr}, and later experimentally observed (see, e.g., \cite{Hetal}). In this phase, vortices (zeros of the order parameter with quantized phase circulation) sit in small normal regions included in the superconducting phase and form regular patterns.  
\item The occurrence of a surface superconductivity regime when the applied magnetic fields varies between the second and third critical fields. In this case, superconductivity is completely destroyed in the bulk of the sample and survives only at the boundary, as predicted in~\cite{SJdG}. We refer to~\cite{NSG} for experimental observations. 
\end{itemize}
We refer to~\cite{CR2} for a more thorough discussion of the context. We shall be concerned with the surface superconductivity regime, which in the above units translates into the assumption 
\beq
	\label{eq:external field}
	\sigma = b \kappa
\eeq
for some fixed parameter $b$ satisfying the conditions
\beq
	\label{eq:b condition}
	1 < b < \theo^{-1}
\eeq
where $\theo$ is a spectral parameter (minimal ground state energy of the shifted harmonic oscillator on the half-line, see~\cite[Chapter 3]{FH-book}):
\begin{equation}\label{eq:theo}
\theo := \inf_{\alpha \in \R} \inf \left\{ \int_{\R ^+} \diff t \left( |\dd_t u| ^2 + (t+\alpha) ^2 |u| ^2 \right), \: \left\Vert u \right\Vert_{L^2 (\R ^+)} = 1 \right\}.
\end{equation}
From now on we introduce more convenient units to deal with the surface superconductivity phenomenon: we define the small parameter 
\beq
	\label{eq:eps}
	\eps = \frac{1}{\sqrt{\sigma \kappa}} = \frac{1}{b^{1/2}\kappa} \ll 1 
\eeq
and study the asymptotics $ \eps \to 0 $ of the minimization of the functional~\eqref{eq:gl func}, which in the new units reads
\beq\label{eq:GL func eps}
	\glfe[\Psi,\aav] = \int_{\Om} \diff \rv \: \bigg\{ \bigg| \bigg( \nabla + i \frac{\aav}{\eps^2} \bigg) \Psi \bigg|^2 - \frac{1}{2 \hex \eps^2} \lf( 2|\Psi|^2 - |\Psi|^4 \ri) + \frac{\hex}{\eps^4} \lf| \curl \aav - 1 \ri|^2 \bigg\}.
\eeq
We shall denote
\beq
	\glee : = \min_{(\Psi, \aav) \in \gldom} \glfe[\Psi,\aav],
\eeq
with
\beq
	\gldom : = \lf\{ (\Psi,\aav) \in H^1(\Om;\C) \times H^1(\Om;\R^2) \ri\},
\eeq
and denote by $ (\glm,\aavm) $ a minimizing pair (known to exist by standard methods~\cite{FH-book,SS2}). 

\medskip

The salient features of the surface superconductivity phase are as follows:
\begin{itemize}
\item The GL order parameter is concentrated in a thin boundary layer of thickness $\sim \eps = (\kappa \sigma) ^{-1/2}$. It decays exponentially to zero as a function of the distance from the boundary.
\item The applied magnetic field is very close to the induced magnetic field, $\curl \, \aav \approx 1$.
\item Up to an appropriate choice of gauge and a mapping to boundary coordinates, the ground state of the theory is essentially governed by the minimization of a 1D energy functional in the direction perpendicular to the boundary. 
\end{itemize}
A review of rigorous statements corresponding to these physical facts may be found in~\cite{FH-book}. One of their consequences is the energy asymptotics 
\beq
	\label{eq:FH energy asympt 2}
	\glee = \frac{|\dd \Om| \eoneo}{\eps} + \OO(1),
\eeq
where $|\dd\Om|$ is the length of the boundary of $\Om$, and $\eoneo$ is obtained by minimizing the functional
\begin{equation}\label{eq:intro 1D func}
\fone_{0,\alpha}[f] : = \int_0^{+\infty} \diff t \lf\{ \lf| \partial_t f \ri|^2 + (t + \alpha )^2 f^2 - \frac{1}{2b} \lf(2 f^2 - f^4 \ri) \ri\},
\end{equation}
both with respect to the function $f$ and the real number $\alpha$. We proved recently~\cite{CR2} that~\eqref{eq:FH energy asympt 2} holds in the full surface superconductivity regime, i.e. for $1<b<\theo ^{-1}$. This followed a series of partial results due to several authors~\cite{Alm,AH,FH1,FH2,FHP,LP,Pan}, summarized in~\cite[Theorem 14.1.1]{FH-book}. Some of these also concern the limiting regime $b \nearrow \theo ^{-1}$. The other limiting case $b\searrow 1$ where the transition from boundary to bulk behavior occurs is studied in~\cite{FK,Kac}, whereas results in the regime $b\nearrow 1$ may be found in~\cite{AS,Alm2,SS1}.

The rationale behind~\eqref{eq:FH energy asympt 2} is that, up to a suitable choice of gauge, any minimizing order parameter $\glm$ for~\eqref{eq:gl func} has the structure 
\begin{equation}\label{eq:GLm structure formal}
\glm(\rv) \approx \fO \left(\tx \frac{\tau}{\eps} \right)  \exp \left( - i \alO \tx \frac{s}{\eps}\right) \exp \lf\{ i \phi_{\eps}(s,t) \ri\}
\end{equation}
where $(\fO,\alO)$ is a minimizing pair for~\eqref{eq:intro 1D func}, $(s,\tau)= $ (tangent coordinate, normal coordinate) are boundary coordinates defined in a tubular neighborhood of $\dd \Om$ with $ \tau = \dist(\rv,\partial \Omega) $ for any point $ \rv $ there and $ \phi_{\eps} $ is a gauge phase factor (see \eqref{eq: gauge phase}), which plays a role in the change to boundary coordinates. Results in the direction of~\eqref{eq:GLm structure formal} may be found in the {following references:}
\begin{itemize}
\item {\cite{Pan} contains a result of uniform distribution of the energy density at the domain's boundary for any $ 1 \leq b < \theo^{-1} $};
\item {\cite{FH1} gives fine energy estimates compatible with~\eqref{eq:GLm structure formal} when $ b \nearrow \theo^{-1} $};
\item {\cite{AH} and then~\cite{FHP} prove that~\eqref{eq:GLm structure formal} holds at the level of the density, in the $L ^2$ sense, for $1.25\leq b < \theo ^{-1}$;}
\item {\cite{FK} and then~\cite{Kac} investigate the concentration of the energy density when $b$ is close to $1$;}
\item {\cite{FKP} contains results about the energy concentration phenomenon in the 3D case.}
\end{itemize}
In~\cite[Theorem~2.1]{CR2} we proved that 
\begin{equation}\label{eq:recall density generic}
\left\Vert |\glm| ^2 -  \fO ^2 \left(\tx\frac{\tau}{\eps} \right)  \right\Vert_{L ^2 (\Om)} \leq C \eps \ll \left\Vert \fO ^2 \left(\tx\frac{\tau}{\eps}\right)  \right\Vert_{L ^2 (\Om)} 
\end{equation}
for any $1<b<\theo ^{-1}$ in the limit $\eps \to 0$. A very natural question is whether the above estimate may be improved to a uniform control (in $L ^{\infty}$ norm) of the local discrepancy between the modulus of the true GL minimizer and the simplified normal profile $\fO \left(\tx\frac{\tau}{\eps} \right)$. Indeed,~\eqref{eq:recall density generic} is still compatible with the vanishing of $\glm$ in small regions, e.g., vortices, inside of the boundary layer. Proving that such local deviations from the normal profile do not occur would explain the observed uniformity of the surface superconducting layer (see again~\cite{NSG} for experimental pictures). Interest in this problem (stated as Open Problem number 4 in the list in \cite[Page 267]{FH-book}) originates from a conjecture of X.B. Pan~\cite[Conjecture 1]{Pan} and an affirmative solution has been provided in~\cite{CR2} for the particular case of a disc sample. The purpose of this paper is to extend the result to general  samples with regular 
boundary (the case with corners is known to require a different analysis~\cite[Chapter 15]{FH-book}). 

Local variations (on a scale $\OO (\eps)$) in the tangential variable are compatible with the energy estimate~\eqref{eq:FH energy asympt 2}, and thus the uniform estimate obtained for disc samples in~\cite{CR2} is based on an expansion of the energy to the next order:  
\beq
\label{eq:recall energy disc}
\glee = \frac{2\pi \eonek}{\eps} + \OO(\eps |\log\eps|),
\eeq
where $\eonek$ is the minimum (with respect to both the real number $\alpha$ and the function $f$) of the $\eps$-dependent functional
\begin{equation}
\label{eq:intro 1D func disc}
\fone_{k,\alpha}[f] : = \int_0^{c_0|\log\eps|} \diff t \: (1-\eps k t )\lf\{ \lf| \partial_t f \ri|^2 + \frac{(t + \alpha - \frac12 \eps k t ^2 )^2}{(1-\eps k t ) ^2} f^2 - \frac{1}{2b} \lf(2 f^2 - f^4 \ri) \ri\},
\end{equation}
where the constant $ c_0 $  has to be chosen large enough and $k=R ^{-1}$ is the curvature of the disc under consideration, whose radius we denote by $R$. Of course, \eqref{eq:intro 1D func} is simply the above functional where one sets $k=0$, $\eps=0$, which amounts to neglect the curvature of the boundary. When the curvature is constant,~\eqref{eq:recall energy disc} in fact follows from a next order expansion of the GL order parameter beyond~\eqref{eq:GLm structure formal}:
\begin{equation}\label{eq:GLm structure formal disc}
\glm(\rv) \approx \fk \left(\tx \frac{\tau}{\eps} \right)  \exp \left( - i \alk \tx \frac{s}{\eps}\right) \exp \lf\{ i \phi_{\eps}(s,t) \ri\}
\end{equation}
where $(\alk,\fk)$ is a minimizing pair for~\eqref{eq:intro 1D func disc}. Note that for any fixed $k$
\beq
	\label{eq:point est 0 profile}
	 \fk = \fO (1+\OO(\eps)),\qquad \alk = \alO (1+\OO(\eps)),
\eeq
so that~\eqref{eq:GLm structure formal disc} is a slight refinement of~\eqref{eq:GLm structure formal} but the $\OO(\eps)$ correction corresponds to a contribution of order $1$ beyond~\eqref{eq:FH energy asympt 2} in~\eqref{eq:recall energy disc}, which turns out to be the order that controls local density variations. 

As suggested by the previous results in the disc case, the corrections to the energy asymptotics~\eqref{eq:FH energy asympt 2} must be curvature-dependent. The case of a general sample where the curvature of the boundary is not constant is then obviously harder to treat than the case of a disc, where one obtains~\eqref{eq:recall energy disc} by a simple variant of the proof of~\eqref{eq:FH energy asympt 2}, as explained in our previous paper~\cite{CR2}. 

In fact, we shall obtain below the desired uniformity result for the order parameter in general domains as a corollary of the energy expansion ($\gamma$ is a fixed constant)
\begin{equation}\label{eq:intro energy GL}
\boxed{\glee = \frac{1}{\eps} \int_0^{|\partial \Omega|} \diff s \: \eone_\star \left(k(s)\right)  + \OO (\eps |\log \eps| ^\gamma) }
\end{equation}
where the integral runs over the boundary of the sample, $k(s)$ being the curvature of the boundary as a function of the tangential coordinate $s$. Just as the particular case~\eqref{eq:recall energy disc},~\eqref{eq:intro energy GL} contains the leading order~\eqref{eq:FH energy asympt 2}, but $\OO(1)$ corrections are also evaluated precisely. As suggested by the energy formula, the GL order parameter has in fact small but fast variations in the tangential variable which contribute to the subleading order of the energy. More precisely, one should think of the order parameter as having the approximate form 
\begin{equation}\label{eq:intro GLm formal refined}
\boxed{\glm(\rv) = \glm (s,\tau) \approx f_{k(s)} \left(\tx \frac{\tau}{\eps} \right)  \exp \left( - i \alpha (k(s)) \tx \frac{s}{\eps}\right) \exp\lf\{i \phi_{\eps}(s,t) \ri\} }
\end{equation}
with $f_{k(s)},\alpha(k(s))$ a minimizing pair for the energy functional~\eqref{eq:intro 1D func disc} at curvature $k = k(s)$. The main difficulty we encounter in the present paper is to precisely capture the subtle curvature dependent variations encoded in~\eqref{eq:intro GLm formal refined}. What our new result (we give a rigorous statement below)~\eqref{eq:intro GLm formal refined} shows is that curvature-dependent deviations to~\eqref{eq:GLm structure formal} do exist but are of limited amplitude and can be completely understood via the minimization of the family of 1D functionals~\eqref{eq:intro 1D func disc}.  A crucial input of our analysis is therefore a detailed inspection of the $k$-dependence of the ground state of~\eqref{eq:intro 1D func disc}. 

We can deduce from~\eqref{eq:intro energy GL} a uniform density estimate settling the general case of~\cite[Conjecture 1]{Pan} and~\cite[Open Problem 4, page 267]{FH-book}. We believe that the energy estimate~\eqref{eq:intro energy GL} is of independent interest since it helps in clarifying the role of domain curvature in surface superconductivity physics. It was previously known (see \cite[Chapters 8 and 13]{FH-book} and references therein) that corrections to the value of the third critical field depend on the domain's curvature, but applications of these results are limited to the regime where $b\to \theo ^{-1}$ when $\eps \to 0$. The present paper seems to contain the first results indicating the role of the curvature in the regime~$1<b<\theo ^{-1}$. This role may seem rather limited since it only concerns the second order in the energy asymptotics but it is in fact crucial in controlling local variations of the order parameter and allowing to prove a strong form of  uniformity for the surface 
superconductivity layer.

\medskip

Our main results are rigorously stated and further discussed in Section~\ref{sec:main results}, their proofs occupy the rest of the paper. Some material from~\cite{CR2} is recalled in Appendix~\ref{sec:app} for convenience.

\medskip

\noindent\textbf{Notation.} In the whole paper, $C$ denotes a generic fixed positive constant independent of $\eps$ whose value changes from formula to formula. A $\OO (\delta)$ is always meant to be a quantity whose absolute value is bounded by $\delta = \delta (\eps)$ in the limit $\eps \to 0$. We use $\OO (\eps ^{\infty})$ to denote a quantity (like $\exp(- \eps ^{-1})$) going to $0$ faster than any power  of $\eps$ and $\logi$ to denote $|\log \eps|^a$ where $a>0$ is some unspecified, fixed but possibly large constant. Such quantities will always appear multiplied by a power of $\eps$, e.g., $\eps \logi$ which is a $\OO (\eps ^{1-c})$ for any $0<c<1$, and hence we usually do not specify the precise power $a$.

\medskip

\noindent\textbf{Acknowledgments.} M.C. acknowledges the support of MIUR through the FIR grant 2013 ``Condensed Matter in Mathematical Physics (Cond-Math)'' (code RBFR13WAET). N.R. acknowledges the support of the ANR project Mathostaq (ANR-13-JS01-0005-01). We also acknowledge the hospitality of the \emph{Institut Henri Poincar\'e}, Paris. {We are indebted to one of the anonymous referees for the content of Remarks~ 2.\ref{rem:b1} and 2.\ref{rem:curvature}.}

\section{Main Results}\label{sec:main results}

\subsection{Statements}\label{sec:statements}

We first state the refined energy and density estimates that reveal the contributions of the domain's boundary. As suggested by~\eqref{eq:intro GLm formal refined}, we now introduce a reference profile that includes these variations. A piecewise constant function in the tangential direction is sufficient for our purpose and we thus first introduce a decomposition of the superconducting boundary layer that will be used in all the paper. The thickness of this layer in the normal direction should roughly be of order $\eps$, but to fully capture the phenomenon at hand we need to consider a layer of size $c_0 \eps |\log \eps|$ where $c_0$ is a fixed, large enough constant. By a passage to boundary coordinates and dilation of the normal variable on scale~$\eps$ (see~\cite[Appendix F]{FH-book} or Section~\ref{sec:up bound} below), the surface superconducting layer 
\beq
	\label{eq:intro ann}
	 \annt : = \lf\{ \rv \in \Omega \: | \: \tau \leq c_0 \eps |\log\eps| \ri\},
\eeq
where
\beq
	\tau : = \dist(\rv, \partial \Omega),
\eeq
can be mapped to 
\begin{equation}\label{eq:intro def ann rescale}
\ann:= \left\{ (s,t) \in \left[0, |\partial \Omega| \right] \times \left[0,c_0 |\log\eps|\right] \right\}.
\end{equation}
We split this domain into $ \neps = \OO(\eps ^{-1})$ rectangular cells $ \{ \cellj \}_{n=1, \ldots, \neps}$ of constant side length $ \spac \propto \eps$ in the $s$ direction. We denote $s_n $, $s_{n+1} = s_n + \spac $ the $s$ coordinates of the boundaries of the cell $ \cellj $:
$$ \cellj = [s_n,s_{n+1}] \times [0,c_0 |\log \eps|]$$
and we may clearly choose 
$$ \spac = \eps |\dd \Om|\lf(1 + \OO(\eps)\ri)$$
for definiteness. We will approximate the curvature $k(s)$ by its mean value $k_n$ in each cell:
$$ k_n := \spac^{-1} \int_{s_n} ^{s_{n+1}} \diff s \, k(s).$$
We also denote 
$$  f_n := f_{k_n}, \qquad \alpha_n := \alpha(k_n)$$
respectively the optimal profile and phase associated to $k_n$, obtained by minimizing~\eqref{eq:intro 1D func disc} first with respect to\footnote{We are free to impose $f_n \geq 0$, which we always do in the sequel.} $f$ and then to $\alpha$. 

The reference profile is then the piecewise continuous function
\begin{equation}\label{eq:ref profile}
\Gref (s,t) := 	f_n (t), 	\qquad	\mbox{for } s\in [s_n,s_{n+1}] \mbox{ and } (s,t) \in \ann,
\end{equation}
 that can be extended to the whole domain $ \Omega $ by setting it equal to $ 0 $ for $ \dist(\rv, \partial \Omega) \geq c_0 \eps |\log\eps| $. We compare the density of the full GL order parameter to $\Gref $ in the next theorem. Note that because of the gauge invariance of the energy functional, the phase of the order parameter is not an observable quantity, so the next statement is only about the density $|\glm| ^2$. 

	\begin{teo}[\textbf{Refined energy and density asymptotics}]\label{theo:energy}\mbox{}\\
		Let $\Om\subset \R ^2$ be any smooth, bounded and simply connected domain. For any fixed $1<b<\theo ^{-1}$, in the limit $ \eps \to 0$, it holds
		\begin{equation}\label{eq:energy GL}
			\glee = \frac{1}{\eps} \int_0^{|\partial \Omega|} \diff s \: \eone_\star \left(k(s)\right)  + \OO (\eps \logi). 
		\end{equation}
		and
		\begin{equation}\label{eq:main density}
			\left\Vert |\glm| ^2 -  \Gref ^2 \left(s,\eps ^{-1}  \tau \right)  \right\Vert_{L ^2 (\Om)} = \OO(\eps ^{3/2} \logi) \ll \left\Vert \Gref ^2 \left(s,\eps ^{-1} t \right)  \right\Vert_{L ^2 (\Om)}. 
		\end{equation}
	\end{teo}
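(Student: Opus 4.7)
The plan is to prove~\eqref{eq:energy GL} via matching upper and lower bounds on $\glee$ to the stated accuracy $\OO(\eps \logi)$, and then deduce the density bound~\eqref{eq:main density} by a standard coercivity argument fueled by the fact that $\Gref$ is essentially a minimizer of the effective (cell-wise) 1D problems. The crucial input throughout is a systematic cell-by-cell comparison with the 1D functionals $\fone_{k_n,\alpha_n}$ at the locally averaged curvature $k_n$, and a fine regularity/continuity analysis of the map $k\mapsto(\fk,\alk,\eone_\star(k))$ that will let us replace Riemann sums by integrals with quantitative error control.

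For the upper bound I would build a trial pair $(\trial,\atrial)$ following the heuristic~\eqref{eq:intro GLm formal refined}. First, pick $\atrial$ as a vector potential satisfying $\curl \atrial=1$ (so that the magnetic term of~\eqref{eq:GL func eps} vanishes identically) and write it in boundary coordinates via the standard tubular-neighborhood change of variables, producing the gauge phase $\phi_\eps(s,t)$ that appears in~\eqref{eq:intro GLm formal refined}. Then, on each cell $\cellj$, set
\[
\trial(s,\tau)=\cut\!\bigl(\tau/(c_0\eps|\log\eps|)\bigr)\, f_n(\tau/\eps)\, \exp\!\bigl(-i\alj s/\eps + i\phi_\eps(s,\tau) + i\theta_n\bigr),
\]
with $\cut$ a smooth cutoff and $\theta_n$ phase constants chosen so as to match phases between adjacent cells (the mismatch $\alj-\aljj=\OO(\eps)$ and cell width $\spac\propto\eps$ ensure the discontinuity is only of size $\OO(\eps)$, whose energetic cost on the boundary of a cell is negligible). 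Plugging into $\glfe$ and using the Jacobian $1-\eps k(s) t$ of the boundary-coordinate change, the energy per cell becomes, up to terms of lower order in the cell, $\spac\eps^{-1}\fonekal[f_n]$ evaluated at $(k_n,\alj)$. Summing over cells yields $\eps^{-1}\sum_n \spac\,\eone_\star(k_n)$, and replacing $k_n$ by $k(s)$ introduces a Riemann-sum error of order $\sum_n \spac^2\,\sup|\partial_k \eone_\star\,k'(s)|=\OO(\eps)$; the remaining cutoff tails and phase-matching losses are controlled by the exponential decay of $f_n$ and by~\eqref{eq:point est 0 profile}-type continuity bounds on the family $\fk$, yielding the $\OO(\eps\logi)$ remainder.

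For the lower bound I would proceed in four steps. First, use a Cauchy--Schwarz/gauge argument (as in~\cite{CR2}) to approximately replace $\aavm$ by a vector potential with $\curl=1$, paying an error of at most $\OO(\eps)$ in the energy (the magnetic term of~\eqref{eq:GL func eps} being large enough to enforce this). Second, use Agmon-type exponential decay for $\glm$ away from $\dd\Om$, already available in the regime $1<b<\theo^{-1}$, to localize the analysis to $\annt$ at the price of an $\OO(\eps^\infty)$ error and then pass to boundary coordinates on $\ann$. Third, perform cell decomposition: restrict to $\cellj$ and write $\glm=f_n\, u_n$ in that cell (where $f_n\geq 0$ is nonzero in the relevant region); expanding $|\nabla\glm|^2$ and integrating by parts against the variational equation for $f_n$ in $\fonekal$ produces the identity
\[
\int_{\cellj}\!\bigl\{|\partial_s\glm|^2+\cdots\bigr\}\,\diff s\,\diff\tau
= \spac\,\eps^{-1}\eonek\bigl(k_n\bigr) + \tx\frac12\!\int_{\cellj}\! f_n^2\bigl(|\nabla u_n|^2 + \text{other positive terms}\bigr)-\text{boundary terms},
\]
where the omitted positive terms encode the cost of tangential/radial variations of $u_n$ away from a constant. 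The last, and main, obstacle is controlling the boundary terms that appear when writing such an energy-splitting identity cell by cell: these come from normal derivatives of $f_n$ on $\dd\cellj\cap\dd\ann$ and from tangential boundary terms at $s=s_n,s_{n+1}$. The former are dealt with using the exponential decay of $f_n$ at large $t$; the latter telescope with the corresponding terms of the neighboring cell up to a mismatch $\OO(\eps)$ per interface, which summed over $\neps=\OO(\eps^{-1})$ cells gives a net $\OO(1)$ error, itself to be absorbed by the Riemann-sum-to-integral replacement $\sum_n \spac\eone_\star(k_n)\to \int\eone_\star(k(s))\diff s$ using the regularity of $k\mapsto\eone_\star(k)$. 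Fourth, drop the nonnegative correction terms (this is what costs us the subleading order) to obtain
\[
\glee\ \geq\ \eps^{-1}\int_0^{|\dd\Om|}\!\eone_\star(k(s))\,\diff s - C\eps\logi.
\]

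Finally, for~\eqref{eq:main density} I would revisit the lower-bound computation but now \emph{keep} the nonnegative remainder in each cell rather than discarding it. Combining with the already-proven upper bound, the sum of these remainders is $\OO(\eps\logi)$. Choosing the decomposition $\glm=f_n\,u_n$ in each cell, the pointwise identity $|\glm|^2=f_n^2|u_n|^2$ and the Euler--Lagrange equation for $f_n$ (which makes $f_n^2$ a critical density for the 1D problem) allow one to convert the remainder into an $L^2$ control of the form
\[
\int_{\cellj} \bigl(|\glm|^2-f_n^2(\tau/\eps)\bigr)^2\,\diff\rv\ \leq\ C\eps\,\bigl(\text{cell contribution to energy remainder}\bigr),
\]
after using the a priori $L^\infty$ bound $|\glm|\leq 1$ to linearize the quartic term of $\glfe$. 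Summing over cells, adding the exponentially small contribution from $\Om\setminus\annt$, and identifying $f_n$ with $\Gref$ on $\cellj$ produces $\||\glm|^2-\Gref^2(s,\tau/\eps)\|_{L^2(\Om)}^2=\OO(\eps^2\logi\cdot\eps)=\OO(\eps^{3}\logi)$, i.e. the desired $\OO(\eps^{3/2}\logi)$ bound. The hardest point of the whole argument, beyond the cell-matching issue already mentioned, is establishing enough continuity and quantitative local convexity of $k\mapsto(\fk,\alk)$ near each $k_n$ to make the Riemann-sum replacement and the coercivity estimate for the density simultaneously sharp to the claimed order.
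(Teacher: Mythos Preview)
Your overall architecture is right and matches the paper's, but the lower bound contains a genuine gap that, as written, would not yield the claimed remainder $\OO(\eps\logi)$.

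The issue is your treatment of the inter-cell boundary terms. After decoupling $\psi = f_n u_n e^{-i\alpha_n s/\eps}$ in $\cellj$, the reduced energy contains the sign-indefinite term $-2\eps\int b_n(t) f_n^2 J_s[u_n]$. Writing $(1-\eps k_n t) b_n f_n^2 = \tfrac{1}{2}\partial_t F_n$ and integrating by parts (Stokes) produces a bulk term governed by the cost function $K_{k_n}(t)=f_n^2(t)+F_n(t)$, whose positivity (Lemma~\ref{lem:K positive}) is what makes the main contribution nonnegative, plus boundary terms of the form
\[
\eps\int_0^{\tj}\diff t\, F_n(t)\bigl[J_t[u_n](s_{n+1},t)-J_t[u_n](s_n,t)\bigr].
\]
You assert these ``telescope up to $\OO(\eps)$ per interface''. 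They do not: $u_n=\psi/f_n$ and $F_n$ both involve $f_n$, which is exponentially small for large $t$, so $J_t[u_n]$ is not a priori bounded and a naive interface-by-interface estimate is not available. Even if it were, an $\OO(\eps)$ error per interface summed over $\neps\propto\eps^{-1}$ cells gives an $\OO(1)$ error in the boundary functional $\annf$, hence an $\OO(\eps^{-1})$ error in $\glee$, which destroys even the leading order. The ``absorption into the Riemann sum'' you invoke is a separate $\OO(\eps^2\logi)$ effect and cannot compensate this.

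What the paper does instead is pair the term at $s=s_{n+1}$ coming from $\cellj$ with the one coming from $\cell_{n+1}$, obtaining (after expressing $u_{n+1}$ in terms of $u_n$) the correction function
\[
I_{n,n+1}(t)=F_n(t)-F_{n+1}(t)\,\frac{f_n^2(t)}{f_{n+1}^2(t)},
\]
and then integrates by parts \emph{back} into $\cellj$ against a tangential cutoff. The resulting bulk terms can be absorbed by a small fraction of the kinetic energy \emph{only} because of the pointwise bound
\[
\sup_{t}\Bigl|\frac{I_{n,n+1}(t)}{f_n^2(t)}\Bigr|+\sup_{t}\Bigl|\frac{\partial_t I_{n,n+1}(t)}{f_n^2(t)}\Bigr|\leq C\eps\logi,
\]
i.e.\ smallness of $I_{n,n+1}$ \emph{relative to} $f_n^2$, including where $f_n$ is exponentially small. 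This is Corollary~\ref{cor:est log cost}, and it rests on a nontrivial estimate of the difference of logarithmic derivatives $\bigl\|f_k'/f_k - f_{k'}'/f_{k'}\bigr\|_{L^\infty}=\OO\bigl((\eps|k-k'|)^{1/2}\logi\bigr)$ (Proposition~\ref{pro:est log der}). None of this is visible in your sketch, and without it the boundary terms cannot be controlled at the required precision.

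Two smaller points. In the upper bound, your trial state has a discontinuous \emph{density} at cell interfaces (not only a phase jump); the paper fixes this by adding a small correction $\chi_n(s,t)$ interpolating $f_{n+1}-f_n$ across the cell, whose energetic cost is $\OO(\eps^3\logi)$ per cell thanks to $\|f_n-f_{n+1}\|_{C^1}=\OO(\eps\logi)$. And in your lower bound summary, dropping nonnegative terms does not ``cost the subleading order''; it is free for a lower bound. The subleading order is obtained precisely because the boundary-term machinery above leaves only an $\OO(\eps^2\logi)$ error in $\annf$. Your density argument via the retained term $\tfrac{1}{2b}\int f_n^4(1-|u_n|^2)^2$ is correct and is exactly what the paper does.
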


\begin{rem}[The energy to subleading order]\label{rem:energy sublead}\mbox{}\\
The most precise result prior to the above is~\cite[Theorem 2.1]{CR2} where the leading order is computed and the remainder is shown to be at most of order $1$. Such a result had been obtained before in~\cite{FHP} for a smaller range of parameters, namely for $1.25\leq b <\theo ^{-1}$, see also~\cite[Chapter~14]{FH-book} and references therein.  The above theorem evaluates precisely the $\OO(1)$ term, which is better appreciated in  light of the following comments:
\begin{enumerate}
\item In the effective 1D functional~\eqref{eq:intro 1D func disc}, the parameter $k$ that corresponds to the local curvature of the sample appears with an $\eps$ prefactor. As a consequence, one may show (see Section~\ref{sec:eff func} below) that for all $s\in [0,|\dd \Om|]$
\beq
	\label{eq:eone asympt}
	\eone_\star \left(k(s)\right) = \eone_\star (0) + \OO(\eps)
\eeq
so that~\eqref{eq:energy GL} contains the previously known results. More generally we prove below that
$$\left|\eone_\star \left(k(s)\right) - \eone_\star \left(k(s')\right)\right| \leq C  \eps |s-s'| $$
so that $\eone_\star \left(k(s)\right)$ has variations of order $\eps$ on the scale of the boundary layer. These contribute to a term of order $1$ that is included in~\eqref{eq:energy GL}. Actually one could investigate the asymptotics \eqref{eq:eone asympt} further, aiming at evaluating explicitly the error $ \OO(\eps) $ and therefore the curvature contribution to the energy. This would in particular be crucial in the analysis described in Remark 2.\ref{rem:curvature} below, but we do not pursue it here for the sake of brevity.

\item Undoing the mapping to boundary coordinates, one should note that $\Gref(s,\eps ^{-1} t)$ has fast variations (at scale $\eps$) in both the $t$ direction and $s$ directions. The latter are of limited amplitude however, which explains that they enter the energy only at subleading order, and why a piecewise constant profile is sufficient to capture the physics.
\item We had previously proved the density estimate~\eqref{eq:recall density generic}, which is less precise than~\eqref{eq:main density}. Note in particular that~\eqref{eq:main density} does not hold at this level of precision if one replaces $\Gref ^2 \left(s,\eps ^{-1} t\right)$ by  the simpler profile $\fO ^2 (\eps^{-1} t)$.
\item Strictly speaking the function $ \Gref $ is defined only in the boundary layer $ \annt  $, so that \eqref{eq:main density} should be interpreted as if $ \Gref $ would vanish outside $ \annt $. However the estimate there is obviously true thanks to the exponential decay of $ \glm $.
\end{enumerate}
\end{rem}

\begin{rem}[Regime $ b \to  1 $]
\label{rem:b1}
\mbox{}	\\
A simple inspection of the proof reveals that some of the crucial estimates still hold true even if $ b \to  1 $, where surface superconductivity is also present (see~\cite{Alm,Pan,FK}). The main reason for assuming $b>1$ is that we rely on some well-known decay estimates for the order parameter (Agmon estimates), which hold only in this case. When $ b \to 1 $ one can indeed find suitable adaptations of those estimates (see, e.g., \cite[Chapter 12]{FH-book}), which however make the analysis much more delicate. In particular the positivity of the cost function (Lemma \ref{lem:K positive} in Section \ref{sec:app cost}) heavily relies on the assumption $ b > 1 $ and, although it is expected to be true even if $ b \to  1 $, its proof requires some non-trivial modifications. Moreover while for $ b \geq 1 $ only surface superconductivity is present and our strategy has good chances to work, on the opposite, when $ b \nearrow 1 $, a bulk term appears in the energy asymptotics \cite{FK} and the problem becomes much more subtle.
\end{rem}

We now turn to the uniform density estimates that follow from the above theorem. Here we can be less precise than before. Indeed, as suggested by the previous discussion, a density deviation of order $\eps$ on a length scale of order $\eps$ only produces a $\OO(\eps ^2)$ error in the energy. Thus, using~\eqref{eq:energy GL} we may only rule out local variations of a smaller order than the tangential variations included in~\eqref{eq:ref profile}, and for this reason we will compare $|\glm|$ in $L^{\infty}$ norm only to the simplified profile $\fO (\eps ^{-1}\tau)$, since by \eqref{eq:point est 0 profile} $ \fO(t) - f_k(t) = \OO(\eps) $. Also, the result may be proved only in a region where the density is relatively large\footnote{Recall that it decays exponentially far from the boundary.}, namely in 
\beq
\label{eq:annd}
\annd : = \lf\{ \rv \in \Om \: : \: \fO \lf(\eps ^{-1} \tau \ri) \geq \game \ri\} \subset \lf\{ \dist(\rv,\dd \Om) \leq \half \eps \sqrt{|\log\game|} \ri\},
\eeq
where $\rm{bl}$ stands for ``boundary layer'' and $ 0 < \game \ll 1 $ is any quantity such that
\beq
\label{eq:game}
\game \gg \eps^{1/6}|\log \eps| ^{a},
\eeq
where $ a > 0 $ is a suitably large constant related\footnote{Assuming that \eqref{eq:energy GL} holds true with an error of order $ \eps |\log\eps|^{\gamma} $, for some given $ \gamma > 0 $, the constant $ a $ can be any number satisfying $ a > \frac{1}{6} (\gamma+3) $.} to the power of $ |\log\eps| $ appearing in \eqref{eq:energy GL}.
The inclusion in \eqref{eq:annd} follows from \eqref{eq:fal point l u b} below and ensures we are really considering a significant boundary layer: recall that the physically relevant region has a thickness roughly of order $\eps|\log\eps|$.

\begin{teo}[\textbf{Uniform density estimates and Pan's conjecture}]\label{theo:Pan}\mbox{}\\
		Under the assumptions of the previous theorem, it holds
		\beq
			\label{eq:Pan plus}
			\lf\| \lf|\glm(\rv)\ri| - \fO \lf(\eps ^{-1} \tau \ri) \ri\|_{L^{\infty}(\annd)} \leq C \game^{-3/2} \eps^{1/4} \logi  \ll 1.
		\eeq
		In particular for any $ \rv  \in \partial \Omega $ we have
		\begin{equation}\label{eq:Pan}
			\lf| \lf| \glm(\rv) \ri| - \fO (0) \ri| \leq  C \eps^{1/4} |\logi |\ll 1,
		\end{equation}
		where $C$ does not depend on $\rv$.
	\end{teo}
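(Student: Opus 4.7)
The theorem is a corollary of the refined $L^2$ density estimate~\eqref{eq:main density} in Theorem~\ref{theo:energy}, upgraded to $L^\infty$ control by interpolation with a uniform Lipschitz bound. As a preparatory step, I would transfer~\eqref{eq:main density} from the piecewise reference $\Gref$ to the smoother profile $\fO(\eps^{-1}\tau)$. By~\eqref{eq:point est 0 profile} one has $\fkstari = \fO(1+\OO(\eps))$ pointwise, hence $|\Gref^2(s,\eps^{-1}\tau) - \fO^2(\eps^{-1}\tau)| \leq C\eps\,\fO^2(\eps^{-1}\tau)$ on $\annt$. Combined with $\|\fO^2(\eps^{-1}\tau)\|_{L^2(\Omega)} = \OO(\eps^{1/2})$ (by change of variables and the exponential decay of $\fO$), this gives $\|\Gref^2 - \fO^2(\eps^{-1}\tau)\|_{L^2(\Omega)} = \OO(\eps^{3/2})$, and therefore, by the triangle inequality applied to~\eqref{eq:main density},
\[
\left\||\glm|^2 - \fO^2(\eps^{-1}\tau)\right\|_{L^2(\Omega)} = \OO(\eps^{3/2}\logi).
\]

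Next I would establish a uniform Lipschitz bound on $h := |\glm|^2 - \fO^2(\eps^{-1}\tau)$ throughout the boundary layer. For the GL order parameter, elliptic regularity applied to the GL equations, combined with the a priori bound $|\glm|\leq 1$, yields the standard pointwise estimate $\left|(\nabla + i\eps^{-2}\aav)\glm\right| \leq C/\eps$; the diamagnetic inequality $|\nabla|\glm|| \leq |(\nabla + i\eps^{-2}\aav)\glm|$ then gives $|\nabla|\glm|^2| \leq C/\eps$. For the reference, $|\nabla \fO^2(\eps^{-1}\tau)| \leq \eps^{-1}|2\fO\,\fO'|(\eps^{-1}\tau) \leq C/\eps$ by smoothness and boundedness of $\fO$ and $\fO'$. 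With $\|h\|_{L^2} = \OO(\eps^{3/2}\logi)$ and $|\nabla h| \leq C/\eps$ in hand, an elementary local integration argument applies: for any $\rv_0 \in \annd$ with $|h(\rv_0)| = \lambda$, the Lipschitz bound forces $|h| \geq \lambda/2$ on (half of) a ball of radius $\sim \eps\lambda$ around $\rv_0$, so $\|h\|_{L^2}^2 \geq c\,\eps^2\lambda^4$, yielding $\lambda \leq C(\eps^{-1}\|h\|_{L^2})^{1/2} = C\eps^{1/4}\logi$. This establishes $\||\glm|^2 - \fO^2(\eps^{-1}\tau)\|_{L^\infty(\annd)} \leq C\eps^{1/4}\logi$.

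To pass from the square to $||\glm|-\fO|$ itself, I would divide by $|\glm|+\fO(\eps^{-1}\tau) \geq \game$, which holds on $\annd$ by the very definition~\eqref{eq:annd}. This yields a pointwise bound of the form $C\,\game^{-c}\eps^{1/4}\logi$; the stated factor $\game^{-3/2}$ presumably absorbs a modest technical loss coming from a quantitatively refined form of the gradient bound (e.g., one scaling with $|\glm|$ itself, needed to cover the a priori possibility that $|\glm|$ is much smaller than $\fO \sim \game$ at some point, which would otherwise spoil a naive $\game^{-1}$ estimate). The pointwise statement~\eqref{eq:Pan} then follows by specialization: for $\rv \in \partial\Omega$, $\tau(\rv) = 0$ and $\fO(\eps^{-1}\tau) = \fO(0)$ is a fixed positive constant, so one may choose $\game = \fO(0)/2$ independent of $\eps$ and conclude $||\glm(\rv)| - \fO(0)| \leq C\eps^{1/4}\logi$.

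The main obstacle lies in carrying out the $L^2$-to-$L^\infty$ interpolation uniformly in the thin domain $\annd$, of thickness $\sim \eps\sqrt{|\log\game|}$ in the normal direction. One must verify that the local ball argument remains valid for $\rv_0$ arbitrarily close to $\partial\Omega$ (where only half-balls are available, causing a constant-factor loss) and close to the inner boundary of $\annd$ (handled by exploiting that $h$ is still Lipschitz and $L^2$-small on a slightly larger set such as $\annt$). The precise choice of exponents in~\eqref{eq:game}, namely $\game \gg \eps^{1/6}\logi$, is then tuned so that $\game^{-3/2}\eps^{1/4}\logi = o(1)$, giving the nontrivial content of~\eqref{eq:Pan plus} and, by the specialization above, of~\eqref{eq:Pan}.
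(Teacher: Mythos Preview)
Your argument is correct and takes a genuinely different route from the paper's. The paper does \emph{not} interpolate the $L^2$ estimate~\eqref{eq:main density} to $L^\infty$; instead it works in the decoupled variables $u_n = \psi/(f_n e^{i\phi_n})$ cell by cell, uses the bound $\sum_n \int_{\cellj} f_n^4 (1-|u_n|^2)^2 = \OO(\eps^2\logi)$ obtained directly from the energy comparison, and combines it with gradient estimates of the form $|\partial_t |u_n|| \leq C f_n^{-1}|\log\eps|^3$ and $|\partial_s |u_n|| \leq C f_n^{-1}\eps^{-1}$ via a Béthuel--Brezis--Hélein style contradiction argument on $|u_n|$. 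The $f_n^{-1}\leq \game^{-1}$ degradation of these gradient bounds shrinks the ``good rectangle'' in both directions and is precisely the source of the paper's exponent $\game^{-3/2}$.

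Your direct approach in physical coordinates avoids this loss: the uniform Lipschitz bound $|\nabla(|\glm|^2 - \fO^2)|\leq C/\eps$ does not degrade with $\game$, and your interpolation cleanly gives $\||\glm|^2-\fO^2\|_{L^\infty}\leq C\eps^{1/4}\logi$, hence $\||\glm|-\fO\|_{L^\infty(\annd)}\leq C\game^{-1}\eps^{1/4}\logi$. So your speculation that the $\game^{-3/2}$ ``presumably absorbs a modest technical loss'' is off the mark: your bound is actually \emph{sharper} than the stated one, and the extra $\game^{-1/2}$ in the paper is an artifact of its decoupled approach, not something your argument is missing. The only points to make precise are routine: the half-ball modification at $\partial\Omega$, and the observation that for $\rv_0\in\annd$ the interpolation ball (radius $\sim\eps\lambda\leq 2\eps$) stays well inside $\annt$ since $\annd\subset\{\tau\leq C\eps|\log\eps|^{1/2}\}$ under~\eqref{eq:game}, so the Lipschitz bound and the $L^2$ estimate are both available there.
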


Estimate~\eqref{eq:Pan} solves the original form of Pan's conjecture~\cite[Conjecture 1]{Pan}. In addition, since $\fO$ is strictly positive, the stronger estimate~\eqref{eq:Pan plus} ensures that $\glm$ does not vanish in the boundary layer~\eqref{eq:annd}. A physical consequence of the theorem is thus that normal inclusions such as vortices in the surface superconductivity phase may not occur. This is very natural in view of the existing knowledge on type-II superconductors but had not been proved previously.  

\medskip

We now return to the question of the phase of the order parameter. Of course, the full phase cannot be estimated because of gauge invariance but gauge invariant quantities linked to the phase can. One such quantity is the winding number (a.k.a. phase circulation or topological degree) of $\glm$ around the boundary $\dd \Om$ defined as
\beq\label{eq:GL degree}
		2 \pi \deg\lf(\Psi, \partial \Om\ri) : = - i \int_{\partial \Om} \diff s \: \frac{|\Psi|}{\Psi} \partial_{s} \lf( \frac{\Psi}{|\Psi|} \ri),
	\eeq
$ \partial_{s} $ standing for the tangential derivative. Theorem~\ref{theo:Pan} ensures that $\deg\lf(\Psi, \partial \ba_R\ri)\in \Z$ is well-defined. Roughly, this quantity measures the number of quantized phase singularities (vortices) that $\glm$ has inside $\Om$. Our estimate is as follows:

	\begin{teo}[{\bf Winding number of $ \glm $ on the boundary}]
		\label{theo:circulation}
		\mbox{}	\\
		Under the previous assumptions, any GL minimizer $ \glm $ satisfies
		\beq\label{eq:GL degree result}
			\deg\lf(\glm, \partial \Omega\ri) = \frac{|\Omega|}{\eps^2} + \frac{|\alO|}{\eps} + \OO(\eps^{-3/4}|\log\eps|^{\infty})
		\eeq
		in the limit $\eps \to 0$.
	\end{teo}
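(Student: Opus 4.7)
The plan is to re-express the winding number as a boundary line integral, split it via the gauge-invariant supercurrent into a magnetic-flux contribution (responsible for the $|\Omega|/\eps^{2}$ term) and a pure supercurrent contribution (giving the $|\alO|/\eps$ term), and evaluate each using the refined information provided by Theorems~\ref{theo:energy}~and~\ref{theo:Pan}. First, \eqref{eq:Pan} guarantees that $|\glm|\geq \fO(0)/2 >0$ everywhere on $\partial\Omega$ for $\eps$ small enough, so that writing $\glm=|\glm|e^{i\varphi}$ locally one has $2\pi\deg(\glm,\partial\Omega)=\oint_{\partial\Omega}\partial_s\varphi\,ds$. Introducing the gauge-invariant supercurrent $\jt := \mathrm{Im}\bigl(\overline{\glm}(\nabla+i\aavm/\eps^2)\glm\bigr)$, the identity $\partial_s\varphi = \jt\cdot\mathbf{e}_s/|\glm|^2 - \aavm\cdot\mathbf{e}_s/\eps^2$ valid on $\partial\Omega$ gives
\begin{equation*}
2\pi\deg(\glm,\partial\Omega) \;=\; \int_{\partial\Omega}\frac{\jt\cdot\mathbf{e}_s}{|\glm|^2}\,ds \;-\; \frac{1}{\eps^2}\oint_{\partial\Omega}\aavm\cdot d\mathbf{l}.
\end{equation*}

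For the magnetic line integral, Stokes' theorem gives $\oint_{\partial\Omega}\aavm\cdot d\mathbf{l}=|\Omega|+\int_\Omega(\curl\aavm-1)d\rv$. In the surface superconductivity regime $\curl\aavm-1$ is localized in the boundary layer of thickness $\OO(\eps\logi)$ and has small amplitude there; combining the second GL equation with elliptic regularity and the energy expansion~\eqref{eq:energy GL} should yield $|\int_\Omega(\curl\aavm-1)d\rv|=\OO(\eps^{5/4}\logi)$, hence a contribution $|\Omega|/\eps^{2}+\OO(\eps^{-3/4}\logi)$. For the supercurrent integral, the refined approximation~\eqref{eq:intro GLm formal refined} implies that in a suitable boundary gauge $\jt\cdot\mathbf{e}_s/|\glm|^2\approx -\alpha(k(s))/\eps$ on $\partial\Omega$: the uniform lower bound~\eqref{eq:Pan plus} safely controls the division by $|\glm|^2$, while the $L^{2}$ density estimate~\eqref{eq:main density} together with elliptic regularity applied cell-by-cell on the decomposition $\{\cellj\}$ upgrades the agreement between $\glm$ and its reference profile $\fkstari(\tau/\eps)e^{-i\alj s/\eps}e^{i\phi_\eps}$ from an $L^{2}$ to a suitable pointwise statement for the phase. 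Using $\alpha(k(s))=\alO+\OO(\eps)$ from~\eqref{eq:point est 0 profile} and integrating over $s\in[0,|\partial\Omega|]$ yields
\begin{equation*}
\int_{\partial\Omega}\frac{\jt\cdot\mathbf{e}_s}{|\glm|^2}\,ds \;=\; -\frac{\alO\,|\partial\Omega|}{\eps} \;+\; \OO(\eps^{-3/4}\logi),
\end{equation*}
which combined with the Stokes contribution and $\alO<0$ reproduces~\eqref{eq:GL degree result}.

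The main difficulty is to reach the error $\OO(\eps^{-3/4}\logi)$, which is a factor $\eps^{1/4}\logi$ smaller than the $|\alO|/\eps$ term and therefore requires genuinely pointwise control of both $\glm$ and $\aavm$ on $\partial\Omega$. The pointwise information on $|\glm|$ is already provided by~\eqref{eq:Pan plus}, and its relative precision $\eps^{1/4}\logi$ essentially dictates the order of the remainder. The more delicate issue is the fine estimate on $\curl\aavm-1$: the bound coming directly from the energy expansion is only in $L^2$, and a careful localization to the boundary layer combined with an elliptic/variational improvement is needed to push $|\int_\Omega(\curl\aavm-1)d\rv|$ below $\OO(\eps^{5/4}\logi)$. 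This part of the proof should closely parallel the disc case treated in~\cite{CR2}, with the additional complication that the minimizing pair $(\fk,\alk)$ of~\eqref{eq:intro 1D func disc} now varies along the boundary and the corresponding dependence must be controlled via the Lipschitz estimates on $k\mapsto(\fk,\alk)$ underlying Theorem~\ref{theo:energy}.
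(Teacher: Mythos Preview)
Your overall decomposition---splitting the degree into a magnetic flux term via Stokes and a supercurrent term---is essentially the same as the paper's, which writes $2\pi\deg(\glm,\partial\Omega)=2\pi\deg(\psi,\partial\Omega)+\eps^{-2}\int_\Omega\curl\aavm-|\partial\Omega|\deps$ after extracting the gauge phase $\phi_\eps$. For the magnetic part the paper actually obtains a much stronger bound than you claim, namely $\int_\Omega(\curl\aavm-1)=\OO(\eps^2|\log\eps|)$, and this follows directly from the known elliptic estimate $\|\curl\aavm-1\|_{C^1(\Omega)}=\OO(\eps)$ combined with Agmon decay of $\nabla(\curl\aavm-1)$ outside the thin boundary layer; neither the second GL equation nor the energy expansion \eqref{eq:energy GL} is needed here.

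The genuine gap is in your treatment of the supercurrent integral. You propose to obtain pointwise control of the phase via ``elliptic regularity applied cell-by-cell'' upgrading the $L^2$ density estimate \eqref{eq:main density}. But \eqref{eq:main density} constrains only $|\glm|^2$, not the phase, and elliptic regularity for the GL equation yields only the a priori bound $|\nabla\glm|\leq C\eps^{-1}$, which is precisely the scale you need to beat. The paper does not proceed this way. Its key input is the \emph{integral} kinetic remainder
\[
\sum_{n}\int_{\cellj}(1-\eps k_n t)\,f_n^2\Big(|\partial_t u_n|^2+\tx\frac{1}{(1-\eps k_n t)^2}|\eps\partial_s u_n|^2\Big)\leq C\eps^2\logi,
\]
a direct byproduct of the lower bound in Lemma~\ref{lem:bound reduc func} combined with the matching upper bound. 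One then rewrites $\sum_n\int_{s_n}^{s_{n+1}}J_s[u_n](s,0)\,\diff s$ as a bulk integral over a thin strip $\{0\leq t\leq|\log\eps|^{-1}\}$ by means of a normal cutoff $\chi(t)$ and an integration by parts, and bounds all resulting terms by Cauchy--Schwarz against the kinetic estimate above; this yields $\sum_n\int J_s[u_n](s,0)\,\diff s=\OO(\logi)$. Only \emph{after} this integral control is established does the pointwise bound $\bigl||u_n|-1\bigr|\leq C\eps^{1/4}\logi$ from Theorem~\ref{theo:Pan} enter, to replace $J_s[u_n]/|u_n|^2$ by $J_s[u_n]$ at the cost of $\OO(\eps^{-3/4}\logi)$ (using the crude $|J_s[u_n]|\leq C\eps^{-1}$ on $\partial\Omega$). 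Your observation that the $\eps^{1/4}$ precision of \eqref{eq:Pan plus} dictates the remainder is correct, but the step you are missing is how to show the residual current $\sum_n\int J_s[u_n](s,0)\,\diff s$ is small in the first place---and that comes from the positive kinetic term retained in the refined energy lower bound, not from elliptic regularity on the density.
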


Note that the remainder term in~\eqref{eq:GL degree result} is much larger than $\eps ^{-1}|\alk -\alO| = \OO (1)$ so that the above result does not allow to estimate corrections due to curvature. We believe that, just as we had to expand the energy to second order to obtain the refined first order results Theorems~\ref{theo:Pan} and~\ref{theo:circulation}, obtaining uniform density estimates and degree estimates at the second order would require to expand the energy to the third order, which goes beyond the scope of the present paper.

We had proved Theorems~\ref{theo:Pan} and~\ref{theo:circulation} before in the particular, significantly easier, case where $\Om$ is a disc. The next subsection contains a sketch of the proof of the general case, where new ingredients enter, due to the necessity to take into account the non-trivial curvature of the boundary. {Before proceeding, we make a last remark in this direction:}

\begin{rem}[Curvature dependence of the order parameter]\label{rem:curvature}\mbox{}\\
{In view of previous results~\cite{FH1} in the regime $b \nearrow\theo ^{-1}$, a larger curvature should imply a larger local value of the order parameter. In the regime of interest to this paper, this will only be a subleading order effect, but it would be interesting to capture it by a rigorous asymptotic estimate.} 

{It has been proved before~\cite{Pan,FK} that in the surface superconductivity regime~\eqref{eq:b condition} 
\begin{equation}\label{eq:curv dep psi}
 \frac{1}{b ^{1/2}\eps} |\glm| ^4 \diff \rv  \underset{\eps \to 0}{\longrightarrow}  C(b) \diff s(\rv) 
\end{equation}
as measures, with $ \diff \rv $ the Lebesgue measure and $\diff s (\rv)$ the 1D Hausdorff measure along the boundary. Here $C(b)>0$ is a constant which depends only on $b$. A natural conjecture is that one can derive a result revealing the next-order behavior, of the form
\begin{equation}\label{eq:curv dep psi 2}
\frac{1}{\eps}\left( \frac{1}{b ^{1/2}\eps} |\glm| ^4 \diff \rv -   C(b) \diff s(\rv)\right) \underset{\eps \to 0}{\longrightarrow} C_2 (b)  k(s) \diff s(\rv) 
\end{equation}
with $C_2 (b) >0$ depending only on $b$. The form of the right-hand side is motivated by two considerations: }
\begin{itemize}
\item {In view of~\cite{FH1} we should expect that increasing $k$ increases the local value of $|\glm|$, whence the sign of the correction;} 
\item {Since the curvature appears only at subleading order in this regime, perturbation theory suggests that the correction should be linear in the curvature.} 
\end{itemize}
{We plan to substantiate this picture further in a later work.}
\end{rem}

\subsection{Sketch of proof}\label{sec:sketch}

In the regime of interest to this paper, the GL order parameter is concentrated along the boundary of the sample and the induced magnetic field is extremely close to the applied one. The tools allowing to prove these facts are well-known and described at length in the monograph~\cite{FH-book}. We shall thus not elaborate on this and the formal considerations presented in this subsection take as starting point the following effective functional
\begin{multline}\label{eq:intro GL func bound}
	\annf[\psi] : = \int_0^{|\partial \Omega|} \diff s \int_0^{c_0 |\log\eps|} \diff t \lf(1 - \eps \curv t \ri) \lf\{ \lf| \partial_t \psi \ri|^2 + \frac{1}{(1- \eps \curv t)^2} \lf| \lf( \eps \partial_s + i \aae(s,t) \ri) \psi \ri|^2 \ri.	\\	
	\lf. - \frac{1}{2 \hex} \lf[ 2|\psi|^2 - |\psi|^4 \ri]  \ri\},
\end{multline}
where $(s,t)$ represent boundary coordinates in the original domain $\Om$, the normal coordinate $t$ having been dilated on scale $\eps$, and $ \psi $ can be thought of as $ \glm(\rv(s,\eps t)) $, i.e., the order parameter restricted to the boundary layer. We denote $k(s)$ the curvature of the original domain and have set
\beq\label{eq:intro vect pot bound}
	\aae(s,t) : =- t + \half \eps \curv t^2 + \eps \deps , 	
\eeq
with
\beq\label{eq:intro deps}
	\deps : = \frac{\gamma_0}{\eps^2} - \lf\lfloor \frac{\gamma_0}{\eps^2} \ri\rfloor,	\qquad	 \gamma_0 : = \frac{1}{|\partial \Omega|} \int_{\Omega} \diff \rv \: \curl \, \aavm,
\eeq
$ \lf\lfloor \: \cdot \: \ri\rfloor $ standing for the integer part. Note that a specific choice of gauge has been made to obtain~\eqref{eq:intro GL func bound}. 

Thanks to the methods exposed in~\cite{FH-book}, one can show that the minimization of the above functional gives the full GL energy in units of  $\eps ^{-1}$, up to extremely small remainder terms, provided $c_0$ is chosen lare enough. To keep track of the fact that the domain $\ann = [0, |\partial \Omega|] \times [ 0, c_0 |\log\eps| ]$ corresponds to the unfolded boundary layer of the original domain and $\psi$ to the GL order parameter in boundary coordinates, one should impose periodicity of $\psi$ in the $s$ direction.

Here we shall informally explain the main steps of the proof that 
\begin{equation}\label{eq:intro anne}
\anne = \int_0^{|\partial \Omega|} \diff s \: \eone_\star \left(k(s)\right)  + \OO (\eps ^2 \logi). 
\end{equation}
where $\anne$ is the ground state energy associated to~\eqref{eq:intro GL func bound}. 
When $k(s)\equiv k$ is constant (the disc case), one may use the ansatz
\begin{equation}\label{eq:intro ansatz 1D}
\psi (s,t) = f(t) e^{- i  \left( \eps ^{-1} \alpha s -  \eps \deps s \right) }.  
\end{equation}
and recover the functional~\eqref{eq:intro 1D func disc}. It is then shown in~\cite{CR2} that the above ansatz is essentially optimal if one chooses $\alpha = \alk$ and $f=\fk$. An informal sketch of the proof in the case $k=0$ is given in Section~3.2 therein. The main insight in the general case is to realize that the above ansatz stays valid \emph{locally in $s$}. Indeed, since the terms involving $k(s)$ in~\eqref{eq:intro GL func bound} come multiplied by an $\eps$ factor, it is natural to expect variations in $s$ to be weak and the state of the system to be roughly of the form~\eqref{eq:intro GLm formal refined}, directly inspired by~\eqref{eq:intro ansatz 1D}. 

As usual the upper and lower bound inequalities in~\eqref{eq:intro anne} are proved separately.

\paragraph*{Upper bound.} To recover the integral in the energy estimate~\eqref{eq:intro anne}, we use a Riemann sum over the cell decomposition $\ann = \bigcup_{n=1} ^{\neps} \cellj $ introduced at the beginning of Section~\ref{sec:statements}. Indeed, as already suggested in~\eqref{eq:ref profile}, a piecewise constant approximation in the $s$-direction will be sufficient. Our trial state roughly has the form
\begin{equation}\label{eq:rough trial state}
\psi (s,t) = f_n (t)  e^{- i  \left( \eps ^{-1} \alpha_n s -  \eps \deps s \right) },	\quad \mbox{ for } s_n\leq s \leq s_{n+1}.
\end{equation}
Of course, we need to make this function continuous to obtain an admissible trial state, and we do so by small local corrections, described in more details in Section~\ref{sec:trial state}. We may then approximate the curvature by its mean value in each cell, making a relative error of order $\eps ^2$ per cell. Evaluating the energy of the trial state in this way we obtain an upper bound of the form 
\begin{equation}\label{eq:up bound Riemann}
 \anne \leq \sum_{n=1} ^{\neps} |s_{n+1} - s_n|\eone_\star (k_n) (1 + o(1))+ \OO (\eps ^2) 
\end{equation}
where the $o(1)$ error is due to the necessary modifications to~\eqref{eq:rough trial state} to make it continuous. The crucial point is to be able to control this error by showing that the modification needs not be a large one. This requires a detailed analysis of the $k$ dependence of the relevant quantities $\eonek$, $\alk$ and $\fk$ obtained by minimizing~\eqref{eq:intro 1D func disc}. Indeed, we prove in Section~\ref{sec:eff func} below that 
$$ \left| \eone_{\star} (k) -\eone_{\star} (k') \right| \leq C \eps\logi |k-k'|, \qquad  \left|\alpha (k) - \alpha (k') \right| \leq C \eps^{1/2} \logi |k-k'|^{1/2} $$
and, in a suitable norm,
$$   f_{k'} =  \fkstar + \OO \left(  \eps^{1/2} \logi |k-k'|^{1/2} \right),$$
which will allow to obtain the desired control of the $o(1)$ in~\eqref{eq:up bound Riemann} and conclude the proof by a Riemann sum argument.

\paragraph*{Lower bound.} In view of the argument we use for the upper bound, the natural idea to obtain the corresponding lower bound is to use the strategy for the disc case we developed in~\cite{CR2} locally in each cell. In the disc case, a classical method of energy decoupling  and Stokes' formula lead to the lower bound\footnote{We simplify the argument for pedagogical purposes.}  
\begin{equation}\label{eq:intro low bound disc}
\annf [\psi] \gtrapprox \eonek + \int_{\ann} \diff s \diff t \: \left(1 - \eps k t\right) \Kk (t) \left( |\dd_t v| ^2 + \tx\frac{\eps ^2}{\left(1 - \eps k t\right) ^2} |\dd_s v| ^2 \right) 
\end{equation}
where we have used the strict positivity of $\fk$ to write 
\begin{equation}\label{eq:intro decouple}
\psi (s,t)= \fk (t) e^{- i  \left( \eps ^{-1} \alk s -  \eps \deps s \right) } v(s,t) 
\end{equation}
and the ``cost function'' is 
\begin{align*}
 \Kk (t) &= \fk ^2 (t) + \Fk (t),\\
 \Fk(t) &= 2 \int_0 ^t d\eta \: \frac{\eta + \alk - \tx\frac{1}{2} \eps k \eta ^2}{1-\eps k \eta} \fk ^2 (\eta).
\end{align*}
This method is inspired from our previous works on the related Gross-Pitaevskii theory of rotating Bose-Einstein condensates~\cite{CRY,CPRY1,CPRY2} (informal summaries may be found in ~\cite{CPRY3,CPRY4}). Some of the steps leading to~\eqref{eq:intro low bound disc} have also been used before in this context~\cite{AH}. The desired lower bound in the disc case follows from~\eqref{eq:intro low bound disc} and the fact that $\Kk$ is essentially {\it positive}\footnote{More precisely it is positive except possibly for large $t$, a region that can be handled using the exponential decay of GL minimizers (Agmon estimates).} for any $k$. This is proved by carefully exploiting special properties of $\fk$ and~$\alk$.

To deal with the general case where the curvature is not constant, we again split the domain $\ann$ into small cells, approximate the curvature by a constant in each cell and use the above strategy locally.  A serious new difficulty however comes from the use of Stokes' formula in the derivation of~\eqref{eq:intro low bound disc}. We need to reduce the terms produced by Stokes' formula to expressions involving only first order derivatives of the order parameter, using further integration by parts. In the disc case, boundary terms associated with this operation vanish due to the periodicity of $\psi$ in the $s$ variable. When doing the integrations by parts in each cell, using different $\fk$ and $\alk$ in~\eqref{eq:intro decouple}, the boundary terms do not vanish since we artificially introduce some (small) discontinuity by choosing a cell-dependent profile $f_{k_n}$ as reference. 

To estimate these boundary terms we proceed as follows: the term at $s=s_{n+1}$, made of one part coming from the cell $ \cellj $ and one from the cell $\cell_{n+1}$ is integrated by parts back to become a bulk term in the cell $ \cellj $. In this sketch we ignore a rather large amount of technical complications and state what is essentially the conclusion of this procedure:   
\begin{equation}\label{eq:intro low bound gen}
\annf [\psi] \gtrapprox \sum_{n=1} ^{\neps} \bigg[ |s_{n+1} - s_n|\eone_\star (k_n) + \int_{\cellj} \diff s \diff t \lf(1 - \eps k_n t\right) \Kti \bigg( |\dd_t u_n| ^2 + \tx\frac{\eps ^2}{\left(1 - \eps k_n t\right) ^2} |\dd_s u_n| ^2 \bigg) \bigg]
\end{equation}
where 
\begin{equation}\label{eq:intro decouple bis}
u_n (s,t)= f_{k_n} ^{-1} (t) e^{ i  \left( \eps ^{-1} \alk s +  \eps \deps s \right) } \psi (s,t) 
\end{equation}
and the ``modified cost function'' is 
\begin{align*}
\Kti (s,t) &= K_{k_n} (t) - |\dd_s \chi_n (s)| |\ijj (t)| - |\chi_n(s)| \left| \dd_t \ijj (t)\right|,\\
\ijj (t) &= F_{k_n} (t)- F_{k_{n+1}} (t) \frac{f^2_{k_{n}}(t)}{f^2_{k_{n+1}}(t)}, 
\end{align*}
and $\chi_n$ is a suitable localization function supported in $\cellj$ with $\chi_n(s_{n+1}) = 1$ that we use to perform the integration by parts in $\cellj$. Note that the dependence of the new cost function on both $k_n$ and $k_{n+1}$ is due to the fact that the original boundary terms at $s_{n+1}$ that we transform into bulk terms in $ \cellj $ involved both $u_n$ and $u_{n+1}$.

The last step is to prove a bound of the form  
\begin{equation}\label{eq:intro bound L}
|\ijj(t)| + \left| \dd_t \ijj (t)\right| \leq C \eps \logi f_{k_n} ^2 (t)   
\end{equation}
on the ``correction function'' $\ijj$, so that 
$$ \Kti (t) \geq \left( 1- C \eps \logi \right)f_{k_n} ^2 (t) + F_{k_n} (t).$$
This allows us to conclude that (essentially) $\Kti \geq 0$ by a perturbation of the argument applied to $K_{k_n}$ in~\cite{CR2} and thus concludes the lower bound proof modulo the same Riemann sum argument as in the upper bound part. Note the important fact that the quantity in the l.h.s. of \eqref{eq:intro bound L} is proved to be small \emph{relatively to} $f_{k_n} ^2 (t) $, including in a region where the latter function is exponentially decaying. This bound requires a thorough analysis of auxiliary functions linked to~\eqref{eq:intro 1D func disc} and is in fact a rather strong manifestation of the continuity of this minimization problem as a function of~$k$. 

\medskip

The rest of the paper is organized as follows: Section~\ref{sec:functionals} contains the detailed analysis of the effective, curvature-dependent, 1D problem. The necessary continuity properties as function of the curvature are given in Subsection~\ref{sec:eff func} and the analysis of the associated auxiliary functions in Subsection~\ref{sec:aux func}. The details of the energy upper bound are then presented in Section~\ref{sec:up bound} and the energy lower bound is proved in Section~\ref{sec:low bound}. We deduce our other main results in Section~\ref{sec:density degree}. Appendix~\ref{sec:app} recalls for the convenience of the reader some material from~\cite{CR2} that we use throughout the paper.  

\section{Effective Problems and Auxiliary Functions}\label{sec:functionals}

This section is devoted to the analysis of the 1D curvature-dependent reduced functionals whose minimization allows us to reconstruct the leading and sub-leading order of the full GL energy. We shall prove results in two directions:
\begin{itemize}
\item We carefully analyse the dependence of the 1D variational problems as a function of curvature in Subsection~\ref{sec:eff func}. Our analysis, in particular the estimate of the subleading order of the GL energy, requires some quantitative control on the variations of the optimal 1D energy, phase and density when the curvature parameter is varied, that is when we move along the boundary layer of the original sample along the transverse direction. 
\item In our previous paper~\cite{CR2} we have proved the positivity property of the cost function which is the main ingredient in the proof of the energy lower bound in the case of a disc (constant curvature). As mentioned above, the study of general domains with smooth curvature that we perform here will require to estimate more auxiliary functions, which is the subject of Subsection~\ref{sec:aux func}.
\end{itemize}

We shall use as input some key properties of the 1D problem at fixed $k$ that we proved in~\cite{CR2}. These are recalled in Appendix~\ref{sec:app} below for the convenience of the reader.

\subsection{Effective 1D functionals}\label{sec:eff func}

We take for granted the three crucial but standard steps of reduction to the boundary layer, replacement of the vector potential and mapping to boundary coordinates. Our considerations thus start from the following reduced GL functional giving the original energy in units of $\eps ^{-1}$, up to negligible remainders: 
\begin{multline}\label{eq:GL func bound} 
	\annf[\psi] : = \int_0^{|\partial \Omega|} \diff s \int_0^{c_0 |\log\eps|} \diff t \lf(1 - \eps \curv t \ri) \lf\{ \lf| \partial_t \psi \ri|^2 + \frac{1}{(1- \eps \curv t)^2} \lf| \lf( \eps \partial_s + i \aae(s,t) \ri) \psi \ri|^2 \ri.	\\	
	\lf. - \frac{1}{2 \hex} \lf[ 2|\psi|^2 - |\psi|^4 \ri]  \ri\},
\end{multline}
where $k(s)$ is the curvature of the original domain. We have set
\beq\label{eq:vect pot bound}
	\aae(s,t) : =- t + \half \eps \curv t^2 + \eps \deps , 	
\eeq
and
\beq\label{eq:deps}
	\deps : = \frac{\gamma_0}{\eps^2} - \lf\lfloor \frac{\gamma_0}{\eps^2} \ri\rfloor,	\qquad	 \gamma_0 : = \frac{1}{|\partial \Omega|} \int_{\Omega} \diff \rv \: \curl \, \aavm,
\eeq
$ \lf\lfloor \: \cdot \: \ri\rfloor $ standing for the integer part. The boundary layer in rescaled coordinates is denoted by
\beq
	\label{eq:boundary layer}
	\ann : =  \lf\{ \rv \in \Omega \: | \: \dist(\rv, \partial \Omega) \leq c_0 \eps |\log\eps| \ri\}.
\eeq

The effective functionals that we shall be concerned with in this section are obtained by computing the energy \eqref{eq:GL func bound} of certain special states. In particular we have to go beyond the simple ans\"atze considered so far in the literature, e.g., in~\cite{FH-book,CR2}, and obtain the following effective energies:

\begin{itemize}
\item {\it 2D functional with definite phase}. Inserting the ansatz
\begin{equation}\label{eq:ansatz 2D}
\psi (s,t) = g(s,t) e^{- i  \left( \eps ^{-1} S (s) -  \eps \deps s \right) }  
\end{equation}
in~\eqref{eq:GL func bound}, with $g$ and $S$ respectively real valued density and phase, we obtain
\bml{\label{eq:2D func}
	\E^{\mathrm{2D}}_S[g] : = \int_{0}^{c_0 |\log\eps|} \diff t \int_0^{|\partial \Omega|} \diff s  \lf( 1 - \eps \curv t \ri) \lf\{ \lf| \partial_t g \ri|^2 + \frac{\eps^2}{(1 -  \eps \curv t)^2} \lf| \partial_s g \ri|^2 \ri.	\\
	+ \frac{\left(t + \dd_s S - \frac{1}{2}\eps t ^2 k (s) \right) ^2}{(1-\eps t k(s)) ^2} g^2 - \frac{1}{2b} \lf(2 g^2 - g^4 \ri) \bigg\}.
}
In the particular case where $\dd_s S = \alpha \in 2\pi \Z$ we may obtain a simpler functional of the density alone 
\bml{\label{eq:2D func bis}
	\E^{\mathrm{2D}}_\alpha [g] : = \int_{0}^{c_0 |\log\eps|} \diff t \int_0^{|\partial \Omega|} \diff s  \lf( 1 - \eps \curv t \ri) \lf\{ \lf| \partial_t g \ri|^2 + \frac{\eps^2}{(1 -  \eps \curv t)^2} \lf| \partial_s g \ri|^2 \ri.	\\
	\lf. + \pots(s,t) g^2 - \frac{1}{2b} \lf(2 g^2 - g^4 \ri) \ri\},
}
where
\begin{equation}\label{eq:pots}
\pots (s,t) =  \frac{\left(t + \alpha - \frac{1}{2} k (s) \eps t ^2 \right) ^2}{(1- k(s)\eps t) ^2}.
\end{equation}

However to capture the next to leading order of~\eqref{eq:GL func bound} we do consider a non-constant $\dd_s S$ to accommodate curvature variations, which is in some sense the main novelty of the present paper. In particular,~\eqref{eq:2D func bis} does \emph{not} provide the $ \OO(\eps)$ correction to the full GL energy. On the opposite \eqref{eq:2D func} does, once minimized over the phase factor $S$ as well as the density $g$. We will not prove this directly although it follows rather easily from our analysis. 
\item {\it 1D functional with given curvature and phase}. If the curvature $k(s)\equiv k$ is constant (the disc case), the minimization of~\eqref{eq:2D func bis} reduces to the 1D problem
\begin{equation}
\label{eq:1D func}
\fone_{k,\alpha}[f] : = \int_0^{c_0|\log\eps|} \diff t (1-\eps k t )\lf\{ \lf| \partial_t f \ri|^2 + \pot(t) f^2 - \tx\frac{1}{2b} \lf(2 f^2 - f^4 \ri) \ri\},
\end{equation}
with
\beq
	\label{eq:pot}
	\pot(t) : = \frac{(t + \alpha - \frac12 \eps k t ^2 )^2}{(1-\eps k t ) ^2}.
\eeq 
In the sequel we shall denote
\begin{equation}\label{eq:def interval}
\ieps = \left[0,c_0 |\log \eps|\right] = : [0,\teps]. 
\end{equation}
Note that~\eqref{eq:1D func} includes $ \OO(\eps)$ corrections due to curvature. As explained above our approach is to approximate the curvature of the domain as a piecewise constant function and hence an important ingredient is to study the above 1D  problem for different values of $k$, and prove some continuity properties when $k$ is varied. For $k=0$ (the half-plane case, sometimes referred to as the half-cylinder case) we recover the familiar 
\beq\label{eq:1D func bis}
\fone_{0,\alpha}[f] : = \int_0^{c_0|\log\eps|} \diff t \lf\{ \lf| \partial_t f \ri|^2 + (t + \alpha )^2 f^2 - \tx\frac{1}{2b} \lf(2 f^2 - f^4 \ri) \ri\},
\eeq
which has been known to play a crucial role in surface superconductivity physics for a long time (see~\cite[Chapter 14]{FH-book} and references therein). 
\end{itemize}

In this section we provide details about the minimization of~\eqref{eq:1D func} that go beyond our previous study~\cite[Section 3.1]{CR2}. We will use the following notation:
\begin{itemize}
\item Minimizing~\eqref{eq:1D func} with respect to $f$ at fixed $\alpha$ we get a minimizer $\fkal$ and an energy $\eone (k,\alpha)$.
\item Minimizing the latter with respect to $\alpha$ we get some $\alpha(k)$ and some energy $\eone_\star (k)$. {It follows from~\eqref{eq:vari 1D phase} below that $\alpha(k)$ is uniquely defined}.
\item Corresponding to $\eone_\star (k) : = \eone (k,\alpha(k)) $ we have an optimal density $\fkstar$, which minimizes $\eone (k,\alpha(k)) $, and a potential 
$$ \potk(t) : = V_{k,\alpha(k)}(t).$$
\end{itemize}

The following Proposition contains the crucial continuity properties (as a function of $k$) of these objects:

\begin{pro}[\textbf{Dependence on curvature of the 1D minimization problem}]\label{pro:1D curv}\mbox{}\\
Let $k,k' \in \R$ be bounded independently of $\eps$ and $1<b<\theo ^{-1}$, then the following holds: 
\begin{equation}\label{eq:vari 1D energy}
\left| \eone_{\star} (k) -\eone_{\star} (k') \right| \leq C \eps |k-k'| \logi
\end{equation}
and
\begin{equation}\label{eq:vari 1D phase}
|\alpha (k) - \alpha (k')| \leq C \left( \eps |k-k'| \right) ^{1/2} \logi. 
\end{equation} 
Finally, for all $n\in \N$,
\begin{equation}\label{eq:vari 1D opt density}
\left\Vert \fkstar ^{(n)}- f_{k'} ^{(n)} \right\Vert_{L^{\infty} (\ieps)} \leq C \left( \eps |k-k'| \right) ^{1/2} \logi. 
\end{equation}
\end{pro}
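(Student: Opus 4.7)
The plan is to prove the three bounds in the order stated, each relying on the previous. For \eqref{eq:vari 1D energy}, I would use $\fkstar$ as a trial function for $\fone_{k',\alpha(k)}$, giving
$$\eone_\star(k') \leq \fone_{k',\alpha(k)}[\fkstar] = \eone_\star(k) + \bigl(\fone_{k',\alpha(k)}[\fkstar] - \fone_{k,\alpha(k)}[\fkstar]\bigr),$$
where the last difference is produced by varying $k$ only in the measure $(1 - \eps k t)\diff t$ and the potential $\pot$. Since $k$ appears with an $\eps$ prefactor in both, the integrand difference is pointwise of size $\eps|k-k'|$ times polynomial factors in $t$ combined with $\fkstar^2$ or $|\partial_t \fkstar|^2$; the Agmon-type exponential decay of $\fkstar$ recalled in Appendix~\ref{sec:app} controls all integrals over $\ieps$ by $O(\eps|k-k'||\log\eps|^\infty)$. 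Exchanging $k$ and $k'$ yields the symmetric bound, completing \eqref{eq:vari 1D energy}.

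For \eqref{eq:vari 1D phase}, the key input is a coercivity bound $\eone(k,\alpha) - \eone_\star(k) \geq c|\alpha - \alpha(k)|^2$, valid for $\alpha$ in a fixed neighborhood of $\alO := \alpha(0)$, with $c > 0$ independent of $\eps$ and bounded $k$. For $k=0$ this follows from the well-known non-degenerate minimum of $\alpha \mapsto \eone(0,\alpha)$ at $\alO$ (Appendix~\ref{sec:app}). For general $k$, the argument of \eqref{eq:vari 1D energy} with one endpoint at $0$ gives $\eone(k,\alpha) = \eone(0,\alpha) + O(\eps|\log\eps|^\infty)$ uniformly on bounded sets of $\alpha$, transferring the coercivity to finite $k$ with the same constant up to negligible errors. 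Applying it at $k'$ with $\alpha = \alpha(k)$ and combining with \eqref{eq:vari 1D energy} yields
$$c|\alpha(k) - \alpha(k')|^2 \leq \eone(k',\alpha(k)) - \eone_\star(k') \leq \bigl[\eone_\star(k) - \eone_\star(k')\bigr] + O(\eps|k-k'||\log\eps|^\infty) = O(\eps|k-k'||\log\eps|^\infty),$$
which is \eqref{eq:vari 1D phase} and also establishes the uniqueness of $\alpha(k)$ mentioned in the paper.

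For \eqref{eq:vari 1D opt density} I would split into two sub-steps. \emph{Step (i): $L^\infty$ bound on $w := \fkstar - f_{k'}$.} A second-order Taylor expansion of $\fone_{k,\alpha(k)}$ around its minimizer $\fkstar$, combined with uniform $H^1$-coercivity of the associated Hessian (a perturbation of the strictly positive half-plane linearized operator at $\fO$), gives
$$\fone_{k,\alpha(k)}[f_{k'}] - \eone_\star(k) \geq c\|w\|_{H^1(\ieps)}^2 - C\|w\|_{H^1(\ieps)}^3.$$
Bounding the left-hand side by the trial-state swap used above and invoking \eqref{eq:vari 1D energy} and \eqref{eq:vari 1D phase} yields $\|w\|_{H^1(\ieps)} = O((\eps|k-k'|)^{1/2}|\log\eps|^\infty)$, and Sobolev embedding in 1D upgrades this to $L^\infty$. \emph{Step (ii): higher derivatives.} The Euler--Lagrange equations for $\fkstar$ and $f_{k'}$ are second-order ODEs with smooth bounded coefficients on $\ieps$, so $w$ satisfies a linear ODE
$$w'' = a(t) w' + b(t) w + R(t)$$
with $R \in L^\infty$ of size $O(\eps|k-k'||\log\eps|^\infty)$ by Agmon decay of both profiles. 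This immediately controls $\|w''\|_{L^\infty}$, and repeated differentiation of the ODE together with induction closes the bound for arbitrary $n$.

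The main obstacle I anticipate is establishing the uniform $H^1$-coercivity of the Hessian of $\fone_{k,\alpha(k)}$ at $\fkstar$, with a constant independent of $\eps$ and bounded $k$. This requires a spectral analysis of the linearized operator at the half-plane minimizer $\fO$ to exhibit strict positivity with an explicit gap, followed by a perturbation argument (relying on \eqref{eq:vari 1D energy} and the Agmon decay) to preserve positivity under the $\eps$- and $k$-dependent corrections.
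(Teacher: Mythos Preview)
Your argument for \eqref{eq:vari 1D energy} is essentially the paper's: use $\fkstar$ as a trial state for the $k'$ problem and exploit that $k$ appears only with an $\eps$ prefactor. The paper phrases this as setting $v\equiv 1$ in an energy decoupling, but the content is identical.

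For \eqref{eq:vari 1D phase} and \eqref{eq:vari 1D opt density}, however, your route diverges from the paper's and rests on two inputs that are not available from Appendix~\ref{sec:app}: (a) a non-degenerate minimum of $\alpha\mapsto\eone(0,\alpha)$ at $\alO$, and (b) a uniform $H^1$ spectral gap for the linearized operator at $\fO$. Neither is stated or proved in the appendix you cite; in fact the paper explicitly remarks that uniqueness of $\alpha(k)$ is a \emph{consequence} of \eqref{eq:vari 1D phase}, not an input. Item (b) you yourself flag as the main obstacle. Both are plausible but would require separate, non-trivial spectral arguments that you have not supplied.

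The paper bypasses both obstacles by exploiting the cost function $K_k = (1-\de)\fk^2 + F_k$ and its positivity (Lemma~\ref{lem:K positive}, proved in~\cite{CR2}). Writing $f = \fk v$ and decoupling the $(k',\alpha')$ functional, the cross term in $\alpha'-\alpha(k)$ is integrated by parts against $F_k$ and then absorbed using $|F_k|\leq (1-\de)\fk^2$ with $\de\sim|\log\eps|^{-5}$; this produces a coercivity constant that degenerates only logarithmically, which suffices since the target bound carries $|\log\eps|^\infty$. The same decoupling simultaneously yields the preliminary $L^2$ bound $\|\fk^2-f_{k'}^2\|_{L^2}\leq C(\eps|k-k'|)^{1/2}|\log\eps|^\infty$. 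For the $L^\infty$ bound, the paper then expands \emph{both} $\fonek[\fk]$ around $f_{k'}$ and $\fone_{k'}[f_{k'}]$ around $\fk$ and adds the two inequalities; the symmetrization cancels the unstable first-order terms and leaves an expression whose potentially negative part is controlled using the already-established $L^2$ bound on the region where $\potk+V_{k'}$ is bounded, and is manifestly positive where the potentials are large. This yields the $H^1$ bound on $\fk-f_{k'}$ without any spectral gap assumption, after which Sobolev and bootstrap proceed as you outline.

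In short: your overall architecture is reasonable, but the two coercivity facts you lean on are genuine gaps. The paper's substitute is the cost-function positivity plus a symmetrization trick that leverages the preliminary $L^2$ estimate to avoid needing a uniform Hessian gap.
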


We first prove~\eqref{eq:vari 1D energy} and~\eqref{eq:vari 1D phase} and explain that these estimates imply the following lemma: 

\begin{lem}[\textbf{Preliminary estimate on density variations}]\label{lem:1D curv pre}\mbox{}\\
Under the assumptions of Proposition~\ref{pro:1D curv} it holds
\begin{equation}\label{eq:vari 1D opt density L 2}
\left\Vert \fkstar ^2 - f_{k'} ^2 \right\Vert_{L ^2 (\ieps)} \leq C \left( \eps |k-k'| \right) ^{1/2} \logi. 
\end{equation}
\end{lem}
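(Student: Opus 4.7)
The plan is to use $f_{k'}$ as a trial state in the variational problem at curvature $k$, combine an upper bound on the resulting energy gap with a quadratic coercivity of $\fone_{k,\alpha(k)}$ around its minimizer $\fkstar$, and deduce an $L^2$ bound on $f_{k'}-\fkstar$. Since both functions are uniformly bounded by $1$, this transfers to an $L^2$ bound on the squares via the pointwise identity $|f_{k'}^2-\fkstar^2| = |f_{k'}-\fkstar|\,|f_{k'}+\fkstar|$.

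First I would decompose
\begin{equation*}
\fone_{k,\alpha(k)}[f_{k'}] - \eonek = \Big(\fone_{k,\alpha(k)}[f_{k'}] - \fone_{k',\alpha(k')}[f_{k'}]\Big) + \Big(\eone_\star(k') - \eonek\Big),
\end{equation*}
and estimate the second summand by $O(\eps|k-k'|\logi)$ using \eqref{eq:vari 1D energy}. The first summand I would split further into a pure change of curvature (keeping $\alpha = \alpha(k')$) and a pure change of phase (keeping the curvature at $k$). The curvature piece only affects the weight $1-\eps k t$, the metric factor $(1-\eps k t)^{-2}$ in the potential, and the $\tfrac{1}{2}\eps k t^2$ correction; since each of these changes carries an explicit $\eps(k-k')$ prefactor and $f_{k'}$ decays exponentially for $t$ large (bounds recalled in the appendix), this contribution is again $O(\eps|k-k'|\logi)$.

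The phase piece is where the main difficulty lies. A direct expansion yields
\begin{equation*}
\fone_{k,\alpha(k)}[f_{k'}] - \fone_{k,\alpha(k')}[f_{k'}] = \int_{\ieps} \diff t \, (1-\eps k t)\, f_{k'}^2(t) \,\Big[V_{k,\alpha(k)}(t) - V_{k,\alpha(k')}(t) \Big],
\end{equation*}
and with $\beta := \alpha(k)-\alpha(k')$ the bracket equals $[\beta^2 + 2\beta(t + \alpha(k') - \tfrac{1}{2}\eps k t^2)]/(1-\eps k t)^2$. The $\beta^2$ term is controlled by $O(\beta^2)=O(\eps|k-k'|\logi)$ thanks to \eqref{eq:vari 1D phase}, but the linear-in-$\beta$ term would naively be of order $|\beta|=O((\eps|k-k'|)^{1/2}\logi)$, which is too weak. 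The crucial observation is that $\alpha(k')$ minimizes $\alpha\mapsto \eone(k',\alpha)$ with associated density $f_{k'}$, so by the envelope theorem
\begin{equation*}
\int_{\ieps} \diff t \, \frac{t+\alpha(k')-\tfrac{1}{2}\eps k' t^2}{1-\eps k' t}\, f_{k'}^2(t) = 0,
\end{equation*}
and replacing $k'$ by $k$ inside this integrand costs only $O(\eps|k-k'|\logi)$. The linear contribution is therefore in fact bounded by $|\beta|\cdot O(\eps|k-k'|\logi) = O(\eps^{3/2}|k-k'|^{3/2}\logi)$, and is absorbed.

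Combining these steps yields $\fone_{k,\alpha(k)}[f_{k'}] - \eonek \leq C\eps|k-k'|\logi$. To conclude I would invoke a quadratic coercivity estimate of the form $\fone_{k,\alpha(k)}[g] - \eonek \geq c\,\|g-\fkstar\|_{L^2(\ieps)}^2$, valid for nonnegative $g\leq 1$ with a constant $c>0$ uniform in $k$ in a bounded range. This follows from a second-variation argument: the linearized operator $-(1-\eps kt)^{-1}\partial_t[(1-\eps kt)\partial_t \cdot]+\potk-1/b+3\fkstar^2/b$ is non-negative by the Euler--Lagrange equation for $\fkstar$ together with $b>1$ and $\fkstar\leq 1$, and strictly positive on the complement of its zero mode. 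Applying this with $g=f_{k'}$ produces $\|f_{k'}-\fkstar\|_{L^2(\ieps)}^2 \leq C\eps|k-k'|\logi$, whence \eqref{eq:vari 1D opt density L 2} follows from $|f_{k'}^2-\fkstar^2|\leq 2|f_{k'}-\fkstar|$. The main obstacle is the delicate cancellation in the phase piece, requiring the variational optimality of $\alpha(k')$ to gain the missing $(\eps|k-k'|)^{1/2}$; a secondary technical issue is making the coercivity constant uniform in $k$, which amounts to a quantitative version of the positivity of the cost function $\Kk$ from~\cite{CR2}.
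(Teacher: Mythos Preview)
Your overall strategy---bound $\fone_{k,\alpha(k)}[f_{k'}]-\eonek$ from above and extract the density estimate from coercivity---is sound, and the way you dispose of the linear-in-$\beta$ contribution via the optimality identity \eqref{eq:FH nonlinear} for $\alpha(k')$ is a genuinely different idea from the paper. The paper instead decouples $\fone_{k',\alpha(k')}$ with respect to $\fk$ and faces a cross term of the form $2(\alpha(k')-\alpha(k))\int (1-\eps k t)^{-1}\big(t+\alpha(k)-\tfrac12\eps k t^2\big)\fk^2 v^2$, where the integrand does \emph{not} vanish by \eqref{eq:FH nonlinear}; controlling it requires an integration by parts with $F_k$ and the positivity of the cost function $K_k$ (Lemma~\ref{lem:K positive}). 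Your reorganization, by contrast, makes the Feynman--Hellmann cancellation available and avoids the cost function entirely for this step. The trade-off is that you take \eqref{eq:vari 1D energy} and \eqref{eq:vari 1D phase} as inputs, whereas the paper's argument proves all three estimates---energy, phase, and density---in one pass, with \eqref{eq:vari 1D phase} itself obtained from the cost-function argument.

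The weak point is the coercivity step. Your second-variation sketch is not quite right: the Hessian $L_k$ has no zero mode ($\fk$ is a zero mode of $M_k := L_k - \tfrac{2}{b}\fk^2$, not of $L_k$), the remark on ``$b>1$ and $\fk\le 1$'' plays no role, and even granting $L_k\ge c$, the cubic remainder $\tfrac{2}{b}\int(1-\eps k t)\fk h^3$ can eat the gain. The clean fix is to use the exact Lassoued--Mironescu identity (the paper's \eqref{eq:decouple 1D}--\eqref{eq:f red 1D} at $k'=k$, $\alpha'=\alpha(k)$): writing $f_{k'}=\fk v$ one has
\[
\fone_{k,\alpha(k)}[f_{k'}]-\eonek \;=\; \int_{\ieps}(1-\eps k t)\Big\{\fk^2(\partial_t v)^2+\tfrac{1}{2b}\fk^4(1-v^2)^2\Big\}\;\ge\;\tfrac{c}{2b}\,\big\|\fk^2-f_{k'}^2\big\|_{L^2(\ieps)}^2,
\]
which is exactly the coercivity you need and delivers \eqref{eq:vari 1D opt density L 2} directly, without the detour through $\|f_{k'}-\fk\|_{L^2}$. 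With this replacement, your proof goes through.
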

 

\begin{proof}[Proof of Lemma~\ref{lem:1D curv pre}] We proceed in three steps:

\paragraph*{Step 1. Energy decoupling.}  We use the strict positivity of $\fk$ recalled in the appendix to write any function $f$ on $\ieps$ as 
$$ f = \fk v.$$
We can then use the variational equation \eqref{eq:var eq fal} satisfied by $ \fk $ to decouple the $\alpha ', k'$ functional in the usual way, originating in~\cite{LM}. Namely, we integrate by parts and use the fact that $\fk$ satisfies Neumann boundary conditions to write 
\begin{multline*}
\int_{0} ^{c_0 |\log \eps|}  \diff t \: (1-\eps k' t ) (\dd_t f)^2 = \int_{0} ^{c_0 |\log \eps|}  \diff t \: (1-\eps k' t )  \left[v^2 (\dd_t \fk) ^2 + \fk^2 (\dd_t v) ^2  + 2 \fk \dd_t  \fk v \dd_t v \right]\\
= \int_{0} ^{c_0 |\log \eps|} \diff t \: (1-\eps k' t)  \left[ \fk^2 (\dd_t v) ^2  + \lf( \tx\frac{\eps k'}{1-\eps k' t} - \tx\frac{\eps k}{1-\eps k t} \ri)  v^2 \fk \dd_t \fk  - \fk ^2 v^2 \left(\potk + \tx\frac{1}{b} (\fk ^2 - 1 )\right)\right].
\end{multline*}
Inserting this into the definition of $\fone_{k',\alpha '}$ and using~\eqref{eq:eone explicit}, we obtain for any $f$
\begin{align}\label{eq:decouple 1D}
\fone_{k',\alpha'} [f] &= \eonek  +  \fred [v] \nonumber
\\ &+ \int_{0} ^{c_0 |\log \eps|} \diff t \: (1-\eps k' t) \left(V_{k',\alpha'}(t) - \potk(t) \right) \fk ^2 v ^2 \nonumber
\\ &+  \frac{1}{b} \eps(k' -k )  \disp\int_{0} ^{c_0 |\log \eps|} \diff t \: t \fk ^4  + \eps \int_{0} ^{c_0 |\log \eps|} \diff t \: \left( k' - k \tx\frac{1-\eps k ' t}{1-\eps k t}\right) |v| ^2 \fk \dd_t \fk
\end{align}
with
\begin{equation}\label{eq:f red 1D}
\fred [v] = \int_{0} ^{c_0 |\log \eps|} \diff t \: (1-\eps k' t) \left\{ \fk  ^2 (\dd_t v ) ^2 + \tx\frac{1}{ 2b} \fk ^4 \left( 1 - v^2\right) ^2 \right\}. 
\end{equation}
In the case $\alpha'=\alpha(k)$ we can insert the trial state $v\equiv 1$ in the above, which gives 
\begin{equation}\label{eq:up bound ener 1D}
\eone_\star (k') \leq \eone_{k',\alpha(k)} \leq \eonek + C \eps |k-k'||\log \eps| ^{\infty}
\end{equation}
in view of the bounds on $\fk$ recalled in Appendix~\ref{sec:app} and the easy estimate
$$ \left| V_{k',\alk} (t)- \potk (t)\right| \leq C \eps |k-k'||\log \eps| ^{\infty}$$
for any $t\in \ieps$. Changing the role of $k$ and $k'$ in \eqref{eq:up bound ener 1D} we obtain the reverse inequality
$$\eonek \leq \eone_\star (k') + C \eps |k-k'||\log \eps| ^{\infty}, $$
and hence \eqref{eq:vari 1D energy} is proved.


\paragraph*{Step 2. Use of the cost function.} 
We now consider the case $\alpha' = \alpha (k'), f=f_{k'}$ and bound from below the term on the second line of~\eqref{eq:decouple 1D}. A simple computation gives  
\begin{multline}\label{eq:1D diff pot}
\int_{0} ^{c_0 |\log \eps|} \diff t \: (1-\eps k' t) \left(V_{k',\alpha(k')} - V_{k,\alpha(k)} \right) \fk ^2 v ^2 \\
= \int_{0} ^{c_0 |\log \eps|} \diff t \: (1-\eps k t) ^{-1} \left( \alpha(k')  - \alpha(k) \right) \left( 2t + \alpha (k) +  \alpha(k') - \eps k t ^2 \right) \fk ^2 v ^2 + \OO(\eps |k-k'|)
\\= \left( \alpha(k')  - \alpha(k) \right) ^2 \int_{0} ^{c_0 |\log \eps|} \diff t \: (1-\eps k' t) ^{-1} \fk ^2 v^2 
\\ + 2 ( \alpha(k') - \alpha(k)) \int_{0} ^{c_0 |\log \eps|} \diff t \: \frac{t + \alk  - \frac{1}{2}\eps k t ^2 }{1-\eps k t} \fk ^2 v ^2 + \OO(\eps |k-k'|).
\end{multline}
We may now follow closely the procedure of~\cite[Section 5.2]{CR2}: with the potential function $\Fk$ defined in~\eqref{eq:Fk} below we have 
$$2\frac{t + \alk  - \frac{1}{2}\eps k t ^2 }{1-\eps k t} \fk ^2 = \dd_t F_k (t)$$
and hence an integration by parts yields (boundary terms vanish thanks to Lemma~\ref{lem:F prop})
\begin{equation}\label{eq:1D mom term}
2 \int_{0} ^{c_0 |\log \eps|} \diff t \: \frac{t + \alk  - \frac{1}{2}\eps k t ^2 }{1-\eps k t} \fk ^2 v^2 = - 2 \int_{0} ^{c_0 |\log \eps|} \diff t \: \Fk v\dd_t v.
\end{equation}
We now split the integral into one part running from $0$ to $\btik$ and a boundary part running from $\btik$ to $c_0 |\log \eps|$, where $\btik$ is defined in \eqref{eq:annb} and~\eqref{eq:annb bis} below. For the second part, it follows from the decay estimates of Lemma~\ref{lem:point est fal} 
that 
\begin{equation}\label{eq:1D bound region}
	\int_{\btik} ^{c_0 |\log \eps|} \diff t \:  \Fk v\dd_t v  = \OO(\eps ^{\infty}). 
\end{equation}
To see this, one can simply adapt the procedure in~\cite[Eqs. (5.21) -- (5.28)]{CR2}. The bound~\eqref{eq:1D bound region} is in fact easier to derive than the corresponding estimate in~\cite{CR2} because the decay estimates in Lemma~\ref{lem:point est fal} are stronger than the Agmon estimates we had to use in that case. Details are thus omitted. 

We turn to the main part of the integral~\eqref{eq:1D mom term}, which lives in $[0,\btik].$ Since $\Fk$ is negative we have, using Lemma~\ref{lem:K positive} and Cauchy-Schwarz,
\begin{multline*}
\bigg| 2 (\alpha(k') - \alpha(k))  \int_{0} ^{\btik} \diff t \: \Fk v\dd_t v \bigg| \\ \leq 
 (\alpha(k') - \alpha(k)) ^2  \int_{0} ^{\btik} \diff t \: (1-\eps k' t) ^{-1}\left|\Fk\right| v^2 +  \int_{0} ^{\btik} \diff t \: (1-\eps k' t) \left|\Fk\right| (\dd_t  v )^2
\\\leq (1-\de) (\alpha(k') - \alpha(k)) ^2  \int_{0} ^{\btik} \diff t \: (1-\eps k' t) ^{-1} \fk ^2 v^2 + (1-\de) \int_{0} ^{\btik} \diff t (1-\eps k' t) \fk ^2 (\dd_t  v )^2
\end{multline*}
for any  $0< \de \leq C |\log \eps| ^{-4}$. Inserting this bound and~\eqref{eq:1D bound region} in~\eqref{eq:decouple 1D}, using~\eqref{eq:1D diff pot} and~\eqref{eq:1D mom term}, yields the lower bound 
\begin{align}\label{eq:1D low pre}
\eone_\star (k') &\geq \eonek  +  \int_{0} ^{c_0 |\log \eps|} \diff t \: (1-\eps t k') \left\{ \de \fk^2 (\dd_t v )^2 + \de  \frac{(\alpha' - \alpha(k)) ^2}{(1-\eps t k') ^2} \fk ^2 v^2 + \frac{\fk ^4}{ 2b} \left( 1 - v^2\right) ^2 \right\} \nonumber
\\ &+ \eps \int_{0} ^{c_0 |\log \eps|} \diff t  \: v^2 \fk \dd_t \fk \left( k' - k\frac{1-\eps tk '}{1-\eps t k}\right) - C \eps |k-k'| |\log \eps| ^{\infty}
\end{align}
where $v = f_{k'}/\fk$ and we also used the uniform bound~\eqref{eq:fal estimate} to estimate the fourth term of the r.h.s. of~\eqref{eq:decouple 1D}. 


\paragraph*{Step 3. Conclusion.} We still have to bound the first term in the second line of~\eqref{eq:1D low pre}: 
\begin{multline*}
\eps \int_{0} ^{c_0 |\log \eps|} \diff t \: v^2 \fk \dd_t \fk \left( k' - k \frac{1-\eps k ' t}{1-\eps k t}\right) 
 = \frac{1}{2} \left[ v^2 \fk ^2  \left( \eps k' - \eps k \frac{1-\eps k' t }{1-\eps k t}\right)\right]_0 ^{c_0 |\log \eps|} 
\\+ \int_{0} ^{c_0 |\log \eps|} \diff t \: v^2 \fk ^2 \frac{\eps k (k'-k)}{(1-\eps k t)^2} - \int_{0} ^{c_0 |\log \eps|} \diff t \: v \dd_t v \fk ^2 \left( \eps k' - \eps k \frac{1-\eps k' t }{1-\eps k t}\right).
\end{multline*}
The first two terms are both $\OO (\eps |k-k'| |\log \eps| ^{\infty})$ thanks to~\eqref{eq:fal estimate} applied to $f_{k'} ^2 = \fk ^2 v^2$. For the third one we write  
\bmln{
\bigg|\int_{0} ^{c_0 |\log \eps|} \diff t \: v \dd_t v \fk ^2 \left( \eps k' - \eps k \frac{1-\eps k' t }{1-\eps k t}\right)\bigg| \leq  C \eps |k-k'| |\log \eps| ^{\infty} \int_{0} ^{c_0 |\log \eps|} \diff t \: v |\dd_t v| \fk ^2
\\ \leq  C \eps |k-k'| |\log \eps| ^{\infty} \bigg[ \int_{0} ^{c_0 |\log \eps|} \diff t \: \fk ^2 v^2 + \int_{0} ^{c_0 |\log \eps|} \diff t \: \fk ^2 (\dd_t v) ^2 \bigg].
}
Inserting this in~\eqref{eq:1D low pre}, using again~\eqref{eq:fal estimate} and dropping a positive term, we finally get
\begin{align}\label{eq:1D low fin}
\eone_\star (k') &\geq \eonek  + |\log\eps|^{-5}  (\alpha(k') - \alpha(k)) ^2 \int_{0} ^{c_0 |\log \eps|} \diff t \: (1-\eps k' t) f_{k'} ^2  \nonumber
\\&+ \frac{1}{ 2b} \int_{0} ^{c_0 |\log \eps|} \diff t \: (1-\eps k' t) \left(\fk ^2 - f_{k'}^2 \right) ^2  - C \eps |k-k'| |\log \eps| ^{\infty}
\end{align}
where we have chosen $\de = |\log \eps | ^{-5}$, which is compatible with the requirement $0 < \de \leq C|\log \eps| ^{-4}$. Combining with the estimate~\eqref{eq:vari 1D energy} that we proved in Step 1 concludes the proof of~\eqref{eq:vari 1D phase}. To get \eqref{eq:vari 1D opt density L 2} one has to use in addition \eqref{eq:fal point l u b}, which guarantees that under the assumptions $1<b<\theo ^{-1}$ 
$$ \lf\| f_{k'} \ri\|_{L^2(\ieps)} \geq C > 0 $$
for some constant $C$ independent of $\eps$.
\end{proof}

To conclude the proof of Proposition~\ref{pro:1D curv} there only remains to discuss~\eqref{eq:vari 1D opt density}. We shall upgrade the estimate~\eqref{eq:vari 1D opt density L 2} to better norms, taking advantage of the 1D nature of the problem and using a standard bootstrap argument. 

\begin{proof}[Proof of Proposition~\ref{pro:1D curv}]
We write $\fk = f_{k'} + (\fk-f_{k'})$ and expand the energy $\eonek = \fonek [\fk]$, using the variational equation~\eqref{eq:var eq fal} for $f_{k'}$:
\begin{align*}
\eonek &\geq \eone_\star (k') +  \int_{\ieps} \diff t (1-\eps k t)|\dd_t (\fk - f_{k'})| ^2 + \int_{\ieps} \diff t (1-\eps k t) \potk (\fk - f_{k'}) ^2 
\\&+ \int_{\ieps} \diff t (1-\eps k t) ( \potk- V_{k'}) f_{k'} ^2 + 2 \int_{\ieps} \diff t (1-\eps k t) f_{k'} (\fk-f_{k'}) (\potk -V_{k'}) 
\\&+\frac{1}{2b}  \int_{\ieps} \diff t (1-\eps k t) \left[ 6 f_{k'} ^2 (\fk-f_{k'}) ^2 + 4 f_{k'} (\fk - f_{k'}) ^3  + (\fk - f_{k'}) ^4 - 2 (\fk - f_{k'}) ^2 \right] 
\\&-C \eps |k-k'| |\log \eps| ^{\infty}
\end{align*}
where the $\OO (\eps |k-k'| |\log \eps| ^{\infty})$ is as before due to the replacement of the curvature $k\leftrightarrow k'$. Using the same procedure to expand $\eone_\star (k') = \fone_{k'} [f_{k'}]$ and combining the result with the above we obtain
\begin{align*}
\eonek &\geq \eonek + 2 \int_{\ieps} \diff t (1-\eps k t)|\dd_t (\fk - f_{k'})| ^2 + \int_{\ieps} \diff t (1-\eps k t) (\potk + V_{k'}) (\fk - f_{k'}) ^2 
\\&+ \int_{\ieps} \diff t (1-\eps k t) ( \potk- V_{k'}) (f_{k'} ^2 - \fk ^2) 
\\&+ 2 \int_{\ieps} \diff t (1-\eps k t) ( f_{k'} (\fk-f_{k'}) - \fk (f_{k'} - \fk)) (\potk -V_{k'}) 
\\&+\frac{1}{2b}  \int_{\ieps} \diff t (1-\eps k t) (\fk-f_{k'}) ^2 \left[ 4 f_{k'} ^2 + 4 \fk ^2 + 4 f_{k'} \fk - 4 \right] 
\\&- C\eps |k-k'| |\log \eps| ^{\infty}.
\end{align*}
Hence it holds
\begin{align}\label{eq:1D improve bounds}
C \eps |k-k'| |\log \eps| ^{\infty} &\geq 2 \int_{\ieps} \diff t (1-\eps k t)|\dd_t (\fk - f_{k'})| ^2 \nonumber
\\&+ \int_{\ieps} \diff t (1-\eps k t) ( \potk- V_{k'}) (\fk ^2 - f_{k'} ^2 ) \nonumber
\\&+ \int_{\ieps} \diff t (1-\eps k t) (\fk-f_{k'}) ^2 \left[ \potk + V_{k'} + \frac{2}{b} \left( f_{k'} ^2 + \fk ^2 + f_{k'} \fk - 2 \right) \right].
\end{align}
Next we note that thanks to~\eqref{eq:vari 1D phase} 
$$ \sup_{\ieps} \left| \potk- V_{k'}\right| \leq C \left(|\alk - \alpha(k')| + \eps |k-k'|\right) |\log \eps| ^{\infty} \leq C \left(\eps |k-k'|\right) ^{1/2} |\log \eps| ^{\infty}$$
as revealed by an easy computation starting from the expression~\eqref{eq:pot}. Thus, using~\eqref{eq:vari 1D opt density L 2} and the Cauchy-Schwartz inequality,
\begin{multline}\label{eq:1D improve diff pot diff f} \left| \int_{\ieps} \diff t (1-\eps k t) ( \potk- V_{k'}) (\fk ^2 - f_{k'} ^2 )\right|\leq \\C |\log \eps | ^{1/2} \sup_{\ieps} \left| \potk- V_{k'}\right| \left\Vert \fk ^2 - f_{k'} ^2 \right\Vert_{L^2 (\ieps)} \leq C \eps |k-k'| |\log \eps| ^{\infty}.
\end{multline}

For the term on the third line of~\eqref{eq:1D improve bounds} we notice that, using the growth of the potentials $\potk$ and $V_{k'}$ for large $t$, the integrand is positive in 
$$\iepst:=\left[c_1(\log |\log \eps|)^{1/2}, c_0 |\log \eps|\right]$$
for any constant $c_1$ and $\eps$ small enough. On the other hand, combining~\eqref{eq:vari 1D opt density L 2} and the pointwise lower bound in~\eqref{eq:fal point l u b} we have  
$$ \left\Vert \fk  - f_{k'} \right\Vert_{L^2 (\iepst)}\leq C \left(\eps |k-k'| \right) ^{1/2} |\log \eps| ^{\infty}.$$
Splitting the integral into two pieces we thus have 
$$\int_{\ieps} \diff t (1-\eps k t) (\fk-f_{k'}) ^2 \left[ \potk + V_{k'} + \tx\frac{2}{b} \left( f_{k'} ^2 + \fk ^2 + f_{k'} \fk - 2 \right) \right] \geq - C \eps |k-k'| |\log \eps| ^{\infty}.$$
Using this and~\eqref{eq:1D improve diff pot diff f} we deduce from~\eqref{eq:1D improve bounds} that  
\begin{equation}\label{eq:1D H1 bound}
 \int_{\ieps} \diff t (1-\eps k t)|\dd_t (\fk - f_{k'})| ^2 \leq C \eps |k-k'| |\log \eps| ^{\infty} 
\end{equation}
and combining with the previous $L ^2$ bound this gives 
$$ \left\Vert \fk  - f_{k'} \right\Vert_{H^1 (\iepst)}\leq C \left(\eps |k-k'| \right) ^{1/2} |\log \eps| ^{\infty}.$$
Since we work on a 1D interval, the Sobolev inequality implies 
\begin{equation}\label{eq:1D L inf bound pre} \left\Vert \fk  - f_{k'} \right\Vert_{L^{\infty} (\iepst)}\leq C \left(\eps |k-k'| \right) ^{1/2} |\log \eps| ^{\infty}.
\end{equation}
In particular 
$$
\left| \fk (c_1(\log |\log \eps|)^{1/2})  - f_{k'}(c_1(\log |\log \eps|)^{1/2}) \right|\leq C \left(\eps |k-k'| \right) ^{1/2} |\log \eps| ^{\infty}.
$$
Then, integrating the bound~\eqref{eq:1D H1 bound} from $c_1(\log |\log \eps|)^{1/2}$ to $c_0 |\log \eps|$ we can extend~\eqref{eq:1D L inf bound pre} to the whole interval $\ieps$:
$$ \left\Vert \fk  - f_{k'} \right\Vert_{L^{\infty} (\ieps)}\leq C \left(\eps |k-k'| \right) ^{1/2} |\log \eps| ^{\infty},$$
which is~\eqref{eq:vari 1D opt density} for $n=0$. The bounds on the derivatives follow by a standard bootstrap argument, inserting the $L ^{\infty}$ bound in the variational equations.
\end{proof}

\subsection{Estimates on auxiliary functions}\label{sec:aux func}

In this Section we collect some useful estimates of other quantities involving the 1D densities as well as the optimal phases. It turns out that we need an estimate of the $k$-dependence of $\dd_t \log (f_k)$, provided in the following

	\begin{pro}[\textbf{Estimate of logarithmic derivatives}]
		\label{pro:est log der}
		\mbox{}	\\
		Let $k,k' \in \R$ be bounded independently of $\eps$ and $1<b<\theo ^{-1}$, then the following holds:
		\beq
			\label{eq:est log der}
			\lf\| \frac{f^{\prime}_k}{f_k} - \frac{f^{\prime}_{k'}}{f_{k'}} \ri\|_{L^{\infty}(\ie)} \leq C \left( \eps |k-k'| \right) ^{1/2} \logi.
		\eeq
	\end{pro}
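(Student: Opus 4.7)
The main difficulty is that the naive estimate via Proposition 3.1 fails in the region where $f_k$ is exponentially small. My plan is to split $\ie = [0,\btik] \cup [\btik,\teps]$ and treat the two regions by different methods.

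On the bulk region $[0,\btik]$, the results recalled in Appendix~A guarantee $f_k, f_{k'} \geq c > 0$. The algebraic identity
\[
\frac{f^{\prime}_k}{f_k} - \frac{f^{\prime}_{k'}}{f_{k'}} = \frac{f^{\prime}_k - f^{\prime}_{k'}}{f_{k'}} - \frac{f^{\prime}_k\,(f_k - f_{k'})}{f_k\, f_{k'}},
\]
combined with the uniform bounds on $f_k$ and $f^{\prime}_k$ and the $L^{\infty}$ estimates $\|f_k - f_{k'}\|_{\infty}, \|f^{\prime}_k - f^{\prime}_{k'}\|_{\infty} = \OO((\eps|k-k'|)^{1/2}\logi)$ from Proposition~\ref{pro:1D curv}, yields the claimed bound immediately.

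On the decay region $[\btik,\teps]$ I would use a Riccati-type argument. The Euler--Lagrange equation for $f_k$ rewrites as the first-order ODE
\[
u^{\prime}_k + u_k^2 - \frac{\eps k}{1-\eps k t}\, u_k = \potk - \frac{1}{b}(1-f_k^2)
\]
for $u_k := f^{\prime}_k/f_k$. Subtracting the analogous identity at $k'$ gives a linear first-order equation for $w := u_k - u_{k'}$, namely
\[
w^{\prime} + \Bigl(u_k + u_{k'} - \frac{\eps k}{1-\eps k t}\Bigr)\, w = R(t),
\]
with the Neumann-endpoint condition $w(\teps) = 0$ and
\[
R(t) = (\potk - V_{k'}) + \frac{1}{b}(f_k^2 - f_{k'}^2) - \frac{\eps(k-k')}{(1-\eps k t)(1-\eps k' t)}\, u_{k'}.
\]
The crucial observation is that $u_k = \partial_t \log f_k$, so the integrating factor admits the explicit form
\[
\exp\Bigl(\int_t^s \Bigl[u_k + u_{k'} - \tfrac{\eps k}{1-\eps k \tau}\Bigr]\, \diff \tau\Bigr) = \frac{f_k(s)\,f_{k'}(s)\,(1-\eps k s)}{f_k(t)\,f_{k'}(t)\,(1-\eps k t)} = \frac{f_k(s)\,f_{k'}(s)}{f_k(t)\,f_{k'}(t)}\bigl(1 + \OO(\eps\logi)\bigr).
\]
Using Proposition~\ref{pro:1D curv} and the a priori bound $|u_{k'}| \leq C \logi$ (which follows from the Gaussian tail of $f_{k'}$), one checks $|R(s)| \leq C(\eps|k-k'|)^{1/2}\logi$, the dominant term $\potk - V_{k'}$ being saturated by the half-power phase estimate \eqref{eq:vari 1D phase}.

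The main obstacle is then to show that
\[
\frac{1}{f_k(t)\,f_{k'}(t)}\int_t^{\teps} f_k(s)\,f_{k'}(s)\,\diff s \leq C\, \logi
\]
uniformly for $t \in [\btik,\teps]$. This is where the Agmon-type decay estimates from Appendix~\ref{sec:app} enter: they provide a comparison $f_k(s)/f_k(t) \lesssim \exp\bigl(-\int_t^s \sqrt{\potk}\bigr)$ (and similarly for $f_{k'}$) in this region, with $\sqrt{\potk(\tau)} \sim \tau$, so the integrand is essentially $\exp(-(s^2-t^2))$ and the quotient is at worst polynomial in $|\log\eps|$. Substituting into the explicit representation of $w$ via the integrating factor yields $|w(t)| \leq C(\eps|k-k'|)^{1/2}\logi$ on $[\btik,\teps]$, which combined with the bulk estimate concludes the proof.
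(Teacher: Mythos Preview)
Your proof has a genuine gap in the ``bulk'' step: the claim that $f_k, f_{k'} \geq c > 0$ on $[0,\btik]$ is false. Recall from~\eqref{eq:annb}--\eqref{eq:bxi} that $\btik = \teps + \OO(\log|\log\eps|)$, so $[0,\btik]$ is essentially the whole interval, and $f_k(\btik) = |\log\eps|^3 f_k(\teps)$ is exponentially small in $\eps$. The algebraic identity you write therefore gives no control over most of this region. The paper's analogous ``easy'' region is instead $\{t: f_k(t) \geq |\log\eps|^{-M}\}$, whose right endpoint $t_*$ is only of order $(\log|\log\eps|)^{1/2}$; this is the correct splitting point for your argument.

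That said, your Riccati/integrating-factor approach on the tail is a genuinely different route from the paper's, and once the splitting is corrected it works. The paper argues by a maximum principle: if $|g|$ attains its supremum at an interior point $t_0$ of the hard region, the condition $g'(t_0)=0$ combined with the variational equations forces either $|u_k(t_0)|+|u_{k'}(t_0)| \gtrsim |\log\eps|^{-2}$ (which immediately bounds $g(t_0)$) or else this smallness of $|u_k|+|u_{k'}|$ propagates to all of $[t_0,\teps]$, whereupon one integrates $g'$ directly. Your method replaces this dichotomy by the explicit representation of $w$ via the integrating factor $f_k f_{k'}(1-\eps k t)$, which is arguably cleaner.

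One further remark: the WKB-type comparison $f_k(s)/f_k(t) \lesssim \exp\bigl(-\int_t^s \sqrt{V_k}\bigr)$ you invoke for the key integral bound is not among the results recalled in Appendix~\ref{sec:app}, and the two-sided Gaussian bounds in~\eqref{eq:fal point l u b} have mismatched centers, so they do not yield a useful ratio estimate directly. What does suffice, and is already available, is the monotonicity of $f_k$ for $t \geq -\alk + b^{-1/2} + \OO(\eps)$ from Proposition~\ref{pro:min fone}: on that region $f_k(s)f_{k'}(s) \leq f_k(t)f_{k'}(t)$ for $s\geq t$, so the integral is at most $c_0|\log\eps|\cdot f_k(t)f_{k'}(t)$, while for $t$ below this fixed threshold the lower bound $f_k(t)\geq c$ from~\eqref{eq:fal point l u b} handles the ratio trivially. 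With this in hand your representation gives $|w(t)| \leq C|\log\eps|\sup|R| \leq C(\eps|k-k'|)^{1/2}\logi$ on the entire interval, and the bulk/tail split becomes unnecessary.
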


	\begin{proof}
		Let us denote for short
		\beq
			g(t) : = \frac{f^{\prime}_k (t)}{f_k (t)}- \frac{f^{\prime}_{k'} (t)}{f_{k'} (t)}.
		\eeq
		We first notice that the estimate is obviously true in the region where $ f_k \geq |\log\eps|^{-M} $ for any $ M > 0 $ finite, thanks to \eqref{eq:vari 1D opt density} and \eqref{eq:fal derivative}:
		\bmln{
			|g(t)| \leq \frac{\lf| f^{\prime}_k - f^{\prime}_{k'}\ri|}{f_k} + \frac{\lf|f^{\prime}_{k'} \ri| \lf| f_k - f_{k'} \ri|}{f_k f_{k'}} \leq |\log\eps|^M \lf| f^{\prime}_k - f^{\prime}_{k'}\ri| + |\log\eps|^{M+3} \lf| f_k - f_{k'} \ri|	\\
			 \leq C \left( \eps |k-k'| \right) ^{1/2} \logi.
		}
		Let $ t_* $ be the unique solution to $ f_k(t_*) = |\log\eps|^{-M} $ (uniqueness follows from the properties of $ f_k $ discussed in Proposition \ref{pro:min fone}). To complete the proof it thus suffices to prove the estimate in the region $ [t_*, c_0 |\log\eps|] $. Notice also that thanks to \eqref{eq:fal point l u b}, it must be that $ t_* \to \infty $ when $\eps \to 0$.

		At the boundary of the interval $ [t_*, \teps] $ (recall \eqref{eq:def interval}), one has
		\beq
			g(t_*) = \OO\lf(\left( \eps |k-k'| \right) ^{1/2} |\log\eps|^{M}\ri),		\qquad	g(\teps) = 0,
		\eeq
		because of Neumann boundary conditions. Hence if the supremum of $ |g| $ is reached at the boundary there is nothing to prove. Let us then assume that $ \sup_{t \in [t_*,\teps]} |g| = |g(t_0)| $, for some $ t_* < t_0 < \teps $, such that $ g'(t_0) = 0 $, i.e.,
		\beq
			\label{log der step 0}
			\frac{f^{\prime\prime}_k(t_0)}{f_k(t_0)} - \frac{f^{\prime\prime}_{k'}(t_0)}{f_{k'}(t_0)} + \frac{\lf(f^{\prime}_k(t_0)\ri)^2}{f_k^2(t_0)} - \frac{\lf(f^{\prime}_{k'}(t_0)\ri)^2}{f^2_{k'}(t_0)} = 0.
		\eeq
		Since $ f_k $ and $ f_{k'} $ are both decreasing in $ [t_*,\teps] $ (see again Proposition \ref{pro:min fone}) we also have
		\beq
 			\label{log der step 1}	
			 \frac{\lf(f^{\prime}_k(t_0)\ri)^2}{f_k^2(t_0)} - \frac{\lf(f^{\prime}_{k'}(t_0)\ri)^2}{f^2_{k'}(t_0)} = \lf[ \frac{\lf|f^{\prime}_k(t_0)\ri|}{f_k(t_0)} + \frac{\lf|f^{\prime}_{k'}(t_0)\ri|}{f_{k'}(t_0)} \ri] g(t_0).
		\eeq
		The variational equations satisfied by $ f_k$ and $ f_{k'} $ on the other hand imply 
		\bml{
 			\label{log der step 2}
 			\bigg| \frac{f^{\prime\prime}_k(t_0)}{f_k(t_0)} - \frac{f^{\prime\prime}_{k'}(t_0)}{f_{k'}(t_0)} \bigg| = \bigg| \frac{\eps k f^{\prime}_k(t_0)}{(1 - \eps k t) f_k(t_0)} - \frac{\eps k' f^{\prime}_{k'}(t_0)}{(1 - \eps k' t) f_{k'}(t_0)} + V_k(t_0) - V_{k'}(t_0)	\\
			 - \frac{1}{b} \lf( f_k^2(t_0) - f_{k'}^2(t_0) \ri) \bigg| \leq C \lf[\lf(\eps|k - k'|\ri)^{1/2} \logi + \eps |g(t_0)| \ri],
		}
		thanks to \eqref{eq:vari 1D phase} and \eqref{eq:vari 1D opt density}. For the first two terms the estimate \eqref{eq:fal derivative} has also been used for the derivatives $ f_k^{\prime} $ and $ f_{k'}^{\prime} $:
		\bmln{
			 \frac{\eps k f^{\prime}_k(t_0)}{(1 - \eps k t) f_k(t_0)} - \frac{\eps k' f^{\prime}_{k'}(t_0)}{(1 - \eps k' t) f_{k'}(t_0)} = \OO(\eps) g(t_0) + \frac{f^{\prime}_{k'}(t_0)}{f_{k'}(t_0)} \lf( \frac{\eps k}{1 - \eps k t} - \frac{\eps k'}{1 - \eps k' t} \ri)	\\
			 = \OO(\eps) g(t_0) + \OO(\eps|k - k'|).
		}
	
		Plugging \eqref{log der step 1} and \eqref{log der step 2} into \eqref{log der step 0}, we get the estimate
		\beq
			\lf[ \frac{\lf|f^{\prime}_k(t_0)\ri|}{f_k(t_0)} + \frac{\lf|f^{\prime}_{k'}(t_0)\ri|}{f_{k'}(t_0)} + \OO(\eps) \ri] g(t_0) = \OO\lf(\lf(\eps|k - k'|\ri)^{1/2}\logi\ri).
		\eeq
		Now if
		\bdm
			\frac{\lf|f^{\prime}_k(t_0)\ri|}{f_k(t_0)} + \frac{\lf|f^{\prime}_{k'}(t_0)\ri|}{f_{k'}(t_0)} \geq |\log\eps|^{-2},
		\edm
		the result follows immediately. Therefore we can assume that
		\beq
			\frac{\lf|f^{\prime}_k(t_0)\ri|}{f_k(t_0)} + \frac{\lf|f^{\prime}_{k'}(t_0)\ri|}{f_{k'}(t_0)} \leq |\log\eps|^{-2},
		\eeq
		but we claim that this also implies  
		\beq
			\label{log der step 3}
			\frac{\lf|f^{\prime}_k(t)\ri|}{f_k(t)} + \frac{\lf|f^{\prime}_{k'}(t)\ri|}{f_{k'}(t)} \leq |\log\eps|^{-2} \mbox{ for any } t \in [t_0,\teps]. 
		\eeq
		Indeed, setting 
		$$ h_k(t) : = - f^\prime_k(t)/f_k(t),$$
		a simple computation involving the variational equation \eqref{eq:var eq fal} yields
		\bdm
			h_k^{\prime}(t) = - \frac{\eps k f^{\prime}_k(t)}{(1 - \eps k t) f_k(t)} - V_k(t) + \frac{1}{b} \lf(1 - f_k^2(t) \ri) + h_k^2(t) = - V_k(t) + h_k^2(t) + \OO(1),
		\edm
		using~\eqref{eq:fal derivative} again. Hence $ h_k^{\prime}(t_0) < 0 $, since $ V_k(t_0) \gg 1 $, which follows from $ t_0 > t_* \gg 1 $, and therefore~\eqref{log der step 3} holds. An identical argument applies to $ h_{k'} $ and thus to the sum 
		$$ h_k + h_{k'} = : h.$$
		Finally, the explicit expression of $ g^{\prime}(t) $ in combination with~\eqref{log der step 3} gives for $ t \geq t_0 $
		\bml{
 			|g(t)| = \bigg| \int_{t}^{\teps} \diff \eta \: g^{\prime}(\eta) \bigg| \leq \int_{t}^{\teps} \diff \eta \: \lf[ \lf(h(\eta) + \OO(\eps) \ri) \lf| g(\eta) \ri| + \OO\lf(\lf(\eps|k - k'|\ri)^{1/2}\logi\ri) \ri]	\\
			\leq C |\log\eps|^{-1} \sup_{t \in [t_0,\teps]} |g(t)| + \OO\lf(\lf(\eps|k - k'|\ri)^{1/2}\logi\ri),
		}
		which implies the result.
	\end{proof}

	The above estimate is mainly useful in providing bounds on quantities of the form
	\beq
		\label{eq:def ijj first}
		\ikk(t) : = F_k(t)  - F_{k'}(t) \frac{f_k ^2 (t)}{f_{k'} ^2(t)},
	\eeq
	alluded to  in Subsection~\ref{sec:sketch}. As announced there, the main difficulty is that we need to show that $\ikk$ is small \emph{relatively to} $f_k ^2$, which is the content of the following corollary. We need the following notation
	\beq
			[0,\tk] : = \lf\{ t : f_k(t) \geq |\log\eps|^3 f_k(\teps) \ri\}.
		\eeq
		Note that the monotonicity for large $ t $ of $ f_k $ guarantees that the above set is indeed an interval and that
		\beq
			\label{eq:tkk}
			\tk = \teps + \OO(\log|\log\eps|).
		\eeq

	\begin{cor}[\textbf{Estimates on the correction function}]
		\label{cor:est log cost}
		\mbox{}	\\
		Under the assumptions of Proposition \ref{pro:est log der}, it holds
		\beq
			\label{eq:est log cost}
			\sup_{t \in [0,\teps]} \bigg| \frac{\ikk}{f_k^2} \bigg| \leq C \left( \eps |k-k'| \right) ^{1/2} \logi
		\eeq
		and, setting $ \bteps : = \min\lf\{ \tk, \tkk \ri\} $,
		\begin{equation}
 		\label{eq:sup est der ijj}
		\sup_{t \in [0,\bteps]} \bigg| \frac{\partial_t \ikk}{f_k^2} \bigg| \leq C \left( \eps |k-k'| \right) ^{1/2} \logi.
		\end{equation}

	\end{cor}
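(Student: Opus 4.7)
The plan is to write $\ikk/f_k^2 = G_k - G_{k'} =: D$ with $G_k := F_k/f_k^2$, and to exploit a linear first-order ODE satisfied by $D$. Direct differentiation, using $F_k' = 2\mathcal{W}_k f_k^2$ with $\mathcal{W}_k(t) := (t+\alk-\tfrac{1}{2}\eps k t^2)/(1-\eps k t)$ and $h_k := -f_k'/f_k$, gives $G_k' = 2\mathcal{W}_k + 2 G_k h_k$; subtracting the analogous identity at $k'$ yields
\[ D'(t) - 2 h_{k'}(t)\, D(t) \;=\; R(t), \qquad R(t) := 2(\mathcal{W}_k - \mathcal{W}_{k'}) + 2 G_k(h_k - h_{k'}), \]
with the two-sided boundary conditions $D(0)=0$ (trivial) and $D(\teps)=0$. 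The latter is crucial and follows from the optimality of $\alk$: since $\alk$ minimizes $\alpha\mapsto\fonekal[\fkal]$, taking the derivative in $\alpha$ yields $F_k(\teps)=0$, and similarly for $k'$.

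The next step is to show $|R(t)|\leq C(\eps|k-k'|)^{1/2}\logi$ uniformly on $\ie$. A direct expansion of $\mathcal{W}_k - \mathcal{W}_{k'}$ combined with the phase estimate \eqref{eq:vari 1D phase} handles the first piece. For the second piece, the positivity of the cost function (Lemma~\ref{lem:K positive}) together with $F_k\leq 0$ (Lemma~\ref{lem:F prop}) provides the key a priori bound $|G_k|\leq 1$ in the bulk of $\ie$, and Proposition~\ref{pro:est log der} controls $h_k - h_{k'}$; the boundary layer where $K_k$ may fail to be positive is absorbed using the pointwise decay estimates of Lemma~\ref{lem:point est fal}.

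Integrating the ODE with the factor $\mu(t) := \exp(-2\int_0^t h_{k'}) = f_{k'}^2(t)/f_{k'}^2(0)$ produces the two representations
\[ D(t) \;=\; \frac{1}{\mu(t)}\int_0^t \mu(s) R(s)\,\diff s \;=\; -\frac{1}{\mu(t)}\int_t^{\teps} \mu(s) R(s)\,\diff s, \]
the second being available precisely because $D(\teps)=0$. In the bulk region $t=\OO(1)$, where $\mu(t)\geq c>0$, the first representation gives $|D(t)|\leq C(\eps|k-k'|)^{1/2}\logi\cdot t$. Past the critical scale $-\alpha(k')$, where $f_{k'}$ is monotonically decreasing (Proposition~\ref{pro:min fone}), one has $\mu(s)\leq\mu(t)$ for $s\geq t$, so the second representation gives $|D(t)|\leq C(\eps|k-k'|)^{1/2}\logi\cdot(\teps-t)$. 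Taking the sharper of the two bounds at each point, and absorbing the resulting $|\log\eps|$ factor into $\logi$, proves \eqref{eq:est log cost}.

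For \eqref{eq:sup est der ijj}, a direct computation of $\partial_t\ikk$ via the product rule yields
\[ \frac{\partial_t\ikk}{f_k^2} \;=\; 2(\mathcal{W}_k - \mathcal{W}_{k'}) \;-\; 2 G_{k'}\!\left(\frac{f_k'}{f_k}-\frac{f_{k'}'}{f_{k'}}\right); \]
on $[0,\bteps]\subset [0,\tkk]$ the cost function $K_{k'}$ is positive, hence $|G_{k'}|\leq 1$, and both factors on the right are $\OO\bigl((\eps|k-k'|)^{1/2}\logi\bigr)$ by the above arguments. The main obstacle throughout is the $L^\infty$ estimate \eqref{eq:est log cost}: a naive forward integration loses control through the factor $1/\mu(t)$ in the tail region where $f_{k'}$ is exponentially small, and it is the additional endpoint condition $D(\teps)=0$, inherited from the optimality of $\alpha(k')$, that makes the backward representation available and rescues the bound in the outer part of the boundary layer.
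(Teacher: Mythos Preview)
Your proof is correct and the derivative estimate \eqref{eq:sup est der ijj} is handled exactly as in the paper. For the main estimate \eqref{eq:est log cost}, however, your route genuinely differs from the paper's, and it is worth recording the comparison.

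The paper writes $G_k(t)-G_{k'}(t)$ directly as the difference of two backward integrals,
\[
G_k(t)-G_{k'}(t) = -\int_t^{\teps}\Big[2\mathcal{W}_k(\eta)\,\tfrac{f_k^2(\eta)}{f_k^2(t)} - 2\mathcal{W}_{k'}(\eta)\,\tfrac{f_{k'}^2(\eta)}{f_{k'}^2(t)}\Big]\diff\eta,
\]
and then compares the ratios $f_k^2(\eta)/f_k^2(t)$ and $f_{k'}^2(\eta)/f_{k'}^2(t)$ by exponentiating the log-derivative estimate of Proposition~\ref{pro:est log der}. Your approach instead packages everything into a single linear ODE for $D$, bounds the source $R$ once and for all, and then plays the forward and backward integrating-factor representations against each other depending on the monotonicity of $f_{k'}$. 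Both arguments rely on the same three ingredients (Proposition~\ref{pro:est log der}, the phase estimate \eqref{eq:vari 1D phase}, and the endpoint condition $F_k(\teps)=F_{k'}(\teps)=0$), so neither is more powerful; yours is a bit more systematic, the paper's a bit more direct. Two minor points: your appeal to ``decay estimates'' for $|G_k|$ outside $[0,\tk]$ should be made explicit (one gets $|G_k|\leq C\logi$ there from $F_k(\teps)=0$, the bound $|\mathcal{W}_k|\leq C|\log\eps|$, and the monotonicity of $f_k$), and the monotonicity of $f_{k'}$ begins at $-\alpha(k')+b^{-1/2}+\OO(\eps)$ rather than $-\alpha(k')$ (Proposition~\ref{pro:min fone}); both thresholds are $\OO(1)$, so this does not affect the argument.
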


	\begin{proof}
		We write 
		$$
		\frac{\ikk(t)}{f_k ^2 (t)} = \frac{F_k (t)}{f_k ^2 (t)} - \frac{F_{k'} (t)}{f_{k'} ^2 (t)}
		$$
		Using the definition of the potential function \eqref{eq:Fk} and its properties \eqref{F prop}, we can rewrite
		\bml{
 			\label{F ratio step 1}
			\frac{F_k(t)}{f^2_k(t)} - \frac{F_{k'}(t)}{f^2_{k'}(t)} = - \int_t^{\teps} \diff \eta \bigg[ b_k(\eta) \frac{f_k^2(\eta)}{f_k^2(t)} - b_{k'}(\eta) \frac{f_{k'}^2(\eta)}{f_{k'}^2(t)} \bigg]	\\
			= \int_t^{\teps} \diff \eta \bigg[ b_k(\eta) \bigg(\frac{f_{k'}^2(\eta)}{f_{k'}^2(t)} - \frac{f_k^2(\eta)}{f_k^2(t)}  \bigg) + \lf(b_{k'}(\eta) -  b_{k}(\eta)\ri) \frac{f_{k'}^2(\eta)}{f_{k'}^2(t)} \bigg].
		}
	
		We first observe that for any $ \eta \geq t $
		\beq
			\label{F ratio step 2}
			\frac{f_{k'}(\eta)}{f_{k'}(t)} \leq C,
		\eeq
		as it easily follows by combining the monotonicity of $ f_k $ for $ t $ large with its strict positivity close to the origin (see Proposition \ref{pro:min fone} and Lemma \ref{lem:point est fal} for the details). Hence we can bound the last term on the r.h.s. of \eqref{F ratio step 1} as
		\beq
			\label{F ratio step 3}
			\bigg| \int_t^{\teps} \diff \eta \: \lf(b_{k'}(\eta) -  b_{k}(\eta)\ri) \frac{f_{k'}^2(\eta)}{f_{k'}^2(t)} \bigg| \leq C |\log\eps| \lf\| b_{k'}-  b_{k} \ri\|_{L^{\infty}(\ie)} = \OO \lf( \left( \eps |k-k'| \right) ^{1/2} \logi \ri),
		\eeq
		since by \eqref{eq:vari 1D phase}
		\bdm
			 b_{k'}(t) -  b_{k}(t) = \lf(1 + \OO(\eps) \ri) \lf( \OO(\eps|k - k'|t^2)  + \al(k) - \al(k') \ri) =   \OO \lf( \left( \eps |k-k'| \right) ^{1/2} \logi \ri).
		\edm

		For the first term on the r.h.s. of \eqref{F ratio step 1} we exploit the estimate
		\bdm
			\frac{f_{k'}(\eta)}{f_{k'}(t)} - \frac{f_k(\eta)}{f_k(t)} =  \OO \lf( \left( \eps |k-k'| \right) ^{1/2} \logi \ri),
		\edm
		which can be proven by writing
		\bdm
			\frac{f_k(\eta)}{f_k(t)} = \exp\lf\{ \int_t^{\eta} \diff \tau \: \frac{f_k^{\prime}(\tau)}{f_k(\tau)} \ri\},
		\edm	
		which implies
		\bml{
 			\label{F ratio step 4}
 			\lf| \frac{f_{k'}(\eta)}{f_{k'}(t)} - \frac{f_k(\eta)}{f_k(t)} \ri| =  \frac{f_{k'}(\eta)}{f_{k'}(t)} \lf| 1 - \exp\lf\{ \int_t^{\eta} \diff \tau \: \bigg[ \frac{f_k^{\prime}(\tau)}{f_k(\tau)} - \frac{f_{k'}^{\prime}(\tau)}{f_{k'}(\tau)} \bigg] \ri\} \ri| 	\\
			\leq C  \int_t^{\eta} \diff \tau \: \bigg| \frac{f_k^{\prime}(\tau)}{f_k(\tau)} - \frac{f_{k'}^{\prime}(\tau)}{f_{k'}(\tau)} \bigg| \exp\lf\{ \int_t^{\eta} \diff \tau \bigg| \frac{f_k^{\prime}(\tau)}{f_k(\tau)} - \frac{f_{k'}^{\prime}(\tau)}{f_{k'}(\tau)} \bigg| \ri\}
			 \leq  C \left( \eps |k-k'| \right) ^{1/2} \logi,
		}
		where we have used \eqref{F ratio step 2}, the estimate $ |1 - e^{\delta}| \leq |\delta| e^{|\delta|} $, $ \delta \in \R $, and \eqref{eq:est log der}. Putting together \eqref{F ratio step 1} with \eqref{F ratio step 3} and \eqref{F ratio step 4}, we conclude the proof of~\eqref{eq:est log cost}. 
		
		To obtain~\eqref{eq:est log der} we first note that since $F_k' (t) \leq 0$,  the positivity of $ K_{k} $ in $ [0, \tk] $ recalled in Lemma~\ref{lem:K positive} ensures that 
		$$ \left|\frac{F_{k} (t)}{f_k ^2 (t)} \right|\leq 1$$
		in $ [0, \tk] $. Then we may use~\eqref{eq:est log der} again to estimate 
		\begin{multline*}
		\sup_{t \in [0,\bteps]} \bigg| \frac{\partial_t \ikk}{f_k^2} \bigg| = \sup_{t \in [0,\bteps]} \bigg[ \lf| \lf(1 - \eps k t \ri) b_k - \lf(1 - \eps k' t)\ri) b_{k'} \ri| + 2 \bigg| \frac{F_{k'}}{f_{k'}^2} \bigg| \bigg| \frac{f_k^{\prime}}{ f_{k}} - \frac{f_{k'}^{\prime}}{f_{k'}} \bigg| \bigg]	\\
		\leq  C \left( \eps |k-k'| \right) ^{1/2} \logi,
	\end{multline*}
	and the proof is complete.
	\end{proof}

\section{Energy Upper Bound}\label{sec:up bound}

We now turn to the proof of the energy upper bound corresponding to~\eqref{eq:energy GL}, namely we prove the following: 

\begin{pro}[\textbf{Upper bound to the full GL energy}]\label{pro:up bound}\mbox{}\\
Let $1<b<\theo ^{-1}$ and $\eps$ be small enough. Then it holds 
\begin{equation}\label{eq:up bound GL}
\glee \leq \frac{1}{\eps} \int_{0} ^{|\dd \Om|} \diff s \: \eone_\star (k(s)) + C \eps |\log \eps| ^{\infty}
\end{equation}
where $s\mapsto k(s)$ is the curvature function of the boundary $\dd \Om$ as a function of the tangential coordinate.
\end{pro}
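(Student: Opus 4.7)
The plan is to construct an explicit trial pair $(\trial, \atrial)$ whose GL energy matches the claimed right-hand side. For the magnetic potential I would choose $\atrial$ so that $\curl \atrial \equiv 1$ throughout $\Om$ (the standard Coulomb-like choice producing the applied field exactly), which kills the last term of \eqref{eq:GL func eps}. The wavefunction $\trial$ is supported in the boundary layer $\annt$ of \eqref{eq:intro ann} and set to zero elsewhere; exponential decay of the 1D reference profiles $\fkstari$ ensures that the truncation at normal distance $c_0 \eps |\log\eps|$ only contributes $\OO(\eps^{\infty})$. After passing to boundary coordinates $(s,t)$ with $t$ rescaled by $\eps$ and factoring out the change-of-variables gauge phase, the computation reduces to evaluating the effective functional $\annf$ from \eqref{eq:intro GL func bound} on a state $\psit(s,t)$ defined on the rectangle $\ann$.

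On each cell $\cellj = [s_n,s_{n+1}] \times [0, c_0|\log\eps|]$ the natural local choice is the 1D optimizer $\psi_n(s,t) = \fkstari(t) \exp\lf\{ -i \lf( \eps^{-1} \alj - \eps \deps \ri) s \ri\}$, as in \eqref{eq:rough trial state}. These pieces do not match at $s = s_{n+1}$ because $\fkstari \neq \fkstariplus$ and $\alj \neq \aljj$. To produce an admissible $H^1$, $s$-periodic state I would insert a thin transition zone of width $\delta$, with $\eps \ll \delta \ll \spac$, centered on each $s_{n+1}$, and within each zone linearly interpolate between $(\fkstari, \alj)$ and $(\fkstariplus, \aljj)$; the global phase is then $\exp\lf\{ -i \eps^{-1} \int_0^s \tilde\alpha(s') \diff s' + i \eps \deps s \ri\}$ with $\tilde\alpha$ the piecewise constant, linearly interpolated phase parameter. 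Since $|\kj - k_{n+1}| = \OO(\eps)$ (as $\spac = \OO(\eps)$ and $k$ is smooth), the continuity estimates \eqref{eq:vari 1D phase}--\eqref{eq:vari 1D opt density} give
\begin{equation*}
 |\alj - \aljj| + \lf\| \fkstari - \fkstariplus \ri\|_{L^{\infty}(\ieps)} + \lf\| \fkstari' - \fkstariplus' \ri\|_{L^{\infty}(\ieps)} = \OO(\eps \logi),
\end{equation*}
so both the interpolated profile and the interpolated phase stay pointwise close to their cell-wise values. The choice \eqref{eq:deps} of $\deps$ ensures that the total phase can be made single-valued around $\dd \Om$ (up to an integer shift that is absorbed in the construction).

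Next I would evaluate $\annf[\psit]$ cell by cell. In the bulk portion of $\cellj$ the integrand coincides with that of $\fone_{\kj,\alj}$ evaluated on its minimizer, up to an $\OO(\eps)$ error per unit $s$ from replacing $k(s)$ by $\kj$ (controlled by smoothness of $k$ together with \eqref{eq:vari 1D energy}), so the bulk contribution is $\spac \, \eonek(\kj) + \OO(\eps \spac \logi)$. In each transition zone the new contributions are the $s$-kinetic cost $\eps^2 |\partial_s \psit|^2$ produced by the interpolation and the variation of the potential term along $\tilde\alpha(s)$; balancing these against $\delta$ and using the above pointwise bounds yields a per-transition cost of order $\OO(\eps^2 \logi)$ for an appropriate choice of $\delta$. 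Summing over the $\neps = \OO(\eps^{-1})$ cells recognizes $\sum_n \spac \, \eonek(\kj)$ as a Riemann sum of step $\OO(\eps)$ for $\int_0^{|\partial\Om|} \eonek(k(s)) \diff s$, the Riemann error being again controlled by \eqref{eq:vari 1D energy}. Since $\annf[\psit]$ equals $\eps \glee$ up to negligible corrections from the boundary-layer reduction and vector-potential replacement (standard arguments, cf.\ \cite{FH-book}), dividing by $\eps$ yields \eqref{eq:up bound GL}.

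The main obstacle is the careful control of the transition zones: one must choose $\delta$ so that both the $\eps^2|\partial_s\psit|^2$ cost and the mismatch in the potential term remain within the $\OO(\eps\logi)$ budget for $\glee$. This balance is only possible thanks to the strong pointwise continuity estimates of Proposition~\ref{pro:1D curv} (including the derivative control), which ensure that the interpolation is essentially the identity on the $\OO(\eps\logi)$ scale; without them the gluing would cost energy comparable to, or larger than, the full subleading correction we seek to extract. A secondary technical issue is the enforcement of global single-valuedness of $\psit$ around $\dd \Om$: the sum of phase jumps across cell boundaries must be a multiple of $2\pi$, which is engineered through \eqref{eq:deps} together with a small global phase adjustment distributed across the $\neps$ transition zones.
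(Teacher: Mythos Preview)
Your strategy is the paper's strategy: reduce to the boundary functional $\annf$, build a cell-wise trial state from the local 1D optimizers $(\fkstari,\alj)$, glue the pieces together using the continuity estimates of Proposition~\ref{pro:1D curv}, and recognize a Riemann sum. The paper's construction (Section~\ref{sec:trial state}) is precisely this, so your plan is sound in outline.

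There is, however, a concrete problem with your gluing. You propose transition zones of width $\delta$ with $\eps \ll \delta \ll \spac$, but the cells have $\spac \propto \eps$ (this is forced: replacing $k(s)$ by $k_n$ costs $\OO(\eps \spac \logi)$ per unit $s$-length, and only $\spac = \OO(\eps)$ keeps the total within the $\OO(\eps^2\logi)$ budget for $\annf$). So the window $\eps \ll \delta \ll \spac$ is empty. Moreover, the $s$-kinetic cost of interpolating the density over width $\delta$ is $\eps^2\,(\eps\logi/\delta)^2$ pointwise, hence $\OO(\eps^4\logi/\delta)$ per transition and $\OO(\eps^3\logi/\delta)$ after summing $\neps\sim\eps^{-1}$ cells; keeping this $\leq \OO(\eps^2\logi)$ forces $\delta \gtrsim \eps$, i.e.\ $\delta$ comparable to $\spac$, not thin. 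The paper resolves this by simply interpolating across the \emph{entire} cell: the density is $g = \fkstari + \chi_n$ with $\chi_n$ linear in $s$ on all of $[s_n,s_{n+1}]$ (see~\eqref{eq:choice chi}), giving $|\dd_s \chi_n| = \OO(\logi)$ and a per-cell cost $\OO(\eps^3\logi)$ as required. For the phase the paper is also slightly cleaner than your interpolation of $\tilde\alpha$: it takes $\dd_s \Sloc \equiv \alj$ exactly on $\cellj$ (so $\Sloc$ is continuous piecewise linear, no transition needed in the phase derivative) and enforces periodicity by a separate global factor $\Sglob$ with $|\dd_s\Sglob|\leq C\eps$; the resulting deviation of $\dd_s S$ from $\alj$ is then handled via the optimality condition~\eqref{eq:FH nonlinear}, which kills the first-order term (Step~2 in the proof of Lemma~\ref{lem:up bound bound}). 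With these adjustments your argument goes through verbatim.
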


This result is proven as usual by evaluating the GL energy of a trial state having the expected physical features. As is well-known~\cite{FH-book}, such a trial state should be concentrated along the boundary of the sample, and the induced magnetic field should be chosen close to the applied one. Before entering the heart of the proof, we briefly explain how these considerations allow us to reduce to the proof of an upper bound to the reduced functional~\eqref{eq:GL func bound}. We define 
\begin{equation}\label{eq:GL func bound inf}
\anne := \inf\left\{ \annf [\psi], \psi (0,t) = \psi (|\dd \Om|,t) \right\},
\end{equation}
the infimum of the reduced functional under periodic boundary conditions in the tangential direction and prove

\begin{lem}[\textbf{Reduction to the boundary functional}]\label{lem:up bound}\mbox{}\\
Under the assumptions of Proposition~\ref{pro:up bound}, it holds 
\begin{equation}\label{eq:up GL-GL bound}
\glee \leq \frac{1}{\eps} \anne + C \eps ^{\infty}.
\end{equation}
\end{lem}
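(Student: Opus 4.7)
The plan is to construct a trial pair $(\trial, \atrial)$ for the full functional $\glfe$ out of a near-optimal $\psi$ for $\annf$ and to show that $\glfe[\trial, \atrial] \leq \eps^{-1} \annf[\psi] + C\eps^{\infty}$. Since $\glee \leq \glfe[\trial, \atrial]$ by definition, and since we may pick $\psi$ with $\annf[\psi] \leq \anne + \eps^{\infty}$, this yields~\eqref{eq:up GL-GL bound}.

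First I would fix a vector potential $\atrial$ with $\curl \atrial \equiv 1$ in $\Om$, so that the magnetic energy term $\hex \eps^{-4}\int_{\Om}|\curl \atrial - 1|^2$ vanishes identically. By gauge invariance the precise choice is immaterial, but it is convenient to work in the gauge adapted to boundary coordinates: in a tubular neighbourhood $\T$ of $\partial\Om$ of width $\asymp\eps|\log\eps|$, with coordinates $(s,\tau)$, $\tau=\dist(\rv,\partial\Om)$, the standard construction (see \cite[Appendix F]{FH-book}) gives $\atrial=\bigl(-\tau+\tfrac12 k(s)\tau^2\bigr)\mathbf{e}_s$ up to a closed form that enforces consistency around the loop. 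Adding the harmless gradient $\nabla(\gamma_0 s)$, with $\gamma_0$ as in~\eqref{eq:deps}, shifts the effective tangential potential so that the integer-part truncation $\eps\deps$ appearing in~\eqref{eq:vect pot bound} reflects the requirement that the plane-wave factor built into $\psi$ be single-valued on $\partial\Om$.

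Next, I would define the trial order parameter in boundary coordinates by
\begin{equation*}
\trial(\rv)=\chi(\tau)\,\psi\bigl(s,\tau/\eps\bigr),\qquad \rv\in\T,
\end{equation*}
extended by $0$ outside $\T$, where $\chi$ is a smooth cut-off equal to $1$ for $\tau\leq \tfrac12 c_0\eps|\log\eps|$ and vanishing for $\tau\geq c_0\eps|\log\eps|$. Since $\psi$ is periodic in $s$, $\trial$ is well-defined on $\Om$, and by the exponential decay of $\psi$ on $[0,c_0|\log\eps|]$ (which for minimizers follows from the Agmon-type bounds recalled in Appendix~\ref{sec:app}) the cut-off alters the energy only by an $\OO(\eps^{\infty})$ term, and the same is true of the extension by zero on $\Om\setminus\T$.

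The main calculation is then a change of variables. In boundary coordinates the area element reads $(1-k(s)\tau)\,\diff s\,\diff\tau$, and the covariant gradient decomposes orthogonally as
\begin{equation*}
\Bigl|\bigl(\nabla+i\eps^{-2}\atrial\bigr)\trial\Bigr|^2=|\partial_\tau\trial|^2+(1-k(s)\tau)^{-2}\Bigl|\bigl(\partial_s+i\eps^{-2}A_{\mathrm{trial},s}\bigr)\trial\Bigr|^2.
\end{equation*}
Dilating $\tau=\eps t$ turns $\partial_s+i\eps^{-2}A_{\mathrm{trial},s}$ precisely into $\eps^{-1}\bigl(\eps\partial_s+i\aae(s,t)\bigr)$, and inserting $\trial$ gives exactly the integrand of $\annf[\psi]$ up to a global $\eps^{-1}$ prefactor and terms localised where $\chi\neq 1$. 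The latter are $\OO(\eps^\infty)$ by the decay of $\psi$, so
\begin{equation*}
\glfe[\trial,\atrial]=\eps^{-1}\annf[\psi]+\OO(\eps^{\infty}),
\end{equation*}
which combined with the near-optimality of $\psi$ proves the lemma. The only delicate point is ensuring that $\trial$ is single-valued and $H^1$ across the seam $s=0\equiv|\partial\Om|$: this is exactly what the integer-part definition of $\deps$ in~\eqref{eq:deps} and the periodic boundary condition in~\eqref{eq:GL func bound inf} encode, and this is the step that must be handled carefully; once it is, the remainder of the argument is a bookkeeping exercise.
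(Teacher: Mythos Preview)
Your proposal is correct and follows essentially the same route as the paper: choose a trial vector potential with $\curl\atrial\equiv 1$, pass to boundary coordinates in a tubular neighbourhood, insert a near-minimizer $\psi$ of $\annf$ with a smooth cut-off in the normal direction, and compute. The only cosmetic difference is that the paper keeps the trial vector potential in Coulomb gauge ($\mathbf{F}$ with $\mathrm{div}\,\mathbf{F}=0$, $\curl\mathbf{F}=1$, $\mathbf{F}\cdot\nuv=0$ on $\partial\Om$) and inserts the explicit gauge phase $\phi_{\rm trial}$ into the trial order parameter, whereas you absorb this phase into the vector potential; by gauge invariance these are equivalent, and both presentations defer the bookkeeping to \cite[Appendix~F and Section~14.4.2]{FH-book}. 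One small caution: the exponential decay you invoke to control the cut-off region is not literally contained in Appendix~\ref{sec:app} (those are bounds on the 1D profiles $f_k$, not on minimizers of $\annf$), but since the lemma is ultimately applied only to the explicit trial state built from the $f_n$, which do decay by Lemma~\ref{lem:point est fal}, this causes no real problem.
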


\begin{proof}
This is a standard reduction for which more details may be found in~\cite[Section 14.4.2]{FH-book} and references therein. See also~\cite[Sections 4.1 and 5.1]{CR2}. We provide a sketch of the proof for completeness.

We first pick the trial vector potential as 
$$ \mathbf{A}_{\rm trial} = \mathbf{F} $$
where $\mathbf{F}$ is the induced vector potential written in a gauge where $\mathrm{div} \, \mathbf{F} = 0$, namely the unique solution of 
$$ \begin{cases}
    \mathrm{div}  \,\mathbf{F} = 0,		& \mbox{ in } \Om,\\
     \curl   \,\mathbf{F} = 1,				& \mbox{ in } \Om, \\
    \mathbf{F}\cdot \nuv = 0, 			& \mbox{ on } \dd \Om.
   \end{cases}
$$
Next we introduce boundary coordinates as described in~\cite[Appendix F]{FH-book}: let 
$$\gav(\xi): \R \setminus (|\partial \Omega| \Z) \to \partial \Omega $$
be a counterclockwise parametrization of the boundary $ \partial \Omega $ such that $ |\gav^{\prime}(\xi)| = 1 $. The unit vector directed along the inward normal to the boundary at a point $ \gav(\xi) $ will be denoted by $ \nuv(\xi) $. The curvature $ k(\xi) $ is then defined through the identity 
$$ \gav^{\prime\prime}(\xi) = k(\xi) \nuv(\xi). $$
Our trial state will essentially live in the region
\beq
	\label{ann}
	 \annt : = \lf\{ \rv \in \Omega \: | \: \dist(\rv, \partial \Omega) \leq c_0 \eps |\log\eps| \ri\},
\eeq
and in such a region we can introduce tubular coordinates $ ( s,\eps t) $ (note the rescaling of the normal variable) such that, for any given $ \rv \in \annt $, $ \eps t = \dist(\rv, \partial \Omega) $, i.e.,
\beq
	\label{eq:tubular coordinates}
	\rv(s,\eps t) = \gav^{\prime}( s) +  \eps t \nuv( s),
\eeq
which can obviously be realized as a diffeomorphism for $\eps$ small enough. Hence the boundary layer becomes in the new coordinates $ (s,t) $ 
\begin{equation}\label{eq:def ann rescale}
\ann:= \left\{ (s,t) \in \left[0, |\partial \Omega| \right] \times \left[0,c_0 |\log\eps|\right] \right\}.
\end{equation}
We now pick a function $\psi (s,t)$ defined on $\ann$, satisfying periodic boundary conditions in the $s$ variable. Using a smooth cut-off function $\chi (t) $ with $\chi (t) \equiv 1$ for $t\in [0,c_0 |\log \eps|]$ and $\chi (t)$ exponentially decreasing for $t > c_0 |\log \eps|$, we associate to $\psi$ the GL trial state
$$ \Psi_{\rm trial} (\rv) := \psi(s,t) \chi ( t) \exp \lf\{ i \phi_{\rm trial}(s,t) \ri\},$$
where $ \phi_{\rm  trial} $ is a gauge phase (analogue of \eqref{eq: gauge phase}) depending on $ \mathbf{A}_{\rm trial} $, i.e.,
\bml{
		\label{eq: gauge phase trial}
		\phi_{\rm trial}(s,t) : = - \frac{1}{\eps} \int_{0}^{t} \diff \eta \: \nuv(s) \cdot \mathbf{A}_{\rm trial}(\rv(s, \eps \eta)) + \frac{1}{\eps^2} \int_{0}^s \diff \xi \: \gav^{\prime}(\xi) \cdot \mathbf{A}_{\rm trial}(\rv(\xi,0)) \\ 
		-  \lf( \frac{|\Omega|}{|\partial \Omega| \eps^2 } - \lf\lfloor \frac{|\Omega|}{|\partial \Omega| \eps^2 } \ri\rfloor \ri) s.			
}
Then, with the definition of $\annf$ as in~\eqref{eq:GL func bound}, a relatively straightforward computation gives 
$$ \gle \left[\Psi_{\rm trial}, \mathbf{A}_{\rm trial} \right] \leq \frac{1}{\eps} \annf [\psi] + C \eps ^{\infty},$$
and the desired result follows immediately. Note that this computation uses the gauge invariance of the GL functional, e.g., through~\cite[Lemma F.1.1]{FH-book}.
\end{proof}

The problem is now reduced to the construction of a proper trial state for $\annf$. To capture the $\OO(\eps)$ correction (which depends on curvature) to the leading order of the GL energy (which does not depend explicitly on curvature), we need a more elaborate function than has been considered so far. The construction is detailed in Subsection~\ref{sec:trial state} and the computation completing the proof of Proposition~\ref{pro:up bound} is given in Subsection~\ref{sec:trial ener}.

\subsection{The trial state in boundary coordinates}\label{sec:trial state}

We start by recalling the splitting of the domain $ \ann $ defined in \eqref{eq:boundary layer} into $ \neps \propto \eps ^{-1} $ rectangular cells $ \lf\{ \cellj \ri\}_{n=1 \ldots \neps}$ with boundaries $s_n,s_{n+1}$ in the $ s$-coordinate such that 
$$s_{n+1} - s_n = \spac \propto \eps, $$
so that 
$$ \neps = \frac{|\partial \Omega|}{\spac}.$$
We denote
\beq
	\label{eq: cell}
	\cellj = [s_n,s_{n+1}] \times [0,c_0 |\log \eps|],
\eeq
with the convention that $ s_1 = 0 $, for simplicity. We will approximate the curvature $k(s)$ inside each cell by its mean value and set
\beq
	\label{eq:mean curvature}
	k_n := \spac^{-1} \int_{s_n} ^{s_{n+1}} \diff s \, k(s).
\eeq
We also denote by
\beq
	\label{eq: opt phase cell}
	\alpha_n = \alpha(k_n)
\eeq
the optimal phase associated to $k_n$, obtained by minimizing $\eone (\alpha,k_n)$ with respect to $\alpha$ as in Section~\ref{sec:eff func}.

The assumption about the smoothness of the boundary guarantees that
\beq
	\label{eq:curv diff}
	\kj - k_{n+1} = \OO(\eps).	
\eeq
Indeed if we assume that $ \sup_{s \in [0,2\pi]} \lf| \dd_s k (s) \ri| \leq C < \infty $ (independent of $ \eps $), one gets
\bmln{
 	\spac^{-1} \bigg| \int_{\kj}^{k_{n+1}} \diff s \: k(s) - \int_{k_{n+1}}^{k_{n+2}} \diff s \: k(s) \bigg| = \spac^{-1} \bigg| \int_{k_n}^{k_{n+1}} \diff s \int_{s}^{k_{n+1}} \diff \eta \: \dd_\eta k (\eta) +  \int_{k_{n+1}}^{k_{n+2}} \diff s \int_{k_{n+1}}^s \diff \eta \: \dd_\eta k(\eta)  \bigg|	\\
	 \leq C \spac = \OO(\eps).
}
We can then apply Proposition \ref{pro:1D curv} to obtain
\beq
	\label{eq:phase diff}
	\al_n - \al_{n+1} = \OO(\eps|\log\eps|^{\infty}),
\eeq
\beq
	\label{eq:density diff}
	\lf\|f^{(m)}_n - f^{(m)}_{n+1}\ri\|_{L^{\infty}(\ie)} = \OO(\eps|\log\eps|^{\infty}),
\eeq	
for any finite $ m \in \N $.


Our trial state has the form
\begin{equation}\label{eq:trial state}
\psit (s,t) = g(s,t) \exp \lf\{ -i\left(\eps^{-1} S(s)  - \eps \deps s\right) \ri\} 
\end{equation}
where $\deps$ is the number~\eqref{eq:deps}. The density $g$ and phase factor $S$ are defined as follows:

\begin{itemize}
\item \underline{The density.} The modulus of our wave function is constructed to be essentially piecewise constant in the $s$-direction, with the form $\fkstari (t)$ in the cell $ \cellj$. The admissibility of the trial state requires that $g$ be continuous and we thus set:
\begin{equation}\label{eq:trial density}
g(s,t):=  \fkstari + \chi_n,
\end{equation}
where the function $\chi_n$ satisfies 
\beq
\chi_n(s,t) = 
\begin{cases}
0,							&	\mbox{at } s = s_{n},\\
\fkstariplus(t) - \fkstari(t),		&	\mbox{at } s = s_{n+1}, 
\end{cases}
\eeq	
the continuity at the $s_n$ boundary being ensured by $\chi_{n-1}$. A simple choice is given by
\begin{equation}\label{eq:choice chi}
\chi_n (s,t)= \left(\fkstariplus (t) - \fkstari (t) \right) \left( 1 - \frac{s-s_{n+1}}{s_{n} - s_{n+1}}\right). 
\end{equation}
Note that $|k_n-k_{n+1}| \leq C |s_n-s_{n+1}|\leq C\eps$ since the curvature is assumed to be a smooth function of $s$. Clearly, in view of  Proposition~\ref{pro:1D curv} we can impose the following bounds on $\chi_n$:
\begin{equation}\label{eq:control smoothing}
|\chi_n| \leq C \eps \logi, \qquad |\dd_t \chi_n| \leq C \eps \logi, \qquad |\dd_s \chi_n| \leq C \logi,
\end{equation}
so that $\chi_n$ is indeed only a small correction to the desired density $\fkstari$ in $\cellj$.

\item \underline{The phase.} The phase of the trial function is dictated by the refined ansatz \eqref{eq:intro GLm formal refined}: within the cell $ \cellj $ it must be approximately equal to $ \al_n $ and globally it must define an admissible phase factor, i.e., vary of a multiple of $ 2\pi $ after one loop. We then let 
$$S = S (s) = \Sloc(s) + \Sglob (s)$$
where $\Sloc$ varies locally (on the scale of a cell) and $\Sglob$ varies globally (on the scale of the full interval $[0,|\dd \Om|]$) and is chosen to enforce the periodicity on the boundary of the trial state. The term $\Sloc$ is the main one, and its $s$ derivative should be equal to $\alpha_n$ in each cell $\cellj$ in order that the evaluation of the energy be naturally connected to the 1D functional we studied before{, as explained in Section 3.1}. We define $\Sloc$ recursively by setting:
\beq\label{eq:trial phase 1}
\Sloc (s)=
\begin{cases}
 \alpha_1 s, &  \mbox{in } \cell_1,\\
\alpha_{n} (s-s_{n}) + \Sloc(s_{n}), &	 \mbox{in } \cellj, n\geq 2,
\end{cases}
\eeq
which in particular guarantees the continuity of $\Sloc$ on $[s_1,s_{\neps+1}[$. Moreover we easily compute (recall that $ s_1 = 0 $)
\beq
	\label{eq:Sloc boundary}
	\Sloc(s_n) = \sum_{m = 2}^{n-1} \al_m \lf(s_{m+1} - s_m \ri) + \al_1 s_2 = \int_0^{s_{n}} \diff s \: \al(s) + \OO(\eps\logi).
\eeq

The factor $\Sglob$ ensures that 
$$S(s_{\neps+1}) - S (s_{1}) = S(s_{\neps+1}) \in 2 \pi \eps \Z,$$
which is required for~\eqref{eq:trial state} to be periodic in the $s$-direction and hence to correspond to a single-valued wave function in the original variables. The conditions we impose on $\Sglob$ are thus
\begin{align}\label{eq:trial phase 2}
\Sglob (s_1) &= 0 
\\ \Sglob(s_{\neps+1}) &= 2\pi \eps \left(  \alpha_{\neps} \lf(s_{\neps+1} - s_{\neps} \ri) + \Sloc(s_{\neps})    - \left\lfloor \alpha_{\neps} \lf(s_{\neps+1} - s_{\neps}\ri) + \Sloc(s_{\neps}) \right\rfloor \right)	\nonumber
\end{align}
with $\lfloor\,.\,\rfloor$ standing for the integer value. Thanks to \eqref{eq:Sloc boundary}, we have
$$ \alpha_{\neps} \lf(s_{\neps+1} - s_{\neps}\ri) + \Sloc(s_{\neps})  = \OO (1)$$
and we can thus clearly impose that $\Sglob$ be regular and  
\begin{equation}\label{eq:control S2}
|\Sglob| \leq C \eps,	\qquad |\dd_s \Sglob| \leq C \eps. 
\end{equation}

\begin{rem}($s$-dependence of the trial state)
	\mbox{}	\\
The main novelty here is the fact that the density and phase of the trial state have (small) variations on the scale of the cells which are of size $\OO(\eps)$ in the $s$-variable. A noteworthy point is that the phase needs not have a $t$-dependence to evaluate the energy at the level of precision we require. Basically this is associated with the fact that the $t^2$ term in~\eqref{eq:vect pot bound} comes multiplied with an $\eps$ factor.  The main point that renders the computation of the energy doable is~\eqref{eq:control smoothing} and this is where the analysis of Subsection~\ref{sec:eff func} enters heavily.
\end{rem}

\end{itemize}

\subsection{The energy of the trial state}\label{sec:trial ener}

We may now complete the proof of Proposition~\ref{pro:up bound} by proving

\begin{lem}[\textbf{Upper bound for the boundary functional}]\label{lem:up bound bound}\mbox{}\\
With $\psit$ given by the preceding construction, it holds 
\begin{equation}\label{eq:up bound bound}
\annf [\psit] \leq \int_{0} ^{|\dd \Om|} \diff s \eone_\star (k(s)) + \OO (\eps ^2 |\log \eps| ^{\infty}). 
\end{equation}
\end{lem}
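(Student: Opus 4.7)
The plan is to substitute $\psit$ into $\annf$, decompose the integral as $\int_{\ann}(\cdots)\,ds\,dt = \sum_{n=1}^{\neps}\int_{\cellj}(\cdots)\,ds\,dt$, and show that each cell contributes $\spac\cdot\eone_\star(k_n) + \OO(\eps^3\logi)$. Summing over the $\neps = \OO(\eps^{-1})$ cells will then yield $\annf[\psit] = \sum_n \spac\cdot\eone_\star(k_n) + \OO(\eps^2\logi)$, and the final step is a Riemann-sum estimate: since $|k(s) - k_n| \leq C\spac = \OO(\eps)$ on $[s_n, s_{n+1}]$ by smoothness of $k$, the Lipschitz bound \eqref{eq:vari 1D energy} from Proposition~\ref{pro:1D curv} gives $|\eone_\star(k_n) - \eone_\star(k(s))| \leq C\eps^2\logi$, so that $\sum_n \spac\,\eone_\star(k_n) = \int_0^{|\dd\Om|}\eone_\star(k(s))\,ds + \OO(\eps^2\logi)$.

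For the cell-wise computation, I first compute $|\dd_t \psit|^2 = (\dd_t g)^2$ and $|(\eps\dd_s + i\aae)\psit|^2 = (\eps\dd_s g)^2 + g^2 A(s,t)^2$, where, on $\cellj$,
\[
A(s,t) = -\bigl(t + \alpha_n - \tfrac{1}{2}\eps k(s) t^2\bigr) + \Sglob'(s) - \eps\deps(1+\eps).
\]
I then split $A = A_{\rm id} + B$ with $A_{\rm id}(t) = -(t + \alpha_n - \frac{1}{2}\eps k_n t^2)$ and $B = \OO(\eps)$ (from $\Sglob'$, $\eps\deps$, and $k(s)\leftrightarrow k_n$, using \eqref{eq:control S2} and smoothness of $k$), and correspondingly $g = f_n + \chi_n$ with $\chi_n = \OO(\eps\logi)$ by \eqref{eq:control smoothing}. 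The ``ideal'' part of the integrand, keeping only $A_{\rm id}$, $f_n$, and $k=k_n$, is exactly $(1-\eps k_n t)\bigl\{(\dd_t f_n)^2 + V_{k_n}(t) f_n^2 - \tfrac{1}{2b}(2 f_n^2 - f_n^4)\bigr\}$, whose $t$-integral equals $\eone_\star(k_n)$ by construction of $f_n$ and $\alpha_n$; integrated over $s \in [s_n, s_{n+1}]$ this gives the desired leading contribution $\spac\,\eone_\star(k_n)$.

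The remainders fall into three categories. The routine ones, namely quadratic-or-higher terms in $\chi_n$ and $B$ such as $(\dd_t\chi_n)^2$, $(\eps\dd_s\chi_n)^2$, $B^2 f_n^2$, $\chi_n B$, together with $\OO(\eps)$-corrections from $k(s)\leftrightarrow k_n$ in the ideal integrand, are each bounded pointwise by $\OO(\eps^2\logi)$ via \eqref{eq:control smoothing}, \eqref{eq:control S2}, smoothness of $k$, and the uniform $f_n$-estimates of Appendix~\ref{sec:app}, so that each contributes $\OO(\eps^3\logi)$ per cell given cell area $\OO(\eps\logi)$. The linear-in-$\chi_n$ terms in the ideal integrand, $2(1-\eps k_n t)\dd_t f_n\,\dd_t\chi_n$ together with $2\chi_n\bigl[(1-\eps k_n t) V_{k_n} f_n - \frac{1-\eps k_n t}{b}(f_n - f_n^3)\bigr]$, combine by the Euler--Lagrange equation for $f_n$ into the total $t$-derivative $2\dd_t\bigl[\chi_n(1-\eps k_n t)\dd_t f_n\bigr]$, whose integral over $\cellj$ vanishes by Neumann conditions on $f_n$ at $t = 0,\teps$.

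The truly delicate obstacle is the momentum cross-term $\int_{\cellj}\frac{2 A_{\rm id} B}{1-\eps k t}\,f_n^2\,ds\,dt$: naively of size $\OO(\eps\logi)$ per unit area, which when summed would give $\OO(\eps\logi)$ and spoil the bound. Factoring $B(s)$ (which is essentially $s$-dependent, not $t$-dependent) out of the $t$-integral and invoking the first-order optimality condition in $\alpha$ for $\eone_\star(k_n)$,
\[
\int_0^{\teps} \frac{2(t + \alpha_n - \frac{1}{2}\eps k_n t^2)}{1 - \eps k_n t}\,f_n^2(t)\,dt = 0,
\]
combined with $\int_{s_n}^{s_{n+1}}|\Sglob'(s) - \eps\deps(1+\eps)|\,ds = \OO(\eps^2)$ and an $\OO(\eps\logi)$ error from replacing $k(s)$ by $k_n$ inside the $t$-integral, reduces this cross-term to $\OO(\eps^3\logi)$ per cell. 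With all three categories of remainders thus controlled, the cell estimate $\int_{\cellj}(\cdots) = \spac\,\eone_\star(k_n) + \OO(\eps^3\logi)$ is established, and the conclusion follows by summation and the Riemann step described above.
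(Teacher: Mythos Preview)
Your proof is correct and follows essentially the same route as the paper: cell decomposition, cancellation of the linear-in-$\chi_n$ terms via the Euler--Lagrange equation for $f_n$ with Neumann boundary conditions, use of the phase-optimality condition \eqref{eq:FH nonlinear} to kill the dangerous $\OO(\eps)$ momentum cross-term, and a final Riemann-sum step via \eqref{eq:vari 1D energy}. The only differences are organizational --- the paper first replaces $k(s)$ by $k_n$ globally (Step~1), then handles the phase correction (Step~2) and the density correction (Step~3) separately, whereas you expand $A=A_{\rm id}+B$ and $g=f_n+\chi_n$ simultaneously --- and you keep explicit track of the $\eps\deps$ contribution that the paper silently absorbs into the $\OO(\eps)$ phase remainder.
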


The upper bound~\eqref{eq:up bound GL} follows from Lemmas~\ref{lem:up bound} and~\ref{lem:up bound bound} since $\psit$ is periodic in the $s$-variable and hence an admissible trial state for $\anne$.

\begin{proof} As explained in Subsection~\ref{sec:eff func}, inserting~\eqref{eq:trial state} into~\eqref{eq:GL func bound} yields 
\begin{equation}\label{eq:ener trial}
\annf [\psit] = \E^{\mathrm{2D}}_S [g] 
\end{equation}
where $\E^{\mathrm{2D}}_S [g]$ is defined in~\eqref{eq:2D func}. For clarity we split the estimate of the r.h.s. of the above equation into several steps. We  use the shorter notation $f_n$ for $f_{k_n}$ when this generates no confusion.

\paragraph{Step 1. Approximating the curvature.} In view of the continuity of the trial function, the energy is the sum of the energies restricted to each cell. We approximate $k(s)$ by $k_n$ in $\cellj$ as announced, and note that since $k$ is regular we have $|k(s) - k(s_n) |\leq  C \eps$ in each cell, with a constant $C$ independent of $j$. We thus have
\begin{multline}\label{eq:estim 1}
\E^{\mathrm{2D}}_S [g] \leq \sum_{n = 1}^{\neps} \int_{\cellj}  \diff t \, \diff s  \: \lf( 1 - \eps k_n t \ri) \lf\{ \lf| \partial_t g \ri|^2 + \frac{\eps^2}{(1 -  \eps k_n t)^2} \lf| \partial_s g \ri|^2 \ri.\\
	\lf. + \frac{\left(t + \dd_s S - \frac{1}{2}\eps t ^2 k_n \right) ^2}{(1-\eps k_n t) ^2} g^2 - \frac{1}{2b} \lf(2 g^2 - g^4 \ri) \ri\} \left( 1 + \OO(\eps ^2)\right)
\end{multline}
since each $k$-dependent term comes multiplied with an $\eps$ factor.

\paragraph{Step 2. Approximating the phase.} In $\cellj$ we have 
$$\dd_s S = \alpha_n + \dd_s \Sglob = \alpha_n + \OO (\eps).$$
We can thus expand the potential term: 
\begin{multline}\label{eq:change phase}
\int_{\cellj}  \diff t \, \diff s \: \frac{\left(t + \dd_s S - \frac{1}{2}\eps t ^2 k_n\right) ^2}{1-\eps k_n t} g^2  = \int_{\cellj}  \diff t \, \diff s \: \frac{\left(t + \alpha_n - \frac{1}{2}\eps t ^2 k_n \right) ^2}{1-\eps k_n t} g^2 
\\ 
+ 2 \int_{\cellj}  \diff t \, \diff s   \: \dd_s \Sglob \frac{t + \alpha_n - \frac{1}{2}\eps t ^2 k_n}{1-\eps k_n t} g^2 
+ \int_{\cellj}  \diff t \, \diff s \: \frac{\lf(\dd_s \Sglob \ri)^2}{1-\eps k_n t} g^2 
\end{multline}
and obviously  
$$ \int_{\cellj}  \diff t \, \diff s \: \frac{\lf(\dd_s \Sglob \ri)^2}{1-\eps k_n t} g^2 \leq C \eps ^3 \logi, $$ 
because of~\eqref{eq:control S2} and the size of $ \cellj $ in the $s$ direction. Next we note that in $ \cellj $
$$ g ^2  = f_n ^2 + 2 f_n \chi_n + \chi_n  ^2$$ 
so that, using~\eqref{eq:FH nonlinear} and the fact that $\dd_s \Sglob$ only depends on $s$ we have 
$$ \int_{\cellj}  \diff t \, \diff s   \: \dd_s \Sglob \frac{t + \alpha_n - \frac{1}{2}\eps t ^2 k_n}{1-\eps k_n t} g^2  = \int_{\cellj}  \diff t \, \diff s   \: \dd_s \Sglob \frac{t + \alpha_n - \frac{1}{2}\eps t ^2 k_n}{1-\eps k_n t} \left( 2 f_n\chi_n + \chi_n  ^2 \right),$$ 
which is easily bounded by $C\eps ^3 \logi$ using~\eqref{eq:control smoothing},~\eqref{eq:control S2} and the fact that $|s_{n+1}-s_n|\leq C \eps$. All in all: 
\begin{equation}\label{eq:change phase 2}
\int_{\cellj}  \diff t \, \diff s \: \frac{\left(t + \dd_s S - \frac{1}{2}\eps t ^2 k_n \right) ^2}{1-\eps k_n t} g^2 = \int_{\cellj}  \diff t \, \diff s \: \frac{\left(t + \alpha_n - \frac{1}{2}\eps t ^2 k_n \right) ^2}{1-\eps k_n t} g ^2 + \OO (\eps ^3 \logi). 
\end{equation}

\paragraph{Step 3. The 1D functional inside each cell.} We now have to estimate an essentially 1D functional in each cell, closely related to~\eqref{eq:1D func}:
\begin{equation}\label{eq:change density}
\int_{\cellj}  \diff t \, \diff s \: \lf( 1 - \eps k_n t \ri) \bigg\{ \lf| \partial_t g \ri|^2 + \frac{\eps^2}{(1 -  \eps k_n t)^2} \lf| \partial_s g \ri|^2 + \frac{\left(t + \alpha_n - \frac{1}{2}\eps t ^2 k_n \right) ^2}{(1-\eps k_n t)^2} g^2 - \frac{1}{2b} \lf(2 g^2 - g^4 \ri) \bigg\}. 
\end{equation}
We may now expand $g$ according to~\eqref{eq:trial density} in the above expression and use the variational equation~\eqref{eq:var eq fal} to cancel the first order terms in $\chi_n$. This yields
\begin{multline}\label{eq:change density 2}
\int_{\cellj}  \diff s \diff t \: \lf( 1 - \eps k_n t \ri) \lf\{ \lf| \partial_t g \ri|^2 + \tx\frac{\eps^2}{(1 -  \eps k_n t)^2} \lf| \partial_s g \ri|^2 + V_{k_n}(t) g^2 - \frac{1}{2b} \lf(2 g^2 - g^4 \ri) \ri\}  =  \spac \eone_{\star} (k_n) 
\\  + \int_{\cellj}  \diff s \diff t \: \lf( 1 - \eps k_n t \ri) \lf\{ |\dd_t \chi_n| ^2 + \tx\frac{\eps ^2}{(1-\eps t k_n) ^2} |\dd_s \chi_n| ^2 + V_{k_n} \chi_n ^2 + \frac{1}{2b} \left( 6 \chi_n ^2 f_n^2 + 4 \chi_n ^3 f_n + \chi_n ^4 - 2 \chi_n ^2 \right)\ri\} 
\\ = \spac \eone_{\star} (k_n) + O(\eps ^3 \logi),
\end{multline}
where we only have to use~\eqref{eq:control smoothing} to obtain the final estimate.

\medskip

\textbf{Step 4, Riemann sum approximation.} Gathering all the above estimates we obtain 
\beq\label{eq:up bound pre final}
\E^{\mathrm{2D}}_S [g] \leq 
\spac \sum _{n=1}^{\neps} \eone_{\star} (k_n) \left( 1+ \OO (\eps ^2)\right) + \OO(\eps ^2 \logi)
\\ = \int_0^{|\partial \Omega|} \diff s \: \eone_{\star} (k(s)) + \OO(\eps ^2 \logi). 
\eeq
Indeed, ~\eqref{eq:vari 1D energy} implies that inside $\cellj$
\beq
	\label{eq:1D energy continuous} \left| \eone_\star (k_n) -  \eone_\star (k (s))\right| \leq C \eps \spac \logi \leq C \eps ^2 \logi.
\eeq
Recognizing a Riemann sum of $ \neps \propto \eps ^{-1} $ terms in~\eqref{eq:up bound pre final} and recalling that $\eone_{\star} (k_n)$ is of order~$ 1 $, irrespective of $n$, thus leads to \eqref{eq:up bound pre final}. Combining~\eqref{eq:ener trial} and~\eqref{eq:up bound pre final} we obtain~\eqref{eq:up bound bound} which concludes the proof of Lemma~\ref{lem:up bound bound} and hence that of Proposition~\ref{pro:up bound}, via Lemma \ref{lem:up bound}.
\end{proof}

\section{Energy Lower Bound}
\label{sec:low bound}

The main result proven in this section is the following

	\begin{pro}[\textbf{Energy lower bound}]
		\label{pro:energy lb}
		\mbox{}\\
		Let $\Om\subset \R ^2$ be any smooth simply connected domain. For any fixed $1<b<\theo ^{-1}$, in the limit $ \eps \to 0$, it holds
		\begin{equation}\label{eq:energy lb}
			\glee \geq \frac{1}{\eps} \int_0^{|\partial \Omega|} \diff s \: \eone_\star \left(k(s)\right)  - C \eps |\log\eps|^{\infty}. 
		\end{equation}
	\end{pro}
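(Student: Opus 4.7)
\medskip

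\noindent\textbf{Proof plan for Proposition~\ref{pro:energy lb}.} The starting point is the standard reduction to the boundary functional $\annf$ defined in~\eqref{eq:GL func bound}, mirroring the reduction used in Lemma~\ref{lem:up bound}: by Agmon-type exponential decay estimates for $\glm$ away from $\dd\Om$, replacement of the induced vector potential by the Coulomb-gauge one (with error controlled through the third London equation), and passage to boundary coordinates on the layer $\annt$, one obtains $\glee \geq \eps^{-1}\anne - C\eps^{\infty}$, so it suffices to prove the analogue of~\eqref{eq:energy lb} at the level of $\anne$. I then split $\ann$ into the cells $\cellj = [s_n,s_{n+1}]\times[0,c_0|\log\eps|]$ introduced in Section~\ref{sec:up bound}, freezing the curvature to $k_n$ inside $\cellj$ at the cost of an error $\OO(\eps^2\logi)$ per cell. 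Writing $\psi$ as in~\eqref{eq:intro decouple bis},
\[
u_n(s,t) = f_n^{-1}(t)\,e^{i(\eps^{-1}\alpha_n s + \eps\deps s)}\,\psi(s,t),
\]
and using the variational equation~\eqref{eq:var eq fal} for $f_n = f_{k_n}$, I apply in each cell the Lassoued--Mironescu decoupling already used in the proof of Lemma~\ref{lem:1D curv pre}. This yields the in-cell identity
\[
\annf\big|_{\cellj}[\psi] = \spac\,\eone_\star(k_n) + \fred_n[u_n] + \mathcal{M}_n[u_n] + \OO(\eps^2\logi),
\]
where $\fred_n$ is a non-negative quadratic form in $|u_n|$, and $\mathcal{M}_n[u_n]$ is the momentum-type cross term involving $\dd_s u_n$, of the form $2\,\mathrm{Re}\int_{\cellj}\diff s\,\diff t\,(\ldots) f_n^2\,\bar{u}_n\,\dd_s u_n$, together with an $\eps$-small curvature correction analogous to the one already estimated in Step~3 of the proof of Lemma~\ref{lem:1D curv pre}.

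The heart of the argument is the treatment of $\mathcal{M}_n$. Following the strategy of~\cite{CR2}, I use the potential function $\Fk$ (see~\eqref{eq:Fk}, with $k=k_n$) and integrate by parts in $t$ to trade the $s$-derivative for an $s$-derivative of $|u_n|^2$, and then integrate by parts in $s$. In the disc case the boundary terms at $s=s_n,s_{n+1}$ cancel by periodicity; here they do not, because the reference profile jumps from $f_{k_n}$ to $f_{k_{n+1}}$ across $s=s_{n+1}$. I handle these boundary terms exactly as described in Subsection~\ref{sec:sketch}: using a smooth cut-off $\chi_n$ supported in $\cellj$ with $\chi_n(s_{n+1})=1$, $\chi_n(s_n)=0$, $|\dd_s\chi_n|\leq C\spac^{-1}$, I reinterpret the boundary term at $s=s_{n+1}$ as a bulk integration by parts in $\cellj$; the mismatch between $u_n$ and $u_{n+1}$ at the interface, after rewriting everything in terms of a single $u_n$, produces precisely the correction function $\ijj$ defined in~\eqref{eq:def ijj first} with $k=k_n$, $k'=k_{n+1}$. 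After summing over $n$ (so that each interface is treated exactly once), one arrives at a lower bound of the form~\eqref{eq:intro low bound gen} with the modified cost function
\[
\Kti(s,t) = K_{k_n}(t) - |\dd_s\chi_n(s)|\,|\ijj(t)| - |\chi_n(s)|\,|\dd_t\ijj(t)|.
\]

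Positivity of $\Kti$ is the crux. On the one hand, Lemma~\ref{lem:K positive} asserts $K_{k_n}(t)\geq f_n^2(t) + F_{k_n}(t)$ with $K_{k_n}\geq 0$ up to $t=\btik$; on the other hand, since $|k_n-k_{n+1}|=\OO(\eps)$ by~\eqref{eq:curv diff}, Corollary~\ref{cor:est log cost} gives $|\ijj(t)|+|\dd_t\ijj(t)|\leq C\eps\logi\,f_n^2(t)$ on $[0,\bteps]$, and with $|\dd_s\chi_n|\leq C\spac^{-1}\leq C\eps^{-1}$ and $|\chi_n|\leq 1$ we obtain $\Kti\geq (1-C\logi)f_n^2 + F_{k_n}$ on $[0,\bteps]$, which is still non-negative by the same perturbation of the argument of~\cite{CR2}. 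The tail region $t\in[\bteps,\teps]$ is absorbed by the Agmon decay of $\psi$, exactly as in the estimate~\eqref{eq:1D bound region}, the remaining terms in $\fred_n[u_n]$ being non-negative and simply discarded. What remains is
\[
\anne \geq \sum_{n=1}^{\neps}\spac\,\eone_\star(k_n) - C\eps^2\logi,
\]
and~\eqref{eq:energy continuous 1D} from Proposition~\ref{pro:1D curv}, namely $|\eone_\star(k(s))-\eone_\star(k_n)|\leq C\eps\spac\logi$ on $\cellj$, turns the Riemann sum into the desired integral with remainder $\OO(\eps\logi)$ after multiplication by $\eps^{-1}$. The main obstacle is thus the construction and positivity of $\Kti$: it is exactly there that the curvature continuity results of Section~\ref{sec:functionals} (Proposition~\ref{pro:1D curv}, Proposition~\ref{pro:est log der}, and in particular the relative smallness of $\ijj$ and $\dd_t\ijj$ with respect to $f_n^2$ established in Corollary~\ref{cor:est log cost}) enter in a decisive way.
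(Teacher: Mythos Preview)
Your overall architecture is exactly that of the paper: reduction to $\annf$, cell decomposition with frozen curvature, Lassoued--Mironescu decoupling in each cell, the $F_{k_n}$--integration--by--parts to extract $K_{k_n}$, reinterpretation of the cell--boundary leftovers as bulk terms via a cut-off $\chi_n$, control of the resulting ``correction function'' $\ijj$ through Corollary~\ref{cor:est log cost}, and the final Riemann sum. So the strategy is right.

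There is, however, a genuine gap at the step where you invoke positivity of the modified cost function $\Kti$. You correctly record $|\dd_s\chi_n|\leq C\spac^{-1}\leq C\eps^{-1}$ and, from Corollary~\ref{cor:est log cost} with $|k_n-k_{n+1}|=\OO(\eps)$, $|\ijj|\leq C\eps\logi f_n^2$. Multiplying, the term $|\dd_s\chi_n|\,|\ijj|$ is only bounded by $C\logi\, f_n^2$, \emph{not} $C\eps\logi\, f_n^2$; thus $\Kti\geq (1-C\logi)f_n^2+F_{k_n}$ is \emph{not} non-negative, and the perturbation-of-\cite{CR2} argument you cite (which requires the defect $\de$ to satisfy $\de\leq C|\log\eps|^{-4}$, cf.~\eqref{eq:de}) does not apply. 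The sketch in Subsection~\ref{sec:sketch} that you are following explicitly warns that it ``ignores a rather large amount of technical complications'', and this is precisely one of them.

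The fix, carried out in the paper's actual proof (Lemma~\ref{lem:bound reduc func}, Step~2), is to notice that after integrating by parts back in $\cellj$ the boundary corrections are \emph{not} pure kinetic terms: they are of the form
\[
\eps\int_{\domj}\diff s\,\diff t\,\lf[-\chi_n\,\dd_t\ijj\,J_s[u_n]+\dd_s\chi_n\,\ijj\,J_t[u_n]\ri],
\]
i.e.\ currents $J_s[u_n]=(iu_n,\dd_s u_n)$, $J_t[u_n]=(iu_n,\dd_t u_n)$ involving $|u_n|$ times a single derivative. One therefore applies a weighted Young inequality with a parameter $\delta$, e.g.
\[
C\eps\logi\, f_n^2\,|u_n|\,|\dd_t u_n|\leq \tfrac{C\eps\logi}{\delta}\,f_n^2|\dd_t u_n|^2+C\eps\logi\,\delta\,|\psi|^2,
\]
and chooses $\delta=\eps|\log\eps|^a$ with $a$ large. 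The first piece then carries a coefficient $\leq |\log\eps|^{-5}$ and is absorbed by a small fraction of the \emph{original} cost function $K_{k_n}$ (whose positivity is Lemma~\ref{lem:K positive}); the second piece is $\OO(\eps^3\logi)$ per cell thanks to $|\psi|\leq 1$. The same treatment works for the $\dd_s\chi_n\,\ijj\,J_t$ term. In other words, the proof never establishes pointwise positivity of your $\Kti$; it separates the $|u_n|$ factor from the derivative and pays for it with a harmless $L^2$ error on $|\psi|$. Once you insert this step, the rest of your outline goes through verbatim.
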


We first reduce the problem to the study of decoupled functionals in the boundary layer in Subsection~\ref{sec:low bound pre} and then provide  lower bounds to these in Subsection~\ref{sec:low bound main}, which contains the main new ideas of our proof. 

\subsection{Preliminary reductions}\label{sec:low bound pre}
	
As in Section~\ref{sec:up bound}, the starting point is a restriction to the boundary layer together with a replacement of the vector potential. We refer to the proof of Lemma \ref{lem:up bound} and in particular \eqref{eq:tubular coordinates} for the definition of the boundary coordinates.

	\begin{lem}[\textbf{Reduction to the boundary functional}]
		\label{lem:low bound}
		\mbox{}\\
		Under the assumptions of Proposition \ref{pro:energy lb}, it holds
		\beq
			\label{eq:energy lb ann}
			\glee \geq \frac{1}{\eps} \annf[\psi] -C \eps^2 |\log\eps|^2,
		\eeq
		with $ \psi(s,t) = \glm(\rv(s,\eps t)) e^{-i \phi_{\eps}(s,t)} $ in $ \ann $, $ \phi_{\eps}(s,t) $ is a global phase defined in \eqref{eq: gauge phase} below and $\annf$ is the boundary functional defined in~\eqref{eq:GL func bound}
	\end{lem}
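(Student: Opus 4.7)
The plan is to reverse the construction of Lemma~4.2, reducing the full GL functional to its principal component in the boundary layer through three standard operations: restriction to $\annt$ via Agmon decay estimates, replacement of the induced vector potential $\aavm$ by the reference potential $\mathbf{F}$ with $\curl\mathbf{F}=1$, and a change to boundary coordinates accompanied by a gauge fixing. The full magnetic energy term $\hex \eps^{-4}|\curl\aavm-1|^2$ is nonnegative, so it can be dropped at the end, but its presence is first used to constrain $\aavm$ quantitatively.

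First, I would invoke the classical exponential decay (Agmon-type) estimates for $\glm$ available in the surface superconductivity regime $1<b<\theo^{-1}$ (see \cite[Chapter~12]{FH-book}), which yield
\[
\int_{\Om\setminus\annt} \left( |\glm|^2 + \eps^2\, \left|(\nabla + i\eps^{-2}\aavm)\glm\right|^2 \right) \diff\rv \;\leq\; C\eps^N
\]
for any $N\in\N$, provided $c_0$ in the definition of $\annt$ is chosen large enough. This reduces $\glee$ to the integral of the kinetic and potential densities of $\glfe$ over $\annt$, up to a $\OO(\eps^\infty)$ error.

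Second, I would use the upper bound already established in Proposition~\ref{pro:up bound}, which gives $\glee\leq C\eps^{-1}$ and hence $\|\curl\aavm-1\|_{L^2(\Om)}^2\leq C\eps^3$. Standard elliptic regularity together with the gauge freedom \eqref{eq:gauge inv} then allows one to find a gauge in which $\aavm-\mathbf{F}$ is small in a suitable norm on $\annt$. Expanding $|(\nabla+i\eps^{-2}\aavm)\glm|^2$ as $|(\nabla+i\eps^{-2}\mathbf{F})\glm|^2$ plus cross terms, and estimating the cross terms by Cauchy--Schwarz using $|\glm|\leq 1$, $|\annt|=\OO(\eps|\log\eps|)$, and the gradient bounds extracted from the Agmon estimate, produces errors bounded by $C\eps^2|\log\eps|^2$, which is precisely the remainder announced in the statement.

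Finally, I would perform the change of variables $\rv=\gav(s)+\eps t\,\nuv(s)$, producing the Jacobian $(1-\eps k(s)t)$ and the tangential/normal decomposition of the kinetic form appearing in \eqref{eq:GL func bound}, and absorb the expression of $\mathbf{F}$ in boundary coordinates into the phase of $\glm$ by setting $\psi:=\glm\, e^{-i\phi_\eps}$ with $\phi_\eps$ given by \eqref{eq: gauge phase} below, so that the residual potential acting on $\psi$ is exactly $\aae(s,t)$ of \eqref{eq:vect pot bound}. The main technical obstacle here is global: $\psi$ must be single-valued as a function periodic in $s\in[0,|\dd\Om|]$, whereas by Stokes' theorem the monodromy of $\phi_\eps$ around $\dd\Om$ is proportional to $\gamma_0/\eps^2$; only integer multiples of $2\pi$ can be reabsorbed into the phase of $\glm$, which forces $\deps$ to be the fractional part of $\gamma_0/\eps^2$ as in \eqref{eq:deps}. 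Once this global compatibility is verified, the remaining manipulation is a direct computation yielding $\glee\geq \eps^{-1}\annf[\psi]-C\eps^2|\log\eps|^2$.
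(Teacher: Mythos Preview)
Your proposal is correct and follows essentially the same route as the paper: Agmon decay to restrict to the boundary layer, control of $\curl\aavm-1$ to replace the vector potential, and the change to boundary coordinates together with the gauge phase $\phi_\eps$ (including the $\deps$ correction ensuring single-valuedness). The paper organizes the last two steps slightly differently---defining $\phi_\eps$ directly in terms of $\aavm$ rather than first replacing $\aavm$ by $\mathbf{F}$---and defers the technical details to \cite[Eqs.~(4.18)--(4.26)]{CR2}, but the substance is the same.
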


	\begin{proof}
		A simplified version of the result for disc samples is proven in \cite[Proposition 4.1]{CR2}, where a rougher lower bound is also derived for general domains. This latter result is obtained by dropping the curvature dependent terms from the energy, which was sufficient for the analysis contained there. Here we need more precision in order to obtain a remainder term of order $ o(\eps) $. We highlight here the main steps and skip most of the technical details.
	
		A suitable partition of unity together with the standard Agmon estimates (see~\cite[Section 14.4]{FH1}) allow to restrict the integration to the boundary layer:
		\beq
			\glee \geq \int_{\annt} \diff \rv \lf\{ \lf| \lf( \nabla + i \tx\frac{\aavm}{\eps^2} \ri) \Psi_1 \ri|^2 - \tx\frac{1}{2 \hex \eps^2} \lf[ 2|\Psi_1|^2 - |\Psi_1|^4 \ri]  \ri\} + \OO( \eps^{\infty}).
		\eeq
		where $ \Psi_1 $ is given in terms of $ \glm $ in the form $ \Psi_1 = f_1 \glm $ for some {function $ 0 \leq f_1 \leq 1 $, depending only on the normal coordinate $t$,} with support containing the set $ \annt $ defined by \eqref{ann} and contained in  
		\bdm
			\{ \rv \in \Omega \: | \: \dist(\rv, \partial \Omega) \leq C \eps|\log\eps| \}
		\edm
		 for a possibly large constant $ C $. The constant $ c_0 $ in the definition \eqref{ann} of the boundary layer has to be chosen large enough, but the choice of the support of $ f_1 $ remains to any other extent arbitrary and one can clearly pick $ f_1 $ in such a way that $ f_1 = 1 $ in $ \annt $ and going smoothly to $ 0 $ outside of it.

		The second ingredient of the proof is the replacement of the magnetic potential $ \aavm $ but this can be done following the same strategy applied to disc samples in \cite[Eqs. (4.18) -- (4.26)]{CR2}, whose estimates are not affected by the dependence of the curvature on $ s $. The crucial properties used there are indeed provided by the Agmon estimates, see below. The phase factor involved in the gauge transformation is explicitly given by
		\beq
			\label{eq: gauge phase}
			\phi_{\eps}(s,t) : = - \frac{1}{\eps} \int_{0}^{t} \diff \eta \: \nuv(s) \cdot \aavm(\rv(s, \eps \eta)) + \frac{1}{\eps^2} \int_{0}^s \diff \xi \: \gav^{\prime}(\xi) \cdot \aavm(\rv(\xi,0)) -  \deps s.			
		\eeq	

		The overall prefactor $ \eps^{-1} $ in the energy is then inherited from the rescaling of the normal coordinate $ \tau = \eps t $ in the tubular neighborhood of the boundary. Note here the use of a different convention with respect to both \cite{CR2,FH1}, where the tangential coordinate $ s $ was rescaled too.
	\end{proof}

	We need to rephrase some well-known decay estimates in a form suited to our needs. The Agmon estimates proven in \cite[Eq. (12.9)]{FH2} can be translated into analogous bounds applying to $ \psi(s,t) = \glm(\rv(s,\eps t))  e^{-i \phi_{\eps}(s,t)}  $ in $ \ann $: for some constant $A>0$ it holds
		{\beq
			\label{eq:Agmon est}
			\int_{\ann} \diff s \diff t \: (1 - \eps \curv t) \: e^{ A t } \lf\{ \lf| \psi(s,t) \ri|^2 + \lf| \lf( \lf(\eps\partial_s,\partial_t\ri) + i \tx\frac{\tilde{\aav}(s,t)}{\eps} \ri)  \psi(s,t) \ri|^2 \ri\} =  \OO(1),
 		\eeq
 		with (see, e.g., \cite[Eqs. (4.19) -- (4.20)]{CR2})
 		\beq
 			\label{eq: fake magnp}
 			\tilde{\aav}(s,t) : = \lf( (1 - \eps \curv t) \gav^{\prime}(s) \cdot \aavm(\rv(s,\eps t)) + \eps^2 \partial_s \phi_{\eps} \ri) {\bf e}_s.
		\eeq}
		In addition we are going to use two additional bounds proven in \cite[Eq. (10.21) and (11.50)]{FH2}:
		\beq
			\label{eq:sup est glm}
			\lf\| \psi \ri\|_{L^{\infty}(\ann)} \leq 1,	\qquad	\lf\| (\eps \partial_s, \partial_t) \psi \ri\|_{L^{\infty}(\ann)} \leq C.
		\eeq
	These bounds imply the following
	
	\begin{lem}[\textbf{Useful consequences of Agmon estimates}]
		\label{lem:Agmon est alternative}
		\mbox{}	\\
	 	Let $ \bar{t} = c_0|\log\eps| (1 + o(1)) $ for some $ c_0 $ large enough, then for any $ a, b, s_0 \in [0, 2\pi) $,
			\beq
				\label{eq:Agmon decay}
				\int_a^b \diff s \: |\psi(s,\bar{t})| = \OO(\eps^{\infty}),	\qquad \int_{\bar t}^{c_0|\log\eps|} \diff t \: |\psi(s_0,t)| = \OO(\eps^{\infty}).
			\eeq
	\end{lem}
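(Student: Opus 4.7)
The plan is to combine the exponential decay information provided by the Agmon estimate \eqref{eq:Agmon est} with one-dimensional Sobolev-type inequalities that transform integral control into pointwise control. The two bounds are obtained by applying such inequalities in complementary directions, the first identity integrating along $s$ (so a Sobolev inequality in $t$ is needed), the second integrating along $t$ (so a Sobolev inequality in $s$ is needed).

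For the first identity, I would apply a one-dimensional Sobolev embedding in the $t$-variable on a unit-length subinterval of $[0, c_0|\log\eps|]$ having $\bar t$ as an endpoint to obtain
\bdm
 |\psi(s,\bar t)|^2 \leq C \int_{\bar t - 2}^{\bar t} \lf( |\psi(s,t)|^2 + |\partial_t \psi(s,t)|^2 \ri) \diff t.
\edm
Since $\tilde{\aav}$ is purely tangential by \eqref{eq: fake magnp}, the term $|\partial_t \psi|^2$ appears directly in the covariant derivative controlled by \eqref{eq:Agmon est}. Integrating over $s \in [a,b]$ and exploiting $e^{At} \geq e^{A(\bar t - 2)}$ on the integration region, \eqref{eq:Agmon est} produces
\bdm
 \int_a^b |\psi(s,\bar t)|^2 \diff s \leq C e^{-A(\bar t - 2)} = \OO(\eps^{A c_0 (1+o(1))}),
\edm
and Cauchy-Schwarz in $s$ concludes, provided $c_0$ is chosen large enough that $A c_0/2$ exceeds any prescribed power.

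For the second identity the strategy is analogous, but the Sobolev inequality must be applied in the $s$-variable on a mesoscopic interval of length $\eps$ around $s_0$, which is the natural scale compatible with the sup bound on $\eps\partial_s\psi$ from \eqref{eq:sup est glm}. The subtle point is that \eqref{eq:Agmon est} controls the covariant derivative $(\eps\partial_s + i\tilde{\aav}_s/\eps)\psi$ rather than $\eps\partial_s\psi$; the elementary decomposition
\bdm
 |\eps \partial_s \psi|^2 \leq 2 \bigg| \lf(\eps \partial_s + i \tx\frac{\tilde{\aav}_s}{\eps}\ri) \psi \bigg|^2 + 2 \bigg| \tx\frac{\tilde{\aav}_s}{\eps} \bigg|^2 |\psi|^2
\edm
reduces the matter to the fact that $|\tilde{\aav}_s/\eps|$ has at most polynomial growth in $t$ on $\ann$, a consequence of the boundedness of $\aavm$ and the chosen gauge, so that after multiplication by $e^{At}$ exponential decay is preserved. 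The standard one-dimensional Sobolev inequality on the length-$\eps$ interval yields
\bdm
 |\psi(s_0,t)|^2 \leq C \eps^{-1} \int_{s_0 - \eps/2}^{s_0 + \eps/2} |\psi(s,t)|^2 \diff s + C \eps \int_{s_0 - \eps/2}^{s_0 + \eps/2} |\partial_s \psi(s,t)|^2 \diff s,
\edm
and integration in $t$ over $[\bar t, c_0 |\log\eps|]$ combined with the above decomposition and \eqref{eq:Agmon est} produces a bound of the form $\OO(\eps^{-1} e^{-A \bar t} \logi)$, which Cauchy-Schwarz in $t$ (the integration interval having length at most $c_0|\log\eps|$) converts into $\OO(\eps^{\infty})$ once $c_0$ is large.

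The main point requiring care is the balance between the factor $\eps^{-1}$ coming from the length-$\eps$ Sobolev interval in the second estimate and the only logarithmically large exponential decay rate $e^{-A \bar t}$: since $\bar t = c_0 |\log\eps|(1+o(1))$, the free parameter $c_0$ must be enlarged so that $A c_0$ dominates any fixed power of $\eps$. All the remaining ingredients are already contained in \eqref{eq:Agmon est} and \eqref{eq:sup est glm}, so I expect no further significant obstacle.
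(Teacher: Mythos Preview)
Your proposal is correct and follows essentially the same strategy as the paper: control point values of $|\psi|$ by integrals of $|\psi|$ and its derivative (the paper does this via a cutoff and the fundamental theorem of calculus, you via a one-dimensional Sobolev embedding---these are equivalent), then invoke the Agmon estimate~\eqref{eq:Agmon est}. The one simplification the paper exploits that you do not is the diamagnetic inequality $\left|\partial_s |\psi|\right| \leq \left|(\partial_s + i \tilde{\aav}_s/\eps^2)\psi\right|$: applying your Sobolev inequality to the real function $|\psi|$ rather than to $\psi$ itself gives control in terms of $|\partial_s|\psi||$, which is bounded directly by the covariant derivative without any need to decompose it or to estimate $|\tilde{\aav}_s/\eps|$.
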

	
	\begin{proof}
		We start by considering the first estimate: let $ \chi(t) $ be a suitable smooth function with support in $ [t_1,\bar{t}] $, with $ t_1 = \bar{t} - c $, for some $ c > 0 $, and such that $ 0 \leq \chi \leq 1$, $ \chi(\bar{t}) = 1 $ and $ |\partial_t \chi| \leq C $. Then one has
		\bml{
 			\int_a^b \diff s \: |\psi(s,\bar{t})| = \int_a^b \diff s \: \chi(\bar{t}) |\psi(s,\bar{t})| = \int_a^b \diff s \int_{t_1}^{\bar{t}} \diff t \: \lf[ \chi(t) \partial_t|\psi(s,t)| + |\psi(s,t)| \partial_t\chi(t) \ri] 	\\
			\leq C e^{-\frac{1}{2}A t_1} \bigg\{ \bigg[ \int_{\ann} \diff s \diff t \: e^{At} \lf| \partial_t|\psi(s,t)| \ri|^2 \bigg]^{1/2} + \bigg[ \int_{\ann} \diff s \diff t \: e^{At} \lf| \psi(s,t) \ri|^2 \bigg]^{1/2} \bigg\} = \OO(\eps^{\infty}),
		}
		by \eqref{eq:Agmon est}{, the diamagnetic inequality} and the assumption on $ t_1 $ and $ \bar{t} $. Indeed the factor $ e^{-\frac{1}{2}A t_1} = \eps^{\frac{1}{2}A c_0(1+o(1))} $ can be made smaller than any power of $ \eps $ by taking $ c_0 $ large enough. 
		
		For the second estimate we use a tangential cut-off function, i.e., a smooth monotone function $ \chi(s) $ with support\footnote{Let us assume that $ s_0 - 2\pi > C > 0 $, otherwise one can take as a support for $ \chi $ the complement set, i.e., $ [0, s_0] $.} in $ [s_0, 2\pi] $,  such that $ 0 \leq \chi \leq 1$, $ \chi(s_0) = 1 $, $ \chi(2\pi) = 0 $, and $ |\partial_s \chi| \leq C $. Then as in the estimate above (recall that $ \teps : = c_0|\log\eps| $)
		\bml{
 			\int_{\bar{t}}^{\teps} \diff t \: |\psi(s_0,t)| = \int_{\bar{t}}^{\teps} \diff t \: \chi(s_0) |\psi(s_0,t)| = - \int_{s_0}^{2\pi} \diff s \int_{\bar{t}}^{\teps} \diff t \: \lf[ \chi(s) \partial_s|\psi(s,t)| + |\psi(s,t)| \partial_s\chi(s) \ri] 	\\
			\leq C e^{-\frac{1}{2}A \bar{t}} \bigg\{ \eps^{-1} \bigg[ \int_{\ann} \diff s \diff t \: e^{At} \lf| \eps \partial_s|\psi(s,t)| \ri|^2 \bigg]^{1/2} + \bigg[ \int_{\ann} \diff s \diff t \: e^{At} \lf| \psi(s,t) \ri|^2 \bigg]^{1/2} \bigg\} = \OO(\eps^{\infty}),
		}
		where the main ingredients are again \eqref{eq:Agmon est}{, the diamagnetic inequality} and the assumption on~$\bar{t}$.
	\end{proof}	
	
	We now introduce some reduced energy functionals defined over the cells we have introduced before, see Subsection~\ref{sec:trial state} for the notation. 
		{We are going to perform an energy decoupling \`a la Lassoued-Mironescu~\cite{LM} in each cell:}  we write 
		\beq
			\label{eq:splitting psi}
			\psi(s,t) = : u_n(s,t) f_n(t) \exp \lf\{-i \tx\left(\frac{\alj}{\eps} + \deps\right)s \ri\},
		\eeq
		and introduce the reduced functionals
		\beq
			\label{eq:Ej}
			\E_n [u] : =  \int_{\cellj} \diff s \diff t \lf(1 - \eps \kj t \ri) f_n^2 \lf\{ \lf| \partial_t u \ri|^2 + \tx\frac{1}{(1- \eps \kj t)^2} \lf| \eps \partial_s u \ri|^2 - 2 \eps b_n(t) J_s[u] + \tx\frac{1}{2 \hex} f_n^2 \lf(1 - |u|^2 \ri)^2  \ri\},	
		\eeq
		with
		\beq
			b_n(t) : = \frac{t + \alj - \half \eps \kj t^2}{(1 - \eps \kj t)^2},
		\eeq
		and
		\beq
			J_s[u] : = (i u, \partial_s u) = \tx\frac{i}{2} \lf(u^* \partial_s u - u \partial_s u^*\ri).
		\eeq

	Note that in~\eqref{eq:Ej} the curvature is approximated by its mean value in the cell $\cellj$. These objects play a crucial role in the sequel, as per 
	\begin{lem}[\textbf{Lower bound in terms of the reduced functionals}]\label{lem:reduc func}\mbox{}\\
	 With the previous notation
	 \begin{equation}\label{eq:reduc func}
	  \annf [\psi] \geq \int_{0}^{|\partial \Omega|} \diff s \: \eones(k(s)) + \sum_{n=1} ^{\neps} \E_n[u_n] - C\eps^2 \logi
	 \end{equation}
	\end{lem}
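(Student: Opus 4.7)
The plan is to perform a local Lassoued--Mironescu energy decoupling in each cell $\cellj$, using the reference density $f_n = f_{\kj}$ and phase $\alj = \alpha(\kj)$ associated to the mean curvature $\kj$. First I would split $\annf[\psi]$ as the sum over $n$ of its restrictions to the cells $\cellj$ and, inside each cell, replace $k(s)$ by $\kj$ in the Jacobian $1 - \eps k(s) t$, in the factor $(1-\eps k(s) t)^{-2}$ multiplying $|(\eps\dd_s + i \aae)\psi|^2$, and in the expression of $\aae$ itself. Since $k$ appears only with an $\eps$ prefactor, since $k(s) - \kj = O(\eps)$ in $\cellj$ by smoothness of $\dd\Om$, and since $\psi$ and $\eps\dd_s\psi$ are locally controlled through the Agmon estimates~\eqref{eq:Agmon est}, this replacement costs $O(\eps^3 \logi)$ per cell, hence $O(\eps^2 \logi)$ once summed over $\neps = O(\eps^{-1})$ cells.

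Next I would insert $\psi = u_n f_n \exp(-i(\alj/\eps + \deps)s)$ cell-wise; a direct computation gives
\[
(\eps\dd_s + i \aae)\psi = e^{-i(\alj/\eps + \deps)s}\bigl[\eps f_n \dd_s u_n - i (t + \alj - \tfrac{1}{2} \eps \kj t^2) f_n u_n\bigr],
\]
together with $|\dd_t\psi|^2 = (\dd_t f_n)^2 |u_n|^2 + f_n^2 |\dd_t u_n|^2 + f_n (\dd_t f_n) \dd_t |u_n|^2$. I would then integrate the cross term in $t$: the boundary at $t=0$ vanishes by Neumann, and the boundary at $t = \teps$ is $O(\eps^\infty)$ by combining the exponential smallness of $f_n \dd_t f_n$ with the tangential decay estimate of Lemma~\ref{lem:Agmon est alternative} (with an auxiliary splitting at some $\btik$ to handle the ratio $|\psi|^2/f_n^2$ in the deep tail, following~\cite[Section~5]{CR2}). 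The Euler--Lagrange equation for $f_n$ then lets me substitute for $(1-\eps\kj t) f_n \dd_{tt} f_n$ in the resulting bulk integral, after which the $(\dd_t f_n)^2 |u_n|^2$ and $(1-\eps\kj t) V_{\kj} f_n^2 |u_n|^2$ contributions cancel exactly against their counterparts from the direct expansion, and the remaining $f_n^2 |u_n|^2$ and $f_n^4 |u_n|^2$ pieces recombine with the nonlinearity $-\tfrac{1}{2b}(2|\psi|^2 - |\psi|^4)$ to produce $(1-\eps\kj t) \tfrac{f_n^4}{2b}(1-|u_n|^2)^2 - (1-\eps\kj t) \tfrac{f_n^4}{2b}$.

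The outcome of this per-cell computation is an identity of the form $\spac\, \eones(\kj) + \E_n[u_n] + O(\eps^3 \logi) + O(\eps^\infty)$, where the leading term is identified through $\eones(\kj) = -\tfrac{1}{2b}\int_{\ieps} (1-\eps\kj t) f_n^4\,dt$, obtained by testing the variational equation for $f_n$ against $f_n$ itself and integrating by parts. Summing over $n$ and replacing $\sum_n \spac\, \eones(\kj)$ by $\int_0^{|\dd\Om|} \eones(k(s))\,ds$ via the Lipschitz bound $|\eones(\kj) - \eones(k(s))| \leq C \eps \spac \logi$ furnished by Proposition~\ref{pro:1D curv} costs one further $O(\eps^2 \logi)$ error and yields exactly~\eqref{eq:reduc func}. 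The main technical difficulty will be the IBP boundary term at $t = \teps$: while $f_n(\teps)$ and $\dd_t f_n(\teps)$ are exponentially small, the ratio $|u_n|^2 = |\psi|^2/f_n^2$ is not a priori bounded there; the remedy is the tangential cutoff at $\btik$ mentioned above, for which the sharp pointwise decay estimates in Lemma~\ref{lem:Agmon est alternative} are precisely what is needed.
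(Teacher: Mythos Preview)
Your approach is essentially the same as the paper's: replace $k(s)$ by its cell average $k_n$, perform the Lassoued--Mironescu decoupling cell by cell, and convert the resulting Riemann sum into the integral via~\eqref{eq:vari 1D energy}. Two comments on execution:

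\medskip

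\textbf{The boundary term at $t=\teps$ is not an issue.} You describe this as ``the main technical difficulty'' and propose an auxiliary cutoff near $\btik$ and an appeal to Lemma~\ref{lem:Agmon est alternative}. But the minimizer $f_n$ satisfies Neumann conditions at \emph{both} endpoints (Proposition~\ref{pro:min fone}), so $f_n'(\teps)=0$ exactly and the boundary contribution $[(1-\eps k_n t) f_n f_n' |u_n|^2]_0^{\teps}$ vanishes identically. The decoupling is therefore an exact identity, $\glfj[\psi]=\spac\,\eones(k_n)+\E_n[u_n]$, with no error term from the integration by parts; this is precisely what the paper invokes by citing~\cite[Lemma~5.2]{CR2}. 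Your workaround would still succeed, but it is unnecessary and obscures the fact that the only approximation in the whole argument comes from the curvature replacement and the Riemann sum.

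\medskip

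\textbf{Control of the curvature replacement.} The paper uses the pointwise bounds~\eqref{eq:sup est glm} rather than the weighted $L^2$ Agmon estimate~\eqref{eq:Agmon est} you cite. Both suffice, but the $L^\infty$ route is more direct: with $|k(s)-k_n|=\OO(\eps)$, $|\aae-a_n|=\OO(\eps^2|\log\eps|^2)$, $\|\psi\|_\infty\le 1$, $\|(\eps\dd_s,\dd_t)\psi\|_\infty\le C$, and cell area $\OO(\eps|\log\eps|)$, each of the error integrals is immediately $\OO(\eps^3\logi)$ per cell, summing to $\OO(\eps^2\logi)$ as you state.
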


	\begin{proof}
	With the above cell decomposition, we can estimate	
		\beq
			\label{step 0}
			\annf[\psi] \geq \sum_{n=1}^{\neps} \glfj[\psi] -C \eps ^2 |\log\eps|^{\infty},
		\eeq
		where
		\beq
			\glfj[\psi] : = \int_{\cellj} \diff s \diff t \lf(1 - \eps \kj t \ri) \lf\{ \lf| \partial_t \psi \ri|^2 + \tx\frac{1}{(1- \eps \kj t)^2} \lf| \lf(\eps \partial_s + i a_n(t) \ri) \psi \ri|^2  - \tx\frac{1}{2 \hex} \lf[ 2|\psi|^2 - |\psi|^4 \ri]  \ri\},
		\eeq
		and 
		\beq
			a_n(t) : = - t + \half \eps \kj t^2 + \eps \deps.
		\eeq
		The remainder term has been estimated as follows: the replacement of $ k(s) $ by $ k_n $ produces two different rests which can be estimated separately, i.e.,
		\beq
			\label{eq:lb remainder 1}
 			 \int_{\cellj} \diff s \diff t \: \lf(k(s) - \kj \ri) t \lf\{ \lf| \partial_t \psi \ri|^2 - \tx\frac{1}{2 \hex} \lf[ 2|\psi|^2 - |\psi|^4 \ri]  \ri\} = \OO(\eps^2|\log\eps|^{\infty}),
		\eeq
		\beq
			\label{eq:lb remainder 2}
			\frac{1}{\eps} \int_{\cellj} \diff s \diff t \: \lf\{ \tx\frac{1}{1- \eps k(s) t} \lf| \lf(\eps \partial_s + i \aae(s,t) \ri) \psi \ri|^2 -  \tx\frac{1}{1- \eps k_n t} \lf| \lf(\eps \partial_s + i a_n(t) \ri) \psi \ri|^2 \ri\} = \OO(\eps^2|\log\eps|^{\infty}).
		\eeq
		In estimating the first error term \eqref{eq:lb remainder 1}, we use the fact that 
		$$ k(s) - k_n = \OO(\eps)$$
		and the bounds \eqref{eq:sup est glm} together with the cell size. For the second estimate the same ingredients are sufficient as well, in addition to the simple bound
		\bdm
			\sup_{(s,t) \in \cellj} \lf| \aae(s,t) - a_n(t) \ri| \leq C \eps \sup_{(s,t) \in \cellj} \lf|k(s) - k_n \ri| |\log\eps|^2 = \OO(\eps^2 |\log\eps|^2).
		\edm

		Inside any given cell $ \cellj $ we can then decouple the functional in the usual way (see~\cite[Lemma~5.2]{CR2} for a statement in this context) to obtain
		\beq
			\label{eq:splitting}
			\glfj[\psi] = \eone_{\star}(k_n) \spac + \E_n[u_n].
		\eeq
		The first term in~\eqref{eq:splitting} is a Riemann sum approximation of the leading order term in~\eqref{eq:energy lb}: using \eqref{eq:1D energy continuous}, we immediately get
		\bml{
    			\label{eq:Riemann sum approx}
			\sum_{n = 1}^{\neps} \eone_{\star}(\kj) \spac = \sum_{n = 1}^{\neps} \eone_{\star}(\kj) (s_{n+1} - s_n)  	\\
			= \sum_{n = 1}^{\neps}\int_{s_n}^{s_{n+1}} \diff s \lf[ \eones(k(s)) + \OO(\eps^2 \logi) \ri] =  \int_{0}^{|\partial \Omega|} \diff s \: \eones(k(s)) + \OO(\eps^2 \logi),
		}
		which concludes the proof.
	\end{proof}

	\subsection{Lower bounds to reduced functionals}\label{sec:low bound main}
	
	In view of our previous reductions {in Lemma~\ref{lem:reduc func}}, the final lower bound~\eqref{eq:energy lb} is a consequence of the following lemma
	
	\begin{lem}[\textbf{Lower bound on the reduced functionals}]\label{lem:bound reduc func}\mbox{}\\
	 With the previous notation, we have 
	 \begin{multline}\label{eq:bound reduc func}
	 \sum_{n=1} ^{\neps} \E_n [u_n] \geq   |\log \eps| ^{-4} \sum_{n = 1}^{\neps} \int_{\cellj} \diff s \diff t \: \lf(1 - \eps \kj t \ri) f_n^2 \lf[ \lf| \partial_t u_n \ri|^2 + \tx\frac{1}{(1- \eps \kj t)^2} \lf| \eps \partial_s u_n \ri|^2 \ri] 	\\
		+ \disp\frac{1}{2 \hex \eps}\sum_{n = 1}^{\neps} \int_{\cellj} \diff s \diff t \: \lf(1 - \eps \kj t \ri)  f_n ^4 \lf( 1 - |u_n|^2 \ri)^2 - C \eps ^2 \logi
	 \end{multline}
	\end{lem}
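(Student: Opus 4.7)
\medskip

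\noindent\textit{Proof plan.} The strategy is to execute, locally in each cell $\cellj$, the energy decoupling argument used for the disc case in~\cite{CR2}, then patch cells together and absorb the new boundary contributions that appear because the reference profile jumps from $f_n$ to $f_{n+1}$ across each seam. Fix $n$ and work with $\E_n[u_n]$. Using $\partial_t \Fk = 2(1-\eps \kj t) b_n f_n^2$, integrate by parts in $t$ the coupling term $-2\eps(1-\eps\kj t) b_n f_n^2 J_s[u_n]$: the $t=0$ boundary vanishes since $F_n(0)=0$, while the $t=\teps$ boundary is $\OO(\eps^{\infty})$ by the Agmon estimates of Lemma~\ref{lem:Agmon est alternative} and the exponential decay of $F_n$ for large $t$ (see Lemma~\ref{lem:F prop} in the appendix). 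What remains is $\eps \int_{\cellj} F_n \,\partial_t J_s[u_n]\,\diff s\,\diff t$.

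\medskip

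Next apply the identity $\partial_t J_s[u_n] - \partial_s J_t[u_n] = 2\Im(\overline{\partial_t u_n}\,\partial_s u_n)$. The skew--symmetric cross term is handled by a weighted Cauchy--Schwarz so that, combined with the quadratic kinetic terms present in $\E_n$, it replaces the weight $(1-\eps\kj t) f_n^2$ by the cost function $(1-\eps\kj t) \Kk_n = (1-\eps\kj t)(f_n^2 + F_n)$; this is exactly the step carried out in \cite{CR2}. The remaining piece $\eps\int_{\cellj} F_n\,\partial_s J_t[u_n]\,\diff s\,\diff t$ integrates by parts in $s$ and produces only contributions at $s=s_n,s_{n+1}$ since $F_n$ depends only on $t$. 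Sum over $n$: at each interior seam $s_{n+1}$, continuity of $\psi$ forces $u_{n+1}|_{s_{n+1}} = (f_n/f_{n+1})u_n|_{s_{n+1}} e^{i\varphi}$, so $J_t[u_{n+1}]|_{s_{n+1}} = (f_n/f_{n+1})^2 J_t[u_n]|_{s_{n+1}}$, and the two adjacent boundary contributions combine into $\eps\int_{\ieps} \ijj(t) J_t[u_n]|_{s_{n+1}}\diff t$ with $\ijj(t) = F_n(t) - F_{n+1}(t)\,f_n^2(t)/f_{n+1}^2(t)$; the loop closure at $s=s_1=s_{\neps+1}$ is taken care of by the periodicity of $\psi$.

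\medskip

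To recover a lower bound cellwise, convert each such seam term back to a bulk integral in $\cellj$ by means of the cutoff $\chi_n(s)$ with $\chi_n(s_n)=0$, $\chi_n(s_{n+1})=1$, linear (hence $|\partial_s\chi_n|\leq C\spac^{-1}\leq C\eps^{-1}$):
\[
\eps\int_{\ieps} \ijj\, J_t[u_n]\big|_{s_{n+1}}\diff t \;=\; \eps\int_{\cellj}\partial_s\bigl(\chi_n\,\ijj\,J_t[u_n]\bigr)\,\diff s\,\diff t.
\]
Expand the derivative, apply once again the commutation identity on $\partial_s J_t=\partial_t J_s - 2\Im(\overline{\partial_t u_n}\partial_s u_n)$, and integrate by parts in $t$. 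The resulting $t$-boundary terms vanish: at $t=0$ because $\ijj(0)=0$ (both $F_n(0)$ and $F_{n+1}(0)$ vanish), at $t=\teps$ by the Agmon decay. The net effect is to add to the cost function $\Kk_n$ a correction controlled pointwise by $|\partial_s\chi_n||\ijj| + |\chi_n||\partial_t\ijj|$, so that one obtains, in each cell, a modified cost function $\Kti$ satisfying
\[
\Kti(s,t) \;\geq\; \Kk_n(t) - C\eps^{-1}|\ijj(t)| - C\,|\partial_t \ijj(t)|
\]
together with an $L^2$--controlled cross term from Cauchy--Schwarz.

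\medskip

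The decisive input is then Corollary~\ref{cor:est log cost} applied with $k=\kj,\,k'=k_{n+1}$, which together with $|\kj-k_{n+1}|\leq C\eps$ yields
\[
\bigg|\frac{\ijj(t)}{f_n^2(t)}\bigg| + \bigg|\frac{\partial_t \ijj(t)}{f_n^2(t)}\bigg| \;\leq\; C\eps|\log\eps|^{\infty} \quad\text{on } [0,\bteps].
\]
Hence the subtraction from $\Kk_n$ is at most $C|\log\eps|^{\infty}\,f_n^2$, which is negligible compared to the positivity bound $\Kk_n \geq \delta_\eps f_n^2$ with $\delta_\eps \geq c|\log\eps|^{-4}$ provided by Lemma~\ref{lem:K positive} on $[0,\btik]$. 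Consequently $\Kti \geq \tfrac12\delta_\eps f_n^2 \geq c|\log\eps|^{-4} f_n^2$ on that interval, and in the complementary region $[\btik,\teps]$ the contribution of $u_n$ to the reduced energy is itself $\OO(\eps^{\infty})$ via Lemma~\ref{lem:Agmon est alternative}. Summing over cells, the bulk kinetic lower bound in \eqref{eq:bound reduc func} follows, while the quartic term in $\E_n$ is untouched and produces the nonlinear term. All errors accumulated along the way (boundary pieces at $t=\teps$, curvature replacement residues, and the cross--term residues) are at worst $\OO(\eps^{\infty})$ per cell, giving a total $\OO(\eps^2|\log\eps|^{\infty})$ once multiplied by $\neps \sim \eps^{-1}$.

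\medskip

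The main obstacle is unquestionably Step 5: controlling $\ijj$ and $\partial_t \ijj$ \emph{relatively} to $f_n^2$, i.e.\ even in the region where $f_n$ is exponentially small. Without this relative smallness, the new correction terms would overwhelm the positivity margin of $\Kk_n$, and the cellwise argument would break down. This is precisely what Corollary~\ref{cor:est log cost}, built on the continuity analysis of Proposition~\ref{pro:1D curv} and the logarithmic--derivative estimate of Proposition~\ref{pro:est log der}, is tailored to provide.
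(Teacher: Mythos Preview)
Your overall architecture is the paper's: decouple in each cell, integrate by parts using $F_n$, push the seam contributions back into bulk integrals with a cutoff $\chi_n$, and invoke Corollary~\ref{cor:est log cost} for the relative smallness of $\ijj$ and $\partial_t\ijj$. But the final accounting contains a real gap.

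You propose to absorb the seam corrections into a modified cost function $\Kti \geq \Kk_n - C\eps^{-1}|\ijj| - C|\partial_t\ijj|$ and then assert this remains $\geq \tfrac12\delta_\eps f_n^2$. With $|\partial_s\chi_n|\leq C\eps^{-1}$ and, from Corollary~\ref{cor:est log cost} together with $|k_n-k_{n+1}|=\OO(\eps)$, the bound $|\ijj|\leq C\eps\logi\, f_n^2$, the first subtraction is $C\eps^{-1}|\ijj|\leq C\logi\, f_n^2$. This is a \emph{large} positive power of $|\log\eps|$ times $f_n^2$, whereas the positivity reserve in $\Kk_n$ is only $|\log\eps|^{-4} f_n^2$. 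The subtraction therefore destroys, not perturbs, the positivity; your inequality $\Kti\geq \tfrac12\delta_\eps f_n^2$ fails. The informal sketch in Section~\ref{sec:sketch} is deliberately imprecise on exactly this point.

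The paper's actual proof does \emph{not} fold these terms into the cost function. Instead, after using the unmodified positivity of $\Kk_n$ in Step~1 (with a reserve $\de=2|\log\eps|^{-4}$ of kinetic energy kept aside), the seam terms in Step~2 are estimated separately via Cauchy--Schwarz exploiting the specific structure $|J_s[u_n]|\leq |u_n|\,|\partial_s u_n|$ and $f_n^2|u_n|^2=|\psi|^2\leq 1$. Concretely, for a representative term,
\[
\eps\int_{\domj}\!|\partial_t\ijj|\,|J_s[u_n]| \leq C\eps\logi\!\int_{\domj}\! f_n^2|u_n|\,|\eps\partial_s u_n| \leq C\eps\logi\!\int_{\domj}\!\Big[\tfrac{1}{\delta}\tfrac{f_n^2}{(1-\eps k_n t)}|\eps\partial_s u_n|^2 + \delta|\psi|^2\Big],
\]
and the choice $\delta=\eps|\log\eps|^a$ with $a$ large makes the kinetic coefficient $\leq C|\log\eps|^{-5}$ (absorbable by the reserve $\de$) while the $|\psi|^2$ half contributes only $\OO(\eps^3\logi)$ per cell. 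The same mechanism handles the $\partial_s\chi_n\,\ijj\,J_t[u_n]$ term. This is the missing idea: one side of Cauchy--Schwarz must land on $|\psi|^2$, not on the kinetic energy, or the powers of $|\log\eps|$ do not close. Your claim that all residual errors are $\OO(\eps^\infty)$ per cell is also incorrect; they are $\OO(\eps^3\logi)$ per cell, giving the stated total $\OO(\eps^2\logi)$.
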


	Proposition~\ref{pro:energy lb} now follows by a combination of Lemmas~\ref{lem:low bound},~\ref{lem:reduc func} and~\ref{lem:bound reduc func} because the two sums in the right-hand side of~\eqref{eq:bound reduc func} are positive. These terms will prove useful to obtain our density and degree estimates in Section~\ref{sec:density degree}.
	
	We can now focus on the proof of Lemma~\ref{lem:bound reduc func}, which is the core argument of the proof of Proposition~\ref{pro:energy lb}.

	\begin{proof}[Proof of Lemma \ref{lem:bound reduc func}] 
		
		The proof is split into two rather different steps. In the first one we essentially follow the strategy of~\cite[Section 5.2]{CR2} to control the main part of the only potentially negative term in~\eqref{eq:Ej}. This is done locally inside each cell and uses mainly the positivity of the cost function, Lemma~\ref{lem:K positive}. This strategy however involves an application of Stokes' formula and subsequent further integrations by parts to put the so obtained terms in such a form (involving only first order derivatives, see \eqref{step 1}) that they can be compared with the kinetic one. This produces unphysical surface terms located on the boundaries of the (rather artificial) cells we have introduced. The second step of the proof consists in controlling those, which requires to sum them all and reorganize the sum in a convenient manner. It is in this step only that we cease working locally inside each cell. 
		
		\paragraph{Step 1. Lower bound inside each cell.} First, we split the integration over two regions, one where a suitable lower bound to the density $ f_n $ holds true and another one yielding only a very small contribution. More precisely we set
		\beq
			\rest : = \lf\{ (s,t) \in \cellj: f_n(t) \geq |\log\eps|^3 f_n(\teps) \ri\}.
		\eeq
		Note that the monotonicity for large $ t $ of $ f_n $ (see Proposition \ref{pro:min fone}) guarantees that 
		\beq
			\label{tj}
			\rest : =  [s_n, s_{n+1}] \times [\tj, \teps],	\qquad	\tj = \teps + \OO(\log|\log\eps|).
		\eeq
		Now we use the potential function $ F_n(t) $ defined as
		\beq
			F_n(t) : = 2 \int_0^t \diff \eta \: (1 - \eps k_n \eta) f_n^2(\eta) b_n(\eta) =  2 \int_0^t \diff \eta \: f_n^2(\eta) \frac{\eta + \alj - \half \eps \kj \eta^2}{1 - \eps \kj \eta},
		\eeq
		and compute
		\begin{equation} 
			\label{eq:step 0}
 			- 2 \eps  \int_{\cellj} \diff s \diff t \: \lf(1 - \eps k_n t \ri) f_n^2(t) \beps(t) J_s[u_n] = \eps \int_{\cellj} \diff s \diff t \:  F_n (t) \partial_t J_s[u_n],
		\end{equation}
		where we have exploited the vanishing of $ F_n $ at $ t = 0 $ and $ t = \teps $. Now we split the r.h.s. of the above expression into an integral over $ \domj : = \cellj \setminus \rest $ and a rest. In order to compare the first part with the kinetic energy and show that the sum is positive, we have to perform another integration by parts:
		\bml{
 			\label{step 1}
			\eps \int_{\domj} \diff s \diff t \: F_n(t) \partial_t J_s[u_n] = 2 \eps \int_0^{\tj} \diff t \int_{s_n}^{s_{n+1}} \diff s \: F_n(t)  \lf( i  \partial_t u_n, \partial_s u_n \ri) \\ + \eps \int_0^{\tj} \diff t F_n(t) \lf[ J_t[u_n](s_{n+1}, t) -  J_t[u_n](s_{n}, t) \ri]. 
		}
		The first term in~\eqref{step 1} can be bounded by using some kinetic energy:
		\bml{
 			\label{step 3}
			 2 \eps \int_{\domj} \diff t \diff s \: F_n(t)  \lf( i  \partial_t u_n, \partial_s u_n \ri) \geq - 2 \int_{\domj} \diff s \diff t \: \lf| F_n (t) \ri| \lf|  \partial_t u_n \ri| \lf| \eps \partial_s u_n \ri| \\
			 \geq - \int_{\domj} \diff s \diff t \: (1 - \eps \kj t) F_n (t) \lf[ \lf| \partial_t u_n \ri|^2 + \tx\frac{1}{(1 - 
			 \eps \kj t)^2} \lf| \eps \partial_s u_n \ri|^2 \ri],
		}
		where we have used the inequality $ab\leq \frac{1}{2} (\delta a^2 + \delta ^{-1} b^2)$ and the negativity of $ F_n(t) $ (see Lemma~\ref{lem:F prop}). Combining the above lower bound with~\eqref{eq:Ej} and~\eqref{step 0} and dropping the part of the kinetic energy located in $ \rest $, we get
		\bml{
 			\label{step 4}
			\E_n[u_n] \geq  \int_{\domj} \diff s \diff t \: \lf(1 - \eps \kj t \ri) K_n(t) \bigg[ \lf| \partial_t u_n \ri|^2 + \tx\frac{1}{(1- \eps \kj t)^2} \lf| \eps \partial_s u_n \ri|^2 \bigg] 	\\
			+ \eps \int_0^{\tj} \diff t F_n(t) \lf[ J_t[u_n](s_{n+1}, t) -  J_t[u_n](s_{n}, t) \ri] + \eps \disp\int_{\rest} \diff s \diff t \:  F_n (t) \partial_t J_s[u_n]	\\
			 + \de \int_{\cellj} \diff s \diff t \: \lf(1 - \eps \kj t \ri) f_n^2 \lf[ \lf| \partial_t u_n \ri|^2 
			+ \tx\frac{1}{(1- \eps \kj t)^2} \lf| \eps \partial_s u_n \ri|^2 \ri] 	\\+  \disp\frac{1}{2 \hex} \int_{\cellj} \diff s \diff t \: \lf(1 - \eps \kj t \ri)  f_n  ^4 \lf( 1 - |u_n|^2 \ri)^2,
		}
		where 
	\beq
		\label{eq:Kj}
		K_n (t) : = K_{k_n}(t),	
	\eeq
	is the cost function defined in \eqref{eq:Kk}, for some given $\de$, satisfying \eqref{eq:de}. The third term in \eqref{step 4} is bounded from below by a quantity smaller than any power of $ \eps $, provided $ c_0 $ is chosen large enough. This is shown using the same strategy as in \cite[Eq. (5.21) and following discussion]{CR2} and we skip the details for the sake of brevity. For the first term we use the positivity of $K_n$ provided by Lemma~\ref{lem:K positive}. We then conclude 
	\bml{
 			\label{eq:step 4}
			\E_n[u_n] \geq  \eps \int_0^{\tj} \diff t F_n(t) \lf[ J_t[u_n](s_{n+1}, t) -  J_t[u_n](s_{n}, t) \ri] \\
			 + \de \int_{\cellj} \diff s \diff t \: \lf(1 - \eps \kj t \ri) f_n^2 \lf[ \lf| \partial_t u_n \ri|^2 
			+ \tx\frac{1}{(1- \eps \kj t)^2} \lf| \eps \partial_s u_n \ri|^2 \ri] 	\\+  \disp\frac{1}{2 \hex} \int_{\cellj} \diff s \diff t \: \lf(1 - \eps \kj t \ri)  f_n  ^4 \lf( 1 - |u_n|^2 \ri)^2 + \OO (\eps ^{\infty}),
		}
	and there only remains to bound the first term on the r.h.s. from below. We are not actually able to bound the term coming from cell $ n $ separately, so in the next step we put back the sum over cells.    
		
	\paragraph{Step 2. Summing and controlling boundary terms.} We now conclude the proof of~\eqref{eq:bound reduc func} by proving the following inequality: 
	\begin{multline}\label{eq:control boundary terms}
		\eps \sum_{n=1} ^{\neps} \int_0^{\tj} \diff t F_n(t) \lf[ J_t[u_n](s_{n+1}, t) -  J_t[u_n](s_{n}, t) \ri] \geq \\ -C|\log \eps| ^{-5} \sum_{n=1} ^{\neps} \int_{\cellj} \diff s \diff t \: \lf(1 - \eps \kj t \ri) f_n^2 \lf[ \lf| \partial_t u_n \ri|^2 
			+ \tx\frac{1}{(1- \eps \kj t)^2} \lf| \eps \partial_s u_n \ri|^2 \ri] -C \eps ^2 \logi.
	\end{multline}
	Grouping~\eqref{eq:step 4} and~\eqref{eq:control boundary terms}, choosing $\de = 2 |\log \eps| ^{-4}$ (which we are free to do) concludes the proof. 
	
	We turn to our claim~\eqref{eq:control boundary terms}. Once we have put back the sum over all cells the idea is to associate the two terms evaluated on the same boundary, which come from two adjacent cells and therefore contain two different densities:
	\bml{
		\label{step 5}
		\eps \sum_{n=1}^{\neps}  \int_0^{\tj} \diff t F_n(t) \lf[ J_t[u_n](s_{n+1}, t) -  J_t[u_n](s_{n}, t) \ri] 	\\
		= \eps \sum_{n=1}^{\neps} \bigg[ \int_0^{\tj} \diff t \lf[ F_n(t) J_t[u_n](s_n,t) - F_{n+1}(t) J_t[u_{n+1}](s_n,t) \ri] + R_n \bigg],
	}
	where, assuming without loss of generality that $ \tj < \tjj $, 
	\beq
		R_n : = - \int_{\tj}^{\tjj} \diff t \: F_{n+1}(t) J_t[u_{n+1}](s_{n+1},t),
	\eeq
	If on the other hand $  \tj > \tjj $, in~\eqref{step 5} $ \tj $ should be replaced with $ \tjj $ and in place of $ R_n $ one would find
	\bdm
		\int_{\tjj}^{\tj} \diff t \: F_{j}(t) J_t[u_{j}](s_{n+1},t).
	\edm 
	In other words the remainder $ R_n $ is inherited from the fact that the decomposition $ \cellj = \domj \cup \rest $ clearly depends on $ n $ and the boundary terms in~\eqref{step 5} do not compensate exactly. However it is clear from what follows that the estimate of such a boundary term is the same in both cases and essentially relies on the second inequality in~\eqref{eq:Agmon decay}: recalling that 
	$$ f_{n+1}^2(t) + F_{n+1}(t) \geq 0 $$
	for any $ t \leq \tjj $, we have
	\bml{
 		|R_n| = \int_{\tj}^{\tjj} \diff t \: \lf| F_{n+1}(t) \ri| \lf| J_t[u_{n+1}](s_{n+1},t) \ri| \leq  \int_{\tj}^{\tjj} \diff t \: f_{n+1}^2(t) \lf|u_{n+1}(s_{n+1},t) \ri| \lf| \partial_t u_{n+1}(s_{n+1},t) \ri| 	\\
		\leq \int_{\tj}^{\tjj} \diff t \: \lf|\psi(s_{n+1},t) \ri| \lf[ \lf| \partial_t \psi (s_{n+1},t) \ri| +  \lf|u_{n+1}(s_{n+1},t) \ri| \lf| \partial_t f_{n+1} (t) \ri| \ri] \\
		\leq C |\log\eps|^3 \int_{\tj}^{\tjj} \diff t \: \lf|\psi(s_{n+1},t) \ri| = \OO(\eps^{\infty}),
	}
	where we have used the bounds \eqref{eq:sup est glm} and \eqref{eq:fal derivative}, i.e., $ |f_{n+1}^{\prime}| \leq |\log\eps|^3 f_{n+1}(t) $. The identity \eqref{step 5} hence yields
	\bml{
		\label{step 6}
 		 \eps \sum_{n=1}^{\neps}  \int_0^{\tj} \diff t F_n(t) \lf[ J_t[u_n](s_{n+1}, t) -  J_t[u_n](s_{n}, t) \ri]  \\
		= \eps \sum_{n=1}^{\neps} \int_0^{\tj} \diff t \lf[ F_n(t) J_t[u_n](s_n,t) - F_{n+1}(t) J_t[u_{n+1}](s_n,t) \ri]  + \OO(\eps^{\infty}).
	}
	Using now the definitions~\eqref{eq:splitting psi} of $ u_n $ and $ u_{n+1} $, we get
	\beq
		u_{n+1}(s,t) = \frac{f_n(t)}{f_{n+1}(t)} e^{i(\aljj - \alj) s} u_n(s,t),
	\eeq
	so that
	\beq
		J_t[u_{n+1}](s_n,t) = i \gjj(t) \gjj^{\prime}(t) \lf| u_n(s_n,t) \ri|^2 + \gjj^2(t) J_t[u_n](s_n,t),
	\eeq
	where we have set
	\beq
		\gjj(t) : = \frac{f_n(t)}{f_{n+1}(t)}.
	\eeq
	Then we can compute
	\begin{multline*}
 		\eps \int_0^{\tj} \diff t \lf[ F_n J_t[u_n](s_n,t) - F_{n+1} J_t[u_{n+1}](s_n,t) \ri] 
 		\\= \eps\int_0^{\tj} \diff t \: \lf[ F_n(t)  - F_{n+1}(t) \gjj^2(t) \ri] J_t[u_n](s_n,t)	\\
		\\- \frac{i\eps}{2} \int_0^{\tj} \diff t \:  F_{n+1}(t) \partial_t \lf( \gjj^2(t) \ri) \lf| u_n(s_n,t) \ri|^2,	
	\end{multline*}
	but we know that the l.h.s. of the above expression is real, so that we can take the real part of the identity above obtaining
	\beq
		\eps \int_0^{\tj} \diff t \lf[ F_n J_t[u_n](s_n,t) - F_{n+1} J_t[u_{n+1}](s_n,t) \ri] = \eps \int_0^{\tj} \diff t \: \lf[ F_n  - F_{n+1} \gjj^2 \ri] J_t[u_n](s_n,t).
	\eeq
	To estimate the r.h.s. we integrate by parts back by introducing a suitable cut-off function. Let, for any given $ n = 1, \ldots, \neps $, $ \chi_n(s) $ be a suitable smooth function, such that 
	$$ \chi_n(s_n) = 1, \qquad \chi\lf(\tx\frac12 ( s_{n} + s_{n+1})\ri) = 0 $$
	and
	\beq
		 \lf[  s_n, \half \lf( s_{n} + s_{n+1} \ri) \ri)\subset \supp(\chi_n), 	\qquad	\lf| \partial_s \chi_n \ri| \leq C{\eps^{-1}}.
	\eeq
	We can rewrite
	\bml{
 		\label{step 7}
 		\eps \int_0^{\tj} \diff t \: \lf[ F_n  - F_{n+1} \gjj^2 \ri] J_t[u_n](s_n,t) = \eps \int_0^{\tj} \diff t \: \chi_n(s_n) \lf[ F_n  - F_{n+1}  \gjj^2 \ri] J_t[u_n](s_n,t)	\\
		 =  \eps \int_0^{\tj} \diff t \int_{s_n}^{\frac12 \lf( s_{n} + s_{n+1}\ri)} \diff s \Big\{ \chi_n(s) \ijj(t) \partial_s \lf( J_t[u_n] \ri) + \partial_s \lf( \chi_n(s) \ri) \ijj(t) J_t[u_n]  \Big\},
	}
	where we have set for short (compare with~\eqref{eq:def ijj first})
	\beq
		\label{eq:def ijj}
		\ijj(t) : = F_n(t)  - F_{n+1}(t) \gjj^2(t) = F_n(t)  - F_{n+1}(t) \frac{f_n ^2 (t)}{f_{n+1}^2(t)}.
	\eeq
	The first contribution to~\eqref{step 7} can be cast in a form analogous to~\eqref{step 3}:
	\bml{
 		\label{step 8}
 		\eps \int_0^{\tj} \diff t \int_{s_n}^{\frac12 \lf( s_{n} + s_{n+1}\ri)} \diff s \: \chi_n(s) \ijj(t) \partial_s \lf( J_t[u_n] \ri) \\
		= \eps \int_0^{\tj} \diff t \int_{s_n}^{\frac12 \lf( s_{n} + s_{n+1}\ri)} \diff s  \: \chi_n(s) \lf\{ 2 \ijj(t) \lf(i \partial_s u_n, \partial_t u_n \ri) -  \partial_t \lf(\ijj(t)\ri) J_s[u_n]  \ri\}	\\
		 + \eps \int_{s_n}^{\frac12 \lf( s_{n} + s_{n+1}\ri)} \diff s \: \chi_n(s) \ijj(\tj) J_s[u_n](s,\tj).
	}
	 The first term on the r.h.s. can be handled as we did for~\eqref{step 3}:
	\bml{ 
		\label{step 9}
		2 \eps \int_0^{\tj} \diff t \int_{s_n}^{\frac12 \lf( s_{n} + s_{n+1}\ri)} \diff s \: \chi_n(s) \ijj(t) \lf(i \partial_s u_n, \partial_t u_n \ri)	
		\geq - 2 \int_{\domj}\diff s \diff t \: \lf| \ijj(t) \ri| \lf| \eps \partial_s u_n\ri| \lf| \partial_t u_n \ri| \\
		\geq - C \eps\logi \int_{\domj}\diff s \diff t \: (1 - \eps \kj t) f_n^2 \lf[ \lf| \partial_t u_n \ri|^2 + \tx\frac{1}{(1 - 
			 \eps \kj t)^2} \lf| \eps \partial_s u_n \ri|^2 \ri],
	}
	where we have used~\eqref{eq:est log cost} with $k= k_n$, $k' = k_{n+1}$ and recalled that $|k_n-k_{n+1}| \leq C \eps$ to bound $\ijj$. 
	The last term in~\eqref{step 8} can be easily shown to provide a small correction: using~\eqref{eq:est log cost} again yields 
	$$ |\ijj(\tj)| \leq C \eps \logi f_n^2(\tj) ,$$
	so that by~\eqref{eq:Agmon decay} and \eqref{tj}
		\bml{
 			\label{eq:step 10}
 			\bigg| \int_{s_n}^{\frac12 \lf( s_{n} + s_{n+1}\ri)} \diff s \: \chi_n(s) \ijj(\tj) J_s[u_n](s,\tj) \bigg| \leq \int_{s_n}^{\frac12 \lf( s_{n} + s_{n+1}\ri)} \diff s \:  |\ijj(\tj)|  \lf| J_s[u_n] \ri| 	\\
			\leq C \eps \logi \int_{s_n}^{\frac12 \lf( s_{n} + s_{n+1}\ri)} \diff s \: f_n^2(\tj) \lf|u_n(s,\tj)\ri| \lf|\partial_s u_n(s,\tj) \ri|	\\
			\leq C \logi  \lf\| \eps \partial_s \psi \ri\|_{\infty} \int_{s_n}^{\frac12 \lf( s_{n} + s_{n+1}\ri)}\diff s \: \lf|\psi(s,\tj)\ri| = \OO(\eps^{\infty}),
		}
		where we have estimated the $s$-derivative of $ \psi $ by means of~\eqref{eq:sup est glm}.
		Hence, combining \eqref{step 8} with \eqref{step 9} and {\eqref{eq:step 10}}, we can bound from below~\eqref{step 7} as
		\bml{
 			\label{step 10}
 			\eps \int_0^{\tj} \diff t \: \lf[ F_n  - F_{n+1} \gjj^2 \ri] J_t[u_n](s_n,t) 	\\
			\geq \eps \int_0^{\tj} \diff t \int_{s_n}^{\frac12 \lf( s_{n} + s_{n+1}\ri)} \diff s \:\lf\{ - \partial_t \ijj J_s[u_n]	+ \partial_s \chi_n \ijj J_t[u_n] \ri\}	\\
			-  C \eps \logi \int_{\domj}\diff s \diff t \: (1 - \eps \kj t) f_n^2 \lf[ \lf| \partial_t u_n \ri|^2 + \tx\frac{1}{(1 - 
			 \eps \kj t)^2} \lf| \eps \partial_s u_n \ri|^2 \ri] + \OO(\eps^{\infty}).
		}
	To complete the proof it only remains to estimate the first two terms on the r.h.s. of the expression above, which again requires to borrow a bit of the kinetic energy. Using~\eqref{eq:est log der} we have 
	$$
		\sup_{t \in [0,\tj]} \bigg| \frac{\partial_t \ijj}{f_n^2} \bigg| \leq C \eps\logi,
	$$
	so that 
	\bml{
 		\label{step 11}
		\eps \bigg| \int_0^{\tj} \diff t \int_{s_n}^{\frac12 \lf( s_{n} + s_{n+1}\ri)} \diff s \: \partial_t \ijj J_s[u_n] \bigg| \leq C \eps \logi \int_{\domj} \diff s \diff t \: f_n^2 \lf| u_n \ri| \lf| \eps \partial_s u_n \ri| \\
		\leq C \eps \logi \int_{\domj}\diff s \diff t \: \lf[ \tx\frac{1}{\delta} \tx\frac{1}{1 - 
			 \eps \kj t} f_n^2 \lf| \eps \partial_s u_n \ri|^2 +  \delta |\psi|^2 \ri]	\\
		\leq C |\log\eps|^{-5} \int_{\domj}\diff s \diff t \: \tx\frac{1}{1 - 
			 \eps \kj t} f_n^2 \lf| \eps \partial_s u_n \ri|^2 + \OO(\eps^3\logi),
	}
	where we have chosen $ \delta = \eps |\log\eps|^a $, for some suitably large $ a > 0 $ to compensate the $ |\log\eps| $ prefactor (this generates the coefficient $ |\log\eps|^{-5} $), and used \eqref{eq:sup est glm} to estimate the remaining term.
	For the second term on the r.h.s. of~\eqref{step 10} we proceed in the same way, using first \eqref{eq:est log cost} and the assumption $ \lf| \partial_s \chi \ri| \leq C {\eps^{-1}} $, to get 
	\bml{
 		\label{step 12}
		\eps \bigg| \int_0^{\tj} \diff t \int_{s_n}^{\frac12 \lf( s_{n} + s_{n+1}\ri)} \diff s \: \partial_s \chi_n \ijj J_t[u_n] \bigg| \leq C {\eps} \logi \int_{\domj} \diff s \diff t \: f_n^2 \lf| u_n \ri| \lf| \partial_t u_n \ri| \\
		{\leq C \eps \logi \int_{\domj}\diff s \diff t \: \lf[ \tx\frac{1}{\delta}  f_n^2 \lf|\partial_t u_n \ri|^2 +  \delta |\psi|^2 \ri]}	\\
		{\leq C |\log\eps|^{-5} \int_{\domj}\diff s \diff t \:  f_n^2 \lf| \partial_t u_n \ri|^2 + \OO(\eps^3\logi),}	
	}
%
	where we have made the same choice of $ \delta $ as in \eqref{step 11}.
	
	Collecting all the previous estimates yields our claim~\eqref{eq:control boundary terms} (recall that there are $\neps \propto \eps ^{-1}$ terms to be summed, whence the final error of order $\eps ^2 \logi$).
	\end{proof}

\section{Density and Degree Estimates}\label{sec:density degree}

	In this section we prove the main results about the behavior of $ |\glm| $ close to the boundary of the sample $ \partial\Omega $ and an estimate of its degree at $ \partial \Omega $.
	
	We first notice that the $L^2$ estimate stated in \eqref{eq:main density} is in fact a trivial consequence of the energy asymptotics \eqref{eq:energy GL}: putting together the lower bounds~\eqref{eq:energy lb ann},~\eqref{eq:reduc func} and~\eqref{eq:bound reduc func} with the upper bound \eqref{eq:up bound GL}, we obtain
		\begin{equation}
			\label{eq:upper bound nonlinear}
			\frac{1}{2\eps b} \sum_{n=1}^{\neps} \int_{\cellj} \diff s \diff t \: \lf(1 - \eps k_n t\ri) f_n^4 \lf(1 - |u_n|^2 \ri)^2 \leq C \eps |\log\eps|^{\gamma}, 
		\end{equation}
	for some power $\gamma$ large enough (recall the meaning of the notation $\logi$). Now, using the fact that $ k_n = k(s) \lf(1 + \OO(\eps) \ri) $ inside $ \cellj $, we can easily reconstruct  \eqref{eq:main density}, once everything has been expressed in the original unscaled variables and the definitions \eqref{eq:ref profile} and \eqref{eq:splitting psi} has been exploited (recall also that $ \psi(s,t) = \glm(\rv(s,\eps t) $). See also \cite[Section 4.2]{CR2} for further details.

	\medskip
	
	We now focus on the refined density estimate discussed in Theorem \ref{theo:Pan} and the proof of Pan's conjecture. The result is obtained via an adaptation of the arguments used in \cite[Section 5.3]{CR2}, originating in~\cite{BBH1}. The general idea is now rather standard so we will mainly comment on the changes needed to make those argument work in the present setting. 
	
	\begin{proof}[Proof of Theorem \ref{theo:Pan}] The two main ingredients of the proof are the above estimate~\eqref{eq:upper bound nonlinear} and a pointwise bound on the gradient of $ u_n $. Once combined, the two estimates imply that the function $ |u_n| $ cannot be too far from $ 1 $ anywhere in the boundary layer $ \annd $ (see \eqref{eq:annd} for its precise definition).
	
	\medskip
	
%
		{\emph{Step 1, gradient estimate.}} A minor difference with the setting in \cite[Section 5.3]{CR2} is due to the convention we used to avoid a scaling of the tangential coordinate $ s $. This is just a matter of notation and by following \cite[Proof of Lemma 5.3]{CR2}, we can show that, for any $ n = 1, \ldots, \neps $,
		\beq
			\label{eq:point est grad u}
			\lf| \partial_t |u_n| \ri| \leq C f_n^{-1}(t) |\log\eps|^{3},	\qquad		\lf| \partial_s |u_n| \ri| \leq C  f_n^{-1}(t)  \eps^{-1}.
		\eeq
		Notice the second estimate above, which is a consequence of not scaling the coordinate $ s $.
		
		{We now prove~\eqref{eq:point est grad u}. From the definitions of $\psi$ and  $u_n$ we immediately have
		\bml{
			\lf| \dd_t |u_n|(s,t) \ri| \leq f_n^{-2}(t) \lf| f_n^{\prime}(t) \ri| \lf| \psi(s,t) \ri| + f_n^{-1}(t) \lf| \dd_t \lf| \psi(s,t) \ri| \ri|	\\
			\leq C f_n^{-1}(t) \lf[ |\log\eps|^3 + \lf| \dd_{t} \lf| \psi(s,t) \ri| \ri| \ri],
		}
		and 
		$$
			\lf| \dd_s |u_n|(s,t) \ri| \leq  f_n^{-1}(t) \lf| \dd_s \lf| \psi(s,t) \ri| \ri|
		$$
		where we have used~\cite[Equation (A.28)]{CR2}. The result is then a consequence of~\cite[Theorem~2.1]{Alm} or~\cite[Equation~(4.9)]{AH} in combination with the diamagnetic inequality (see \cite{LL}), which yield	
		\beq
			\lf| \nabla \lf| \glm \ri| \ri| \leq \lf| \lf( \nabla + i \tx\frac{\aavm}{\eps^2} \ri) \glm \ri|\leq C \eps ^{-1}
			\quad \Longrightarrow \quad \lf| \dd_{t} \lf| \psi(s,t) \ri| \ri| +  \eps \lf| \dd_{s} \lf| \psi(s,t) \ri| \ri|\leq  C.
		\eeq}
		
		\medskip
		
		{\emph{Step 2, uniform bound on $u_n$.}} {We first observe that the estimate $ \lf\| f_n - f_0 \ri\|_{\infty} = \OO(\eps)$ proven in~\eqref{eq:vari 1D opt density} guarantees that
		\beq
			f_n (t) \geq \game,	\qquad		\mbox{for any } (s,t) \in \cellj \cap \annd \mbox{ and } \forall n = 1, \ldots, \neps.
		\eeq}
		Now we can apply a standard argument to show that $ |u_n| $ can not differ too much from $ 1 $ {in $\annd$}. The proof is done by contradiction. {We choose some $ 0 < c < \frac{3}{2} a $ and define 
		\beq
			\sigme : = \eps^{1/4} \game^{-3/2} |\log\eps|^c \ll |\log\eps|^{c-3a/2} \ll 1.
		\eeq
		Suppose for contradiction that there exists a point $ (s_0,t_0) $ in $ \cellj \cap \annd $ such that
		\bdm
			\lf|1 - |u_n(s_0,t_0)| \ri| \geq \sigme.
		\edm}
		Then by \eqref{eq:point est grad u} we can construct a rectangle-like region $ R_{\eps} {\subset \cellj \cap \annd} $ of tangential length $ \frac{1}{2} \eps \game \sigme {\ll \eps} $ and normal length $ \varre $ with
		\beq
			\varre : = \game \sigme |\log\eps|^{-3} {\ll \eps^{1/6} |\log\eps|^{c-3-a/2} \ll |\log\eps|^{1/2}},
		\eeq
		{where}
		\bdm
			\lf|1 - |u_n(s,t)| \ri| \geq \tx\frac{1}{2} \sigme.
		\edm
		
		To complete the proof it suffices to estimate from below
		\bml{
		 	{\frac{1}{\eps} \sum_{n=1}^{\neps} \int_{\cellj} \diff s \diff t \: \lf(1 - \eps k_n t\ri) f_n^4 \lf(1 - |u_n|^2\ri)^2 \geq \frac{1}{\eps} \int_{R_{\eps}} \diff s \diff t \: \lf(1 - \eps k_n t\ri) f_n^4 \lf(1 - |u_n|^2\ri)^2}	\\
		 	 \geq \game^5 \sigme^3 \varre = \game^6 \sigme^4 |\log\eps|^{-3} = \eps |\log\eps|^{4c-3} \gg \eps |\log\eps|^{\gamma},
		}
		where $ \gamma $ is the power of $ |\log\eps |$ appearing in the r.h.s. of \eqref{eq:upper bound nonlinear} and we have chosen $ c$ so that $ c \geq \frac{1}{4}(\gamma+3) $. {Recalling the condition $ a > \frac{2}{3} c $ we also have $ a >  \frac{1}{6}(\gamma+3) ${, which coincides with the assumption on $ \game $ (see \eqref{eq:game})}. Under such conditions the estimate above contradicts the upper bound \eqref{eq:upper bound nonlinear} and the result is proven.}
		
		\medskip
		
		{\emph{Step 3, conclusion.} Now we know that in $\annd \cap \cellj$
		$$ \left||u_n| - 1\right| \leq \sigme, $$
		and it is easy to translate this estimate in an analogous one for $ |\psi(s,t)| $ and therefore $ |\glm| $. Indeed, in the cell $\cellj$ 
		$$ |\psi| = |\glm| = f_n |u_n|$$
		modulo a change of variables. The final estimate on $|\glm|$ then involves the reference profile $ \Gref $ but the bound $ \lf\| f_n - f_0 \ri\|_{\infty} = \OO(\eps)$ again allows the replacement of $ \Gref $ with $ f_0 $}. 
	\end{proof}	
	
	We can now turn to the proof of the estimate of the winding number of $ \glm $ along $ \partial \Omega $.
	
	 \begin{proof}[Proof of Theorem \ref{theo:circulation}] Thanks to the positivity of $ \Gref $ at $ t = 0 $ (see Lemma \ref{lem:point est fal}) and the result discussed above, $ \glm $ never vanishes on $ \partial \Omega $ and therefore its winding number is well defined. The rest of the proof follows the lines of \cite[Proof of Theorem 2.4]{CR2}. 
	 
	 {The first part is the estimate of the winding number contribution of the phase $ \phi_{\eps}$ involved in the change of gauge $ \psi(s,t) = \glm(\rv(s, \eps t)) e^{-i \phi_{\eps}(s,t)}$ but this can be done exaclty as in \cite[Proof of Lemma 5.4]{CR2}:
	 	\bml{
			\label{gauge phase}
			2 \pi \deg\lf(\glm, \partial \Omega\ri) - 2\pi \deg\lf(\psi, \partial \Omega \ri) = \int_{0}^{|\partial \Omega|} \diff s \: \gamma^{\prime}(s) \cdot \nabla \phi_{\eps}(s,t) =  \int_{0}^{|\partial \Omega|} \diff s \: \partial_{s} \phi_{\eps}(s,0)	\\
			= \phi_{\eps}(2\pi,0) - \phi_{\eps}(0,0) = \frac{1}{\eps^2} \int_{0}^{|\partial \Omega|} \diff s \: \gav^{\prime}(s) \cdot \aavm(\rv(s,0))	 - |\partial \Omega| \deps \\
			=  \frac{1}{\eps^2} \int_{\Omega} \diff \rv \: \curl \aavm - |\partial \Omega| \deps.
		}
		Now by the elliptic estimate \cite[Eq. (11.51)]{FH-book}
		\bdm
			 \lf\| \curl \aavm - 1 \ri\|_{C^{1}(\Omega)} = \OO(\eps),
		\edm
		and the Agmon estimate \cite[Eq. (12.10)]{FH1}
		\bdm
			\lf\| \nabla (\curl \aavm - 1) \ri\|_{L^1(\Omega\setminus \ann)} = \OO(\eps^{\infty}),
		\edm
		we get
		\bdn
			 \lf\| \curl \aavm - 1 \ri\|_{L^{1}(\ann)} & \leq & C \eps |\log\eps| \lf\| \nabla \lf( \curl \aavm -1\ri) \ri\|_{L^{\infty}(\Omega)} = \OO(\eps^2 |\log\eps|),	\nonumber	\\
			 \lf\| \curl \aavm - 1 \ri\|_{L^{1}(\Omega\setminus \ann)} & \leq & C  \lf\| \curl \aavm - 1 \ri\|_{L^{2}(\Omega\setminus \ann)} 	\nonumber \\
			 && \leq C \lf\| \nabla (\curl \aavm - 1) \ri\|_{L^1(\Omega\setminus \ann)} = \OO(\eps^{\infty}),
		\edn
		via Sobolev inequality. Altogether we can thus replace $ \curl \aavm $ with $ 1 $ in \eqref{gauge phase}, so obtaining
		\beq
			2 \pi \deg\lf(\glm, \partial \Omega\ri) - 2\pi \deg\lf(\psi, \partial \Omega \ri) =  \frac{|\Omega|}{\eps^2} + \OO(|\log\eps|).
		\eeq}
		
		A minor modification in the proof is then due to the cell decomposition and the use of a different decoupling in each cell: the analogue of \cite[Lemma~5.4]{CR2} is the following
	\beq
		\label{eq:circulation u}
	 	\sum_{n = 1}^{\neps}  \int_{s_n}^{s_{n+1}} \diff s \: J_s[u_n](s,0)  = \OO(|\log\eps|^{\infty}).
	\eeq
	To see that, we introduce a tangential cut-off function $ \chi(t) $ with support contained in $ [0,|\log\eps|^{-1}] $ and such that $ 0 \leq \chi \leq 1 $, $ \chi(0) = 1 $ and $ |\partial_t \chi|\ = \OO(|\log\eps|) $. Then we compute
	\bml{
	 	\int_{s_n}^{s_{n+1}} \diff s \: J_s[u_n](s,0)  =  \int_{s_n}^{s_{n+1}} \diff s \int_0^{\frac{1}{|\log\eps|}} \diff t \: \lf[ \partial_t \chi J_s[u_n](s,t) + \chi \partial_t J_s[u_n](s,t) \ri] = \\
	 	\int_0^{\frac{1}{|\log\eps|}} \diff t \bigg\{ \int_{s_n}^{s_{n+1}} \diff s \: \lf[ \partial_t \chi  J_s[u_n](s,t) + 2 \chi \lf(i \partial_t u_n, \partial_s u_n \ri) \ri] +  J_t[u_n](s_{n+1},t) -  J_t[u_n](s_{n},t) \bigg\}
	 	}
	 and after a rearrangement of the boundary terms
	 \bml{
	  	\label{circulation est 1}
	 	\sum_{n =1}^{\neps} \int_{s_n}^{s_{n+1}} \diff s \: J_s[u_n](s,0)  = \sum_{n =1}^{\neps} \int_0^{|\log\eps|^{-1}} \diff t \int_{s_n}^{s_{n+1}} \diff s \: \lf[ \lf( \partial_t \chi  \ri) J_s[u_n](s,t) + 2 \chi \lf(i \partial_t u_n, \partial_s u_n \ri) \ri]	\\
	 	 -  \sum_{n =1}^{\neps} \int_0^{|\log\eps|^{-1}} \diff t \: \lf[ J_t[u_{n+1}](s_{n+1},t) -  J_t[u_n](s_{n+1},t) \ri].
	 	 	}
	 The three terms on the r.h.s. of the above expression are going to be bounded independently. We first observe that, exactly like we derived \eqref{eq:upper bound nonlinear}, one can also extract from the comparison between the energy upper and lower bounds (see~\eqref{eq:bound reduc func}) the following estimate:
	 \beq
	 	\label{eq:upper bound kinetic}
	 	\sum_{n=1}^{\neps} \int_{\cellj} \diff s \diff t \: \lf(1 - \eps k_n t\ri) f_n^2 \lf\{ \lf| \partial_t u_n \ri|^2 + \tx\frac{1}{(1 - \eps k_n t)^2} \lf| \eps \partial_s u_n \ri|^2 \ri\} \leq C \eps^2 |\log\eps|^{\infty}.
	\eeq	
	Then we can estimate the absolute value of the first two terms on the r.h.s. of \eqref{circulation est 1} by using the Cauchy-Schwarz inequality 
	\bml{ 
		\sum_{n =1}^{\neps} \int_0^{|\log\eps|^{-1}} \diff t \int_{s_n}^{s_{n+1}} \diff s \: \lf[ C |\log\eps| |u_n| \lf| \partial_s u_n \ri| + 2 \lf| \partial_t u_n \ri| \lf| \partial_s u_n \ri| \ri]	\\
		\leq C \sum_{n =1}^{\neps} \int_{\cellj} \diff s \diff t \: (1 - \eps k_n t) f_n^2 \lf[|\log\eps| \lf| u_n \ri|^2 + \tx\frac{2}{(1 - \eps k_n t)^2} \lf| \partial_s u_n \ri|^2 + \lf| \partial_t u_n \ri|^2 \ri],
		}
		where we have exploited the pointwise lower bound \eqref{eq:fal point l u b}, which implies $ f_n(t) \geq C > 0 $ for any $ t \in [0,|\log\eps|^{-1}] $ and $  n = 1, \ldots, \neps $, to put back the density $ f_n^2 $ in the expression. Now the bound 
		$$ f_n |u_n| = |\psi| \leq 1 $$
		together with \eqref{eq:upper bound kinetic} yield
		\beq
			\sum_{n =1}^{\neps} \bigg| \int_0^{|\log\eps|^{-1}} \diff t \int_{s_n}^{s_{n+1}} \diff s \: \lf[ \lf( \partial_t \chi  \ri) J_s[u_n](s,t) + 2 \chi \lf(i \partial_t u_n, \partial_s u_n \ri) \ri] \bigg| \leq C |\log\eps|^{\infty}.
		\eeq
		On the other hand the definition \eqref{eq:splitting psi} of $ u_n $ implies that
		\bml{
		 	 \bigg| \int_0^{|\log\eps|^{-1}} \diff t \: J_t[u_{n+1}](s_{n+1},t) -  J_t[u_n](s_{n+1},t) \bigg| = \bigg| \int_0^{|\log\eps|^{-1}} \diff t \: \bigg( \frac{1}{f_{n+1}^2} - \frac{1}{f_n^2} \bigg) J_t[|\psi|](s_{n+1},t) \bigg|	\\
		 	 \leq C \eps |\log\eps|^{\infty} \int_0^{|\log\eps|^{-1}} \diff t \: |\psi| |\partial_s |\psi|| \leq C  |\log\eps|^{\infty},
		}
		thanks to \eqref{eq:vari 1D opt density}, the already mentioned lower bound on $ f_n $ in $ [0, |\log\eps|^{-1}] $ and the standard bound $ \lf\| \nabla \psi \ri\|_{\infty} \leq C \eps^{-1} $ (see, e.g., \cite[Eq. (11.50)]{FH-book}).
		
		Hence \eqref{eq:circulation u} is proven and the rest of the proof is just a repetition of the estimates in \cite[Proof of Theorem 2.4]{CR2}. Note that, as already anticipated in the comments after Theorem \ref{theo:circulation} $ \al_n = \al_0 (1 + \OO(\eps)) $, so that the optimal phases $ \al_n $ can all be replaced with $ \al_0 $.	
	\end{proof}

\appendix

\section{Useful Estimates on 1D Functionals}\label{sec:app}

Here we recall some preliminary results obtained in~\cite{CR2}. We start in Subsection~\ref{sec:app dens phase} with elementary properties of the minimizing 1D profiles and carry on in Subsection~\ref{sec:app cost} by recalling the crucial positivity property of the cost function we mentioned in Subsection~\ref{sec:sketch}.

\subsection{Properties of optimal phases and densities}\label{sec:app dens phase}

This subsection contains a summary of results on the 1D minimization problem that follow from relatively standard methods. We start with the well-posedness of the minimization problem at fixed $\alpha$. The following is~\cite[Proposition 3.1]{CR2}.

	\begin{pro}[\textbf{Optimal density $ \fkal $}]
		\label{pro:min fone}
		\mbox{}	\\
		For any given $ \alpha \in \R $, $ k \geq 0 $ and $ \eps$  small enough, there exists a minimizer $ \fkal $ to $ \fonekal$, unique up to sign, which we choose to be non-negative. It solves the variational equation
		\beq
			\label{eq:var eq fal}
			- \fkal^{\prime\prime} + \tx\frac{\eps k}{1 - \eps k t} \fkal^{\prime} + \potkal ( t) \fkal = \tx\frac{1}{\hex} \lf(1 - \fkal^2 \ri) \fkal
		\eeq
		with boundary conditions $ \fkal^{\prime}(0) =  \fkal^{\prime}(c_0|\log\eps|) = 0 $. Moreover $ \fkal $ satisfies the estimate
		\beq
			\label{eq:fal estimate}
			\lf\| \fkal \ri\|_{L^{\infty}(I_{\eps})} \leq 1
		\eeq
		and it is monotonically decreasing for $  t \geq \max \lf[ 0, -\alpha + \tx\frac{1}{\sqrt{\hex}} -C \eps \ri] $. 
		In addition $ \eonekal $ is a smooth function of $ \alpha \in \R $ and
		\beq
			\label{eq:eone explicit}
			\eonekal = - \frac{1}{2\hex} \int_{I_{\eps}} \diff  t \: (1 - \eps k  t) \fkal^4( t).
		\eeq
	\end{pro}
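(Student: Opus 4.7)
The plan is to handle the seven conclusions of the Proposition in a natural order, combining direct-method arguments, a Lassoued--Mironescu type convexification, and a maximum-principle argument.

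For existence together with the bound $\|\fkal\|_\infty \leq 1$, I would take a minimizing sequence $\{f_j\}$ and first observe that truncation at height one, $f \mapsto \min(|f|,1)$, does not increase $\fonekal$: the weight $(1-\eps k t)$ is positive on $\ie$ for $\eps$ small, the kinetic and potential contributions can only decrease (via the pointwise bound $|\partial_t \min(f,1)| \leq |\partial_t f|$), and the double-well combination $-2f^2 + f^4 = (f^2-1)^2 - 1$ is minimized at $|f| = 1$ among $|f| \geq 1$. Hence $0 \leq f_j \leq 1$ may be assumed, from which a uniform $H^1(\ie)$ bound follows via the non-negative terms of the functional. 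Weak $H^1$-convergence of a subsequence together with the compact embedding $H^1(\ie) \hookrightarrow L^4(\ie)$ and lower semicontinuity of the quadratic part then produces a non-negative minimizer $\fkal$ automatically satisfying~\eqref{eq:fal estimate}. The Euler--Lagrange equation~\eqref{eq:var eq fal} together with Neumann boundary conditions follows by the standard computation of $\tx\frac{d}{ds}\fonekal[\fkal + s\eta]|_{s=0} = 0$ and integration by parts, since no boundary constraint is imposed on the admissible class.

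Uniqueness up to sign is the most delicate step. First, the strong maximum principle applied to~\eqref{eq:var eq fal} (whose coefficients are bounded on $\ie$ once we use $\|\fkal\|_\infty \leq 1$) forces any non-trivial non-negative minimizer to be strictly positive throughout $\ie$: interior zeros would propagate to the whole interval, and boundary zeros are excluded by the Hopf lemma combined with the Neumann conditions. This legitimizes the substitution $u := \fkal^2$, under which the functional takes the form
\begin{displaymath}
\int_{\ie}(1-\eps k t)\lf\{\frac{|\partial_t u|^2}{4u} + \pot\, u - \tx\frac{1}{b}\lf(u - \tx\frac12 u^2\ri)\ri\}\diff t,
\end{displaymath}
which is strictly convex in $u$ on $\{u > 0\}$: the map $(x,y) \mapsto y^2/x$ is convex on $\{x > 0\}$, the potential term is linear in $u$, and $u \mapsto u^2$ is strictly convex with the right sign. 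Two distinct non-negative minimizers of $\fonekal$ would then produce two distinct positive minimizers of the convexified functional, a contradiction; the trivial case $\fkal \equiv 0$ is handled separately and is automatically unique.

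The monotonicity claim follows from a second application of the maximum principle: at any interior local maximum $t_0$ of $\fkal$ one has $\fkal^{\prime}(t_0) = 0$ and $\fkal^{\prime\prime}(t_0) \leq 0$, so~\eqref{eq:var eq fal} yields $\potkal(t_0) \leq b^{-1}(1 - \fkal^2(t_0)) \leq 1/b$. A direct inspection of $\potkal(t) = (t+\alpha - \tx\frac12 \eps k t^2)^2/(1-\eps k t)^2$ shows that $\potkal(t) > 1/b$ precisely for $t \geq -\alpha + 1/\sqrt{b} - C\eps$ (when this is non-negative); interior local maxima past that threshold are therefore forbidden, and combined with the Neumann condition at $\teps$ this forces monotone decrease. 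Smoothness of $\eonekal$ in $\alpha$ is then obtained by the implicit function theorem applied to~\eqref{eq:var eq fal}, invertibility of the linearization at the unique minimizer being guaranteed by the strict convexity established above. Finally, the identity~\eqref{eq:eone explicit} follows by testing~\eqref{eq:var eq fal} against $(1-\eps k t)\fkal$: the Neumann boundary terms vanish, the two cross terms in $\fkal^{\prime}\fkal$ generated by the weight cancel, and one ends up with $\int(1-\eps k t)[|\fkal^{\prime}|^2 + \potkal \fkal^2]\diff t = b^{-1}\int(1-\eps k t)(\fkal^2 - \fkal^4)\diff t$; subtracting this from $\fonekal[\fkal]$ collapses the energy to $-\tx\frac{1}{2b}\int(1-\eps k t)\fkal^4 \diff t$. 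The main technical obstacle along the way is the uniqueness step, which hinges on the preliminary strict positivity of the minimizer in order to justify the convexification via $u = \fkal^2$.
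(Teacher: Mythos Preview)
Your proof is correct and complete. The paper does not actually prove this proposition: it is stated in the Appendix as a recall of \cite[Proposition~3.1]{CR2}, with no argument given here. Your approach --- direct method with truncation for existence and the bound $\|\fkal\|_\infty \leq 1$, Cauchy--Lipschitz/strong maximum principle for strict positivity, Lassoued--Mironescu convexification via $u = \fkal^2$ for uniqueness, a maximum-principle argument for monotonicity, and the Pohozaev-type identity obtained by testing the Euler--Lagrange equation against $(1-\eps k t)\fkal$ for~\eqref{eq:eone explicit} --- is exactly the standard route and is consistent with the techniques used throughout~\cite{CR2} (the paper itself invokes the Lassoued--Mironescu decoupling repeatedly, see e.g.\ the proof of Lemma~\ref{lem:1D curv pre}). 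One minor refinement: in the monotonicity step you should also note that the right endpoint $\teps$ cannot be a local maximum since the equation there gives $\fkal''(\teps) = \big(V_{k,\alpha}(\teps) - b^{-1}(1-\fkal^2(\teps))\big)\fkal(\teps) > 0$, which together with the absence of interior maxima past the threshold closes the argument cleanly.
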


Next we consider the minimization problem as a function of the phase $\alpha$, dealt with in~\cite[Lemma~3.1]{CR2}. Here $\theo ^{-1}$ is defined as in~\eqref{eq:theo}. 
	
\begin{lem}[\textbf{Optimal phase $ \alk $}]
		\label{lem:opt phase}
		\mbox{}	\\
		For any $ 1 < \hex < \theo^{-1} $, $ k \geq 0 $ and $ \ep$ small enough, there exists at least one  $ \alk $ minimizing $ \eonekal $:
		\beq
			\label{eq:optimal energy}
			\inf_{\alpha \in \R} \eonekal = \eone_{k,\alk} =: \eonek.
		\eeq
		Setting $ \fk : = f_{k,\alk} $ we have that $ \fk > 0$ everywhere and 
		\beq
			\label{eq:FH nonlinear}
			\int_{I_{\eps}} \diff  t \: \frac{ t + \alk - \half \eps k t^2}{1 - \eps k t} \fk^2( t) = 0.
		\eeq
	\end{lem}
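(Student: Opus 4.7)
The plan is to establish the three assertions of the lemma in sequence: existence of a minimizing phase $\alk$, strict positivity of the associated profile $\fk$, and the orthogonality relation \eqref{eq:FH nonlinear}.

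For existence, I would show that $\alpha \mapsto \eonekal$ is continuous on $\R$, tends to $0$ as $|\alpha|\to\infty$, and attains a strictly negative value somewhere. Continuity is inherited from the uniqueness of $\fkal$ provided by Proposition~\ref{pro:min fone} together with standard stability of minimizers under smooth perturbation of the functional (only the potential $\potkal$ depends on $\alpha$, and it does so smoothly and uniformly on bounded sets). For the behavior at infinity, I would observe that once $|\alpha|$ exceeds $c_0|\log\eps|$ by a fixed factor, the root of $t+\alpha-\tfrac{1}{2}\eps k t^2$ lies outside $\ieps$ (immediate for $\alpha>0$, direct computation for $\alpha<0$), so $\potkal$ is bounded below on $\ieps$ by a quantity growing like $\alpha^2$. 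Combined with \eqref{eq:fal estimate} and \eqref{eq:eone explicit}, this forces $\eonekal\to 0^-$. On the other hand, the assumption $b<\theo^{-1}$ is precisely what guarantees that the flat-limit problem \eqref{eq:1D func bis} has strictly negative infimum, and inserting its approximate minimizer (with a smooth cutoff inside $\ieps$) as a trial state for $\fonekal$ at $\eps$ small and $k$ bounded shows that $\inf_\alpha \eonekal<0$. Continuity plus vanishing at infinity plus a strictly negative value forces the infimum to be attained at some finite $\alk$.

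For positivity, $\fk$ is non-negative by construction and is not identically zero (otherwise $\eonek=0$, contradicting Step 1). Rewriting \eqref{eq:var eq fal} as the linear homogeneous ODE
\[
-\fk'' + \tfrac{\eps k}{1-\eps k t}\,\fk' + \bigl(\potk + \tfrac{1}{b}(\fk^2-1)\bigr)\fk = 0,
\]
with bounded coefficients on $\ieps$, the strong maximum principle excludes interior zeros of the non-trivial non-negative solution $\fk$, and Hopf's lemma applied at the endpoints, in combination with the Neumann boundary conditions $\fk'(0)=\fk'(c_0|\log\eps|)=0$, excludes boundary zeros. Hence $\fk>0$ everywhere on $\overline{\ieps}$.

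The identity \eqref{eq:FH nonlinear} is the first-order condition $\partial_\alpha \eonekal\bigr|_{\alpha=\alk}=0$. By a Hellmann--Feynman (envelope) argument, justified by smoothness of $\alpha\mapsto \fkal$ (which in turn follows from the implicit function theorem applied to the Euler--Lagrange equation \eqref{eq:var eq fal}, using the uniqueness in Proposition~\ref{pro:min fone}), one has
\[
\tfrac{d}{d\alpha}\eonekal
= \int_{\ieps} \diff t\,(1-\eps k t)\,(\partial_\alpha \potkal)(t)\,\fkal^2(t)
= 2\int_{\ieps} \diff t\,\frac{t+\alpha-\tfrac{1}{2}\eps k t^2}{1-\eps k t}\,\fkal^2(t),
\]
and setting $\alpha=\alk$ yields \eqref{eq:FH nonlinear}. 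The main obstacle in this argument is the existence step, more precisely the control of the limit $\alpha\to-\infty$: since $\ieps$ is only of length $c_0|\log\eps|$ while $\alpha$ ranges over all of $\R$, one must carefully track when the well of $\potkal$ exits $\ieps$, which is the mechanism making $\eonekal$ vanish at infinity. The positivity and the variational identity, by contrast, are fairly routine consequences of classical elliptic regularity and envelope arguments.
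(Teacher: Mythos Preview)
The paper does not actually prove this lemma: it appears in Appendix~\ref{sec:app} as a result recalled from~\cite[Lemma~3.1]{CR2}, with no argument given here. Your proposal is therefore being compared against a citation rather than a proof, and what you have written is correct and is essentially the standard route one would expect in~\cite{CR2}: existence via continuity of $\alpha\mapsto\eonekal$ (already asserted in Proposition~\ref{pro:min fone}) combined with the coercivity of $\potkal$ for large $|\alpha|$ forcing $\eonekal\to 0$, and the strict negativity of the infimum coming from $b<\theo^{-1}$; positivity via the strong maximum principle and Hopf's lemma; and \eqref{eq:FH nonlinear} as the first-order optimality condition in $\alpha$ via a Feynman--Hellmann computation. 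One minor simplification: you do not need to invoke the implicit function theorem for the smoothness of $\alpha\mapsto\fkal$, since Proposition~\ref{pro:min fone} already records that $\eonekal$ is smooth in $\alpha$, which is all the envelope argument requires.
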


We also use some decay and gradient estimates for the minimizing density. The following is a combination of~\cite[Proposition 3.3 and Lemma A.1]{CR2}

	\begin{lem}[\textbf{Useful bounds on $ \fal $}]
		\label{lem:point est fal}
		\mbox{}	\\
		For any $ 1 < \hex < \theo^{-1} $, $ k \in \R $ and $ \eps $ sufficiently small, there exist two positive constants $ c, C > 0 $ independent of $ \eps $ such that
		\beq
			\label{eq:fal point l u b}
			c \: \exp\Big\{ - \tx\frac{1}{2}\big(  t + \sqrt{2} \big)^2 \Big\} \leq f_k( t) \leq C \: \exp\lf\{ - \tx\frac{1}{2} \lf(  t + \al \ri)^2 \ri\},
		\eeq
		for any $  t \in \ie $.
		
		Moreover there exists a finite constant $ C $ such that
		\beq
			\label{eq:fal derivative}
			\lf| f_k ^{\prime}( t) \ri| \leq C
			\begin{cases}
				1,	&	\mbox{for }  t \in \lf[ 0, |\alpha| + \tx\frac{2}{\sqrt{\hex}} \ri],	\\
				|\log\eps|^3 f_k ( t),		&	\mbox{for }  t \in \lf[ |\alpha| + \tx\frac{2}{\sqrt{\hex}}, c_0|\log\eps| \ri].
			\end{cases}
		\eeq
	\end{lem}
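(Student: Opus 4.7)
The plan is to handle \eqref{eq:fal point l u b} by constructing a super- and a sub-solution of the variational equation \eqref{eq:var eq fal}, and to derive \eqref{eq:fal derivative} by standard elliptic regularity on a compact region and a Riccati argument in the tail. The key preliminary observation is the uniform bound $|\alk|\leq C<\sqrt{2}$ independent of $\eps$, which follows from Lemma~\ref{lem:opt phase}: a very large value of $|\alk|$ would make $\potk$ coercive on the full interval and would force $\eonek\geq 0$, contradicting the strict negativity of $\eonek$ for $1<b<\theo^{-1}$ (which is visible from \eqref{eq:eone explicit} since $\fk\not\equiv 0$). Combined with the smallness of $\eps k t$ and $\eps k t^2$ for $t\in\ie$, this yields the pointwise expansion $\potk(t)=(t+\alk)^2+\OO(\eps|\log\eps|^3)$ uniformly on $\ie$.

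For the upper bound I would take $g(t):=C\,e^{-(t+\alk)^2/2}$. A direct computation gives $-g''+(t+\alk)^2g=g$ and $|g'/g|\leq t+|\alk|\lesssim|\log\eps|$, so the expansion above implies
\begin{equation*}
-g''+\tfrac{\eps k}{1-\eps k t}\,g'+\potk\,g-\tfrac{1}{b}\,g=\Bigl(1-\tfrac{1}{b}+\OO(\eps|\log\eps|^3)\Bigr)\,g\geq 0
\end{equation*}
for $\eps$ small, thanks to $b>1$. Since \eqref{eq:var eq fal} reads $-f_k''+\tfrac{\eps k}{1-\eps k t}f_k'+\potk f_k-\tfrac{1}{b}f_k=-\tfrac{1}{b}f_k^3\leq 0$, the difference $w:=g-\fk$ satisfies $-w''+\tfrac{\eps k}{1-\eps k t}w'+(\potk-\tfrac{1}{b})w\geq 0$. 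Outside the bounded set $\{(t+\alk)^2\leq 1/b\}$ the zeroth-order coefficient is non-negative, so a standard maximum principle together with the Neumann conditions at $0,\teps$ and a choice of $C$ large enough to dominate $\fk\leq 1$ at the boundary of the bounded set gives $\fk\leq g$ everywhere on $\ie$.

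For the lower bound I would use $h(t):=c\,e^{-(t+\sqrt{2})^2/2}$. The analogous calculation yields
\begin{equation*}
-h''+\tfrac{\eps k}{1-\eps k t}\,h'+\potk h-\tfrac{1}{b}h=\Bigl[\,1-\tfrac{1}{b}-(\sqrt{2}-\alk)(2t+\alk+\sqrt{2})+\OO(\eps|\log\eps|^3)\,\Bigr]h,
\end{equation*}
which is $\leq 0$ for $t\geq t_*$, with $t_*$ depending only on $b$ and $\alk$, precisely because $|\alk|<\sqrt{2}$. Choosing $c$ small enough so that $\tfrac{1}{b}h^3$ is negligible with respect to the dominant negative term in the bracket upgrades $h$ to a subsolution of the \emph{full} nonlinear equation on $[t_*,\teps]$. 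A maximum-principle comparison applied to $\fk-h$, whose governing linear operator has non-negative zeroth-order coefficient $\potk-\tfrac{1}{b}\bigl(1-\fk^2-\fk h-h^2\bigr)$ on $[t_*,\teps]$, then yields $\fk\geq h$ on this interval provided $\fk(t_*)\geq h(t_*)$. This last step I would prove separately by a soft compactness argument: the family $\{\fk\}_{\eps}$ is uniformly bounded in $H^1([0,t_*])$ by \eqref{eq:fal estimate} and \eqref{eq:eone explicit}, converges up to subsequences to the half-plane minimizer $f_0$, which is known to be strictly positive on compacts. The sharp exponent $\sqrt{2}$ emerges as the smallest $\beta>|\alk|$ for which the bracket above can be made non-positive uniformly in $t$, which is where the assumption $1<b<\theo^{-1}$ enters decisively.

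Finally, for \eqref{eq:fal derivative}, on the compact interval $[0,|\alk|+2/\sqrt{b}]$ all coefficients in \eqref{eq:var eq fal} are uniformly bounded and $\fk\leq 1$, so a standard ODE argument (multiplying by $f_k'$ and integrating, using the Neumann condition at $t=0$) gives $|f_k'|\leq C$. On the tail $[|\alk|+2/\sqrt{b},\teps]$ I would write the Riccati equation satisfied by the logarithmic derivative $\eta(t):=-f_k'/f_k$, namely $\eta'=\eta^2+\tfrac{\eps k}{1-\eps k t}\eta-\potk+\tfrac{1}{b}(1-\fk^2)$, and combine it with the two-sided exponential bound \eqref{eq:fal point l u b} just proven to propagate the bound $|\eta|\leq|\log\eps|^3$ from the transition point towards $\teps$. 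The main obstacle throughout is the lower bound: the subsolution argument works only outside a bounded region and must be complemented with a robust qualitative positivity statement for $\fk$ on compact sets, which is the step where $1<b<\theo^{-1}$ plays its essential role.
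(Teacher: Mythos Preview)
The paper does not contain its own proof of this lemma: it is stated in Appendix~\ref{sec:app} as a result recalled from~\cite{CR2}, specifically as ``a combination of~\cite[Proposition 3.3 and Lemma A.1]{CR2}''. So there is nothing in the present paper to compare your argument against, beyond the citation.

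That said, your outline is the natural one and is in the spirit of what one finds in~\cite{CR2}: Gaussian super- and sub-solutions for the pointwise bounds, compactness/convergence to the half-plane profile for strict positivity on compact sets, and an ODE argument for the gradient. Two points deserve more care. First, in the supersolution step your comparison function $g$ does not satisfy Neumann conditions at the endpoints (in particular $g'(\teps)=-(\teps+\alk)\,g(\teps)<0$), so the maximum-principle argument for $w=g-\fk$ cannot appeal to ``Neumann conditions at $0,\teps$'' as written; you need to argue separately that a negative minimum of $w$ cannot occur at $t=\teps$, or modify $g$ near the right endpoint. Second, the Riccati step for \eqref{eq:fal derivative} is only sketched: the two-sided exponential bound on $\fk$ controls $\log\fk$ up to $\OO(t)$, but does not by itself bound $\eta=-f_k'/f_k$; one really has to use the differential inequality for $\eta$ together with the Neumann condition $\eta(\teps)=0$ and the growth of $\potk$ to obtain $|\eta(t)|\leq C(t+1)\leq C|\log\eps|$ on the tail (the stated power $|\log\eps|^3$ is a comfortable overestimate). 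With these two points tightened, your argument is complete and essentially the standard one.
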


\subsection{Positivity of the cost function}\label{sec:app cost}

A less standard part of our analysis in~\cite{CR2} is the introduction of a cost function $\Kk$ whose positivity is one of the crucial ingredients of the energy lower bounds in the present paper.

Let us first recall the definition of the potential function associated with $\fk$:
\beq
\label{eq:Fk}
\Fk (t) : = 2 \int_0^t \diff \eta \: (1 - \eps k \eta) \fk^2(\eta) \frac{\eta + \alk - \half \eps k \eta^2}{(1 - \eps k \eta)^2},
\eeq
which has the following properties~\cite[Lemma 3.2]{CR2}:

\begin{lem}[\textbf{Properties of the potential function $\Fk$}]
		\label{lem:F prop}
		\mbox{}	\\
		For any $ 1 < \hex < \theo^{-1} $, $ k \in \R $ and $ \eps$ sufficiently small, we have
		\beq
			\label{F prop}
			\Fk(t) \leq 0,	\quad \mbox{in } \ieps,	\qquad \Fk(0) = \Fk(\teps) = 0.
		\eeq
	\end{lem}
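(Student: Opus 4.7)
The statement contains three assertions that I would prove in order of increasing difficulty. The first, $F_k(0)=0$, is immediate from the definition since it is an integral over a point. To handle the second and third, I will rewrite the integrand of $F_k$ in the simpler form
\[
F_k(t) \;=\; 2\int_0^t d\eta\; f_k^2(\eta)\, \frac{\eta + \alpha(k) - \tfrac12 \eps k \eta^2}{1-\eps k \eta},
\]
obtained by cancelling one factor of $(1-\eps k\eta)$, so that $F_k'(\eta) = 2 f_k^2(\eta)\, g(\eta)/(1-\eps k\eta)$ with $g(\eta) := \eta + \alpha(k) - \tfrac12\eps k \eta^2$.

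The identity $F_k(\teps)=0$ will then be nothing but a rewriting of the optimality condition~\eqref{eq:FH nonlinear} satisfied by the minimizing phase $\alpha(k)$: the integrand in~\eqref{eq:FH nonlinear} is precisely $\tfrac12 F_k'(\eta)$, so integrating from $0$ to $\teps$ gives $F_k(\teps)/2 = 0$.

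For the sign $F_k \leq 0$, I will exploit a monotonicity argument. Since $g'(\eta) = 1 - \eps k\eta > 0$ throughout $\ieps = [0,c_0|\log\eps|]$ (for $\eps$ small, because $\eps k \teps \ll 1$), the function $g$ is strictly increasing on $\ieps$ and therefore has at most one zero. On the other hand, $f_k^2(1-\eps k\eta)^{-1} > 0$ by Lemma~\ref{lem:opt phase} and the smallness of $\eps k$, so $F_k'$ has the sign of $g$. The vanishing of $F_k$ at both endpoints of $\ieps$ combined with the relation $F_k(\teps)-F_k(0) = \int_0^{\teps}F_k'=0$ forces $g$ to change sign exactly once in $\ieps$: there exists a unique $\eta_0\in(0,\teps)$ with $g(\eta_0)=0$, and $g<0$ on $[0,\eta_0)$, $g>0$ on $(\eta_0,\teps]$. (In particular this gives $\alpha(k) = -\eta_0 + \tfrac12 \eps k \eta_0^2 <0$ for small $\eps$.) Consequently $F_k$ is strictly decreasing on $[0,\eta_0]$ and strictly increasing on $[\eta_0,\teps]$, which together with $F_k(0)=F_k(\teps)=0$ implies $F_k(t)\leq 0$ on all of $\ieps$, with equality only at the endpoints.

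The argument is entirely elementary given the variational characterization of $\alpha(k)$, so I do not anticipate a serious obstacle; the only minor point to check is that $\eta_0$ lies strictly inside $\ieps$, which follows because $\alpha(k)$ is bounded (by~\eqref{eq:vari 1D phase} and the known value of $\alpha_0$) while $\teps = c_0|\log\eps|$ grows.
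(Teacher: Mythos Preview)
Your argument is correct. The paper itself does not give a proof of this lemma but only recalls it from~\cite[Lemma~3.2]{CR2}; your elementary approach---using the optimality condition~\eqref{eq:FH nonlinear} for $F_k(\teps)=0$ and the monotonicity of the quadratic $g(\eta)=\eta+\alpha(k)-\tfrac12\eps k\eta^2$ to force a single sign change of $F_k'$---is the natural one and is essentially what appears there.
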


The cost function that naturally enters the analysis is then 
\begin{equation}
		\label{eq:Kk}
		\Kk(t) = (1 -  \de) \fk^2(t) + \Fk(t)  
\end{equation}
	where $ \de $ is any parameter satisfying
	\beq
		\label{eq:de}
		0 < \de \leq C |\log\eps|^{-4},	\quad	\mbox{as } \eps \to 0.
	\eeq 

The positivity property we exploit is proved in~\cite[Proposition 3.5]{CR2}. Let 
	\beq
		\label{eq:annb}
		\annbk : = \lf\{ t \in \ieps \: : \fk\lf(t\ri) \geq |\log\eps|^3 \fk(\teps) \ri\},
	\eeq
	which is an interval in the $  t $ variable, i.e.,
	\beq
		\label{eq:annb bis}
		\annbk = [0, \btik],
	\eeq
	with
	\beq
		\label{eq:bxi}
		\btik \geq \teps - C \log|\log\eps|  = c_0 |\log\eps| \lf( 1 - \OO\lf( \tx\frac{\log|\log\eps|}{|\log\eps|} \ri) \ri).
	\eeq

	We then have

	\begin{lem}[\textbf{Positivity of the cost function}]
		\label{lem:K positive}
		\mbox{}	\\
		For any $ \de \in \R^+ $ satisfying~\eqref{eq:de}, $ 1 < \hex < \theo^{-1} $, $ k > 0 $ and $ \eps$ sufficiently small, we have
		\beq
			\label{eq:K positive}
			\Kk(t) \geq 0, \quad \mbox{ for any } t \in \annbk.
		\eeq
	\end{lem}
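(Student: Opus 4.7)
The plan is to combine endpoint estimates on $\Kk$ with a critical point analysis on the interior $(0,\btik)$, arguing by contradiction.

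First, I would verify non-negativity at the endpoints of $[0,\btik]$. Since $\Fk(0)=0$ by definition, $\Kk(0) = (1-\de)\fk^2(0) > 0$. At $t = \btik$ I would exploit the stationarity condition~\eqref{eq:FH nonlinear}, which in integrated form reads $\Fk(\teps) = 0$, to write $\Fk(\btik) = -2\int_{\btik}^{\teps}\fk^2\,b_k\,d\eta$ with $b_k(\eta) := (\eta+\alk-\tx\frac12\eps k\eta^2)/(1-\eps k\eta)$. The Gaussian decay~\eqref{eq:fal point l u b} together with the Neumann condition $\fk'(\teps)=0$ and the derivative bound~\eqref{eq:fal derivative} furnish sharp asymptotics for $\fk$ in the tail that allow one to control $\Fk(\btik)/\fk^2(\btik)$; the threshold $|\log\eps|^3$ in the definition of $\btik$ is precisely tuned so that the residual negative part is absorbed by the factor $1-\de$, yielding $\Kk(\btik)\geq 0$.

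Second, I would show that every interior critical point of $\Kk$ is a strict local minimum. If $\Kk'(t_*)=0$ for some $t_*\in(0,\btik)$ then $(1-\de)\fk'(t_*)/\fk(t_*) = -b_k(t_*)$. Substituting into the second derivative of $\Kk$ and using the variational equation~\eqref{eq:var eq fal} together with $V_k = b_k^2$ and the elementary identity $b_k'(t)-\eps k b_k(t)/(1-\eps k t)=1$, a direct computation gives
\[
 \Kk''(t_*) = 2\fk^2(t_*)\lf[\,1 - \tx\frac{1-\de}{\hex}\lf(1-\fk^2(t_*)\ri) - \tx\frac{\de(2-\de)}{1-\de}\,b_k^2(t_*)\,\ri].
\]
In the regime $1<\hex<\theo^{-1}$, with $\de$ satisfying~\eqref{eq:de} and $|b_k(t)| = O(|\log\eps|)$ throughout $[0,\btik]$, the last term is $O(|\log\eps|^{-2})$ while the first two sum to at least $(\hex-1)/\hex - \de$, which is positive and bounded away from zero. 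Hence $\Kk''(t_*)>0$, and in particular $\Kk$ can have at most one critical point in $(0,\btik)$ since any two strict local minima would have to be separated by a local maximum.

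Third, I would conclude by ruling out that the putative unique interior critical point $t_*$ gives a negative value of $\Kk$. This is the main obstacle, and my plan is to combine the critical point identity $(1-\de)\fk'(t_*)/\fk(t_*) = -b_k(t_*)$ with the WKB-type asymptotics $\fk'/\fk = -\sqrt{V_k - 1/\hex} + \OO(1/b_k)$ available in the decaying tail region where $V_k$ is large, which can be derived rigorously from~\eqref{eq:var eq fal} and the bounds~\eqref{eq:fal derivative}--\eqref{eq:fal point l u b}. The two relations together force $b_k^2(t_*)\,\de(2-\de)/(1-\de)^2 = -1/\hex + \OO(1/b_k^2(t_*))$, whose left-hand side is non-negative while the right-hand side is bounded above by a negative constant for $\eps$ small, giving a contradiction. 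Thus no critical point can lie in the tail, and in the complementary bulk region (where $\fk$ is bounded below by a positive constant independent of $\eps$ by~\eqref{eq:fal point l u b}) one has $|\Fk(t_*)|\leq C$ while $(1-\de)\fk^2(t_*)$ stays bounded below away from zero uniformly in $\eps$; combined with the sharp boundary identity $\fk^2(0)=2(1-\hex\alk^2)+\OO(\eps)$ obtained from the Pohozaev-type relation got by multiplying~\eqref{eq:var eq fal} by $\fk'$ and integrating, this enforces $\Kk(t_*)\geq 0$ under the assumption $1<\hex<\theo^{-1}$.
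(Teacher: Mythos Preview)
The paper does not prove this lemma here; it is quoted from \cite[Proposition~3.5]{CR2}. Your Step~2 is correct and is a key computation: at any interior critical point $t_*$ one obtains, via~\eqref{eq:var eq fal} and the identity $b_k'(t)-\eps k\, b_k(t)/(1-\eps k t)=1$,
\[
\Kk''(t_*) = 2\fk^2(t_*)\Big[1-\tfrac{1-\de}{\hex}\big(1-\fk^2(t_*)\big)-\tfrac{\de(2-\de)}{1-\de}\,b_k^2(t_*)\Big] \geq 2\fk^2(t_*)\Big[\tfrac{\hex-1}{\hex}-\OO(|\log\eps|^{-2})\Big]>0,
\]
so every interior critical point of $\Kk$ on $(0,\teps)$ is a strict local minimum and there is at most one. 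Since $\Kk(0)>0$ and $\Kk(\teps)=(1-\de)\fk^2(\teps)>0$, this already reduces the lemma to showing $\Kk(t_*)\geq0$ at that single point (and renders your endpoint analysis at $\btik$ unnecessary).

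The genuine gap is your Step~3: neither branch establishes $\Kk(t_*)\geq0$. In the tail, the contradiction you seek requires the WKB remainder in $\fk'/\fk=-\sqrt{V_k-1/\hex}+\OO(1/b_k)$ to be $o(1/b_k)$, because the discrepancy between $b_k/(1-\de)$ and $\sqrt{b_k^2-1/\hex}=b_k-1/(2\hex b_k)+\OO(b_k^{-3})$ is itself of order $1/b_k$; an $\OO(1/b_k)$ remainder absorbs it and no contradiction results (and the next WKB correction $-V_k'/(4V_k)$ is again of order $1/b_k$, so going one order further does not help for free). In the bulk, knowing only $|\Fk(t_*)|\leq C$ and $\fk^2(t_*)\geq c>0$ does not yield $(1-\de)\fk^2(t_*)+\Fk(t_*)\geq0$: you would need the quantitative inequality $|\Fk(t_*)|\leq(1-\de)\fk^2(t_*)$, which is exactly the statement at stake, and the Pohozaev relation you cite constrains $\fk(0)$, not the balance at $t_*$. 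What is missing is a pointwise mechanism tying $\fk^2+\Fk$ to non-negative quantities at the critical point; this is what the argument in \cite{CR2} supplies, and where the hypothesis $\hex>1$ is used in an essential way beyond Step~2.
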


\end{document}